\newcommand{\so}[1]{{\text{SO}\negmedspace\left(#1\right)}}
\newcommand{\sfun}{\mu_{\mathrm{s}}}
\newcommand{\pset}{\mathcal{P}}
\newcommand{\emax}{\mathrm{E}_{\max}}
\newcommand{\rev}[1]{\textcolor{black}{#1}}
\newcommand{\sout}[1]{\ignorespaces}
\newtheorem{theorem}{Theorem}
\newtheorem{lemma}[theorem]{Lemma}
\newtheorem{corollary}[theorem]{Corollary}
\newtheorem{definition}[theorem]{Definition}
\begin{document}

\title{Centering noisy images with application to cryo-EM}
\author{Ayelet Heimowitz, Nir Sharon, and Amit Singer}
\date{}

\maketitle

\begin{abstract}
  We target the problem of estimating the center of mass of noisy 2-D images.  We assume that the noise dominates the image, and thus many standard approaches are vulnerable to estimation errors. Our approach uses a surrogate function to the geometric median, which is a robust estimator of the center of mass. We mathematically analyze cases in which the geometric median fails to provide a reasonable estimate of the center of mass, and prove that our surrogate function leads to a successful estimate.
  
  One particular application for our method is to improve 3-D reconstruction in single-particle cryo-electron microscopy (cryo-EM). We show how to apply our approach for a better translational alignment of macromolecules picked from experimental data. In this way, we facilitate the succeeding steps of reconstruction and streamline the entire cryo-EM pipeline, saving valuable computational time and supporting resolution enhancement.
\end{abstract}

\section{Introduction}

The center of mass, also known as the centroid for objects with uniform densities, is the point around which all the mass of a system (or object)  is concentrated. Formally, the center of mass is defined as the arithmetic mean of all the points in the system (object) weighted by their local densities. Alternatively, the center of mass can be defined as the point for which the sum of squared distances from all other points in the system (object) is minimized. The correct identification of this point is crucial to many applications for several reasons.

One reason for the importance of the center of mass is that we can consider the sum of external forces that act on a system (object) as working on an object of the same mass located at the center of mass. Therefore, this point is essential when describing the motion of a system (object), see, e.g.,~\cite{feynman2011feynman}. In astrophysics, for example, the correct identification of the center of mass of a cloud of stars (orbiting each other) allows to represent the motion caused by external forces (this is true for binary stars, galaxies, etc.), see, e.g.,~\cite{feynman2018feynman, thomas2006comparison}.  

Another reason for the importance of the center of mass is its role as a reference point. Specifically, the center of mass can be used to facilitate the reconstruction of a 3-D structure from multiple tomographic projections. This property follows from the Fourier slice theorem, which states that the Fourier transform of a 2-D tomographic projection of a 3-D object is a central slice in the Fourier transform of the 3-D object. A consequence of this theorem is that the 3-D object can be reconstructed from its tomographic projections by fitting together the Fourier transforms of many 2-D projection images~\cite{van1987commonlines}. However, since any translation in image space leads to a modulation in Fourier space, all the tomographic projections must be aligned. This alignment can be done via the center of mass as, under the assumption of linearity in the projection and image detector stages, the center of mass of an object is projected to the center of mass of each of its tomographic projections (this assumption holds in many cases, e.g., under the weak-phase object approximation in our case study of cryo-electron microscopy~\cite{frank2006threeB}). Such alignment is especially important as many areas of science and engineering involve tomographic projection (mostly through imaging techniques), including archeology, medical imaging, structural biology, geophysics, plasma physics, materials science, astrophysics, and more. 

When the available data is noisy, identification of the center of mass becomes challenging. This, however, is precisely the type of observations available in many applications in astrophysics~\cite{roopashree2010centroid, thomas2006comparison} and biology~\cite{frank2006threeB}. To address the difficulties in estimating the center of mass in the presence of high levels of noise, several approaches have been introduced. These include template matching~\cite{sigworth2004noise}, thresholding and weighted center of mass~\cite{roopashree2010centroid}. In this paper, we follow a different approach, wherein the median of mass (also known as the geometric median) serves as a robust estimator of the center of mass~\cite{fletcher2008gm}.  Unfortunately, while the geometric median is more robust to noise than the center of mass, its computation is affected by \rev{noise with non-zero mean and by} near-by systems (objects). \rev{We therefore present a new tool for estimating the center of mass of systems (objects) in noisy images. Our suggested tool is closely related to the geometric median, while retaining robustness to non-zero mean noise and to near-by systems (objects). }

\begin{figure}[t]
	\begin{center}
		\begin{subfigure}[t]{0.23\textwidth}
			\centering
			\includegraphics[width=\textwidth]{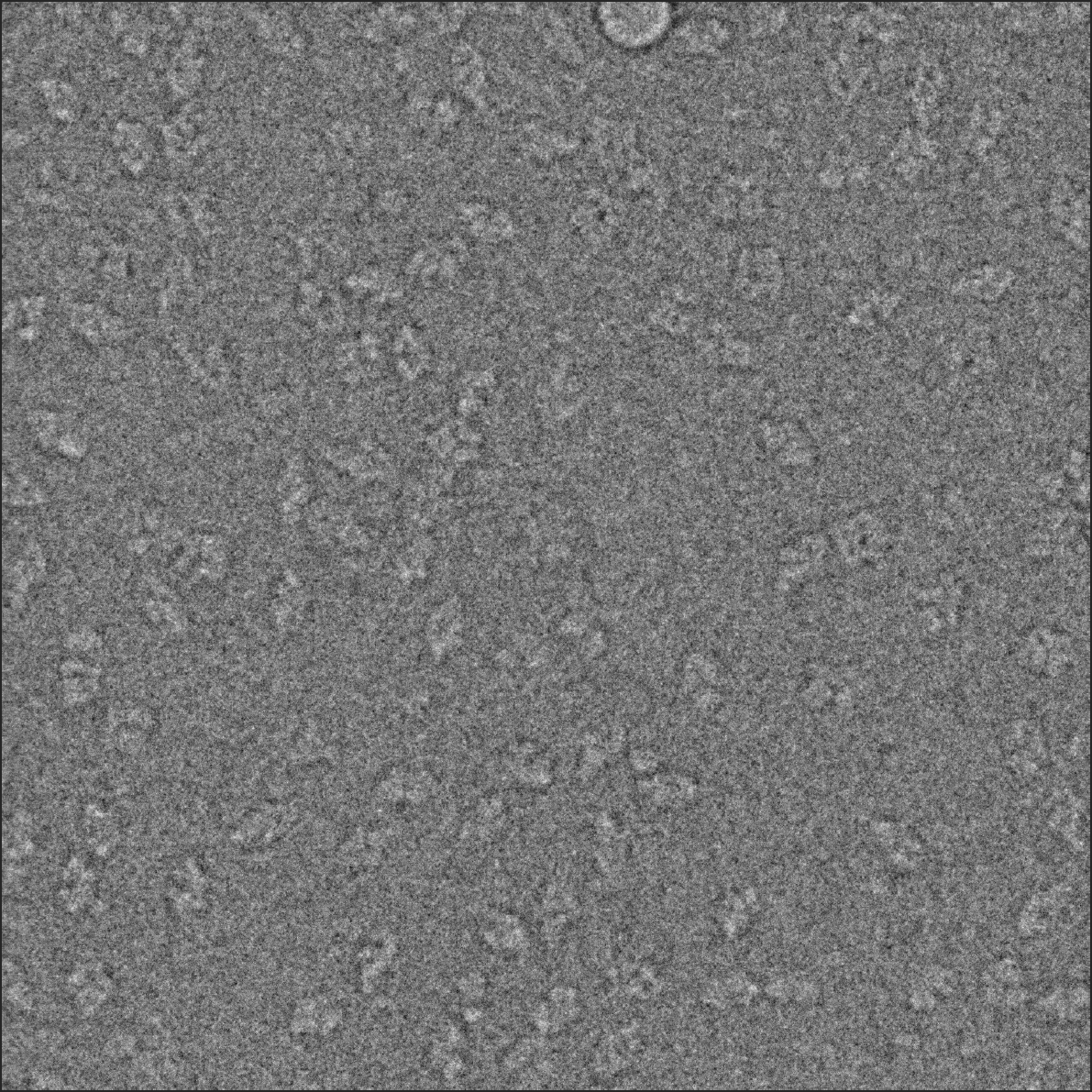}
			\caption{}
			\label{fig:micrograph1}
		\end{subfigure}
		\begin{subfigure}[t]{0.23\textwidth}
			\centering
			\includegraphics[width=\textwidth]{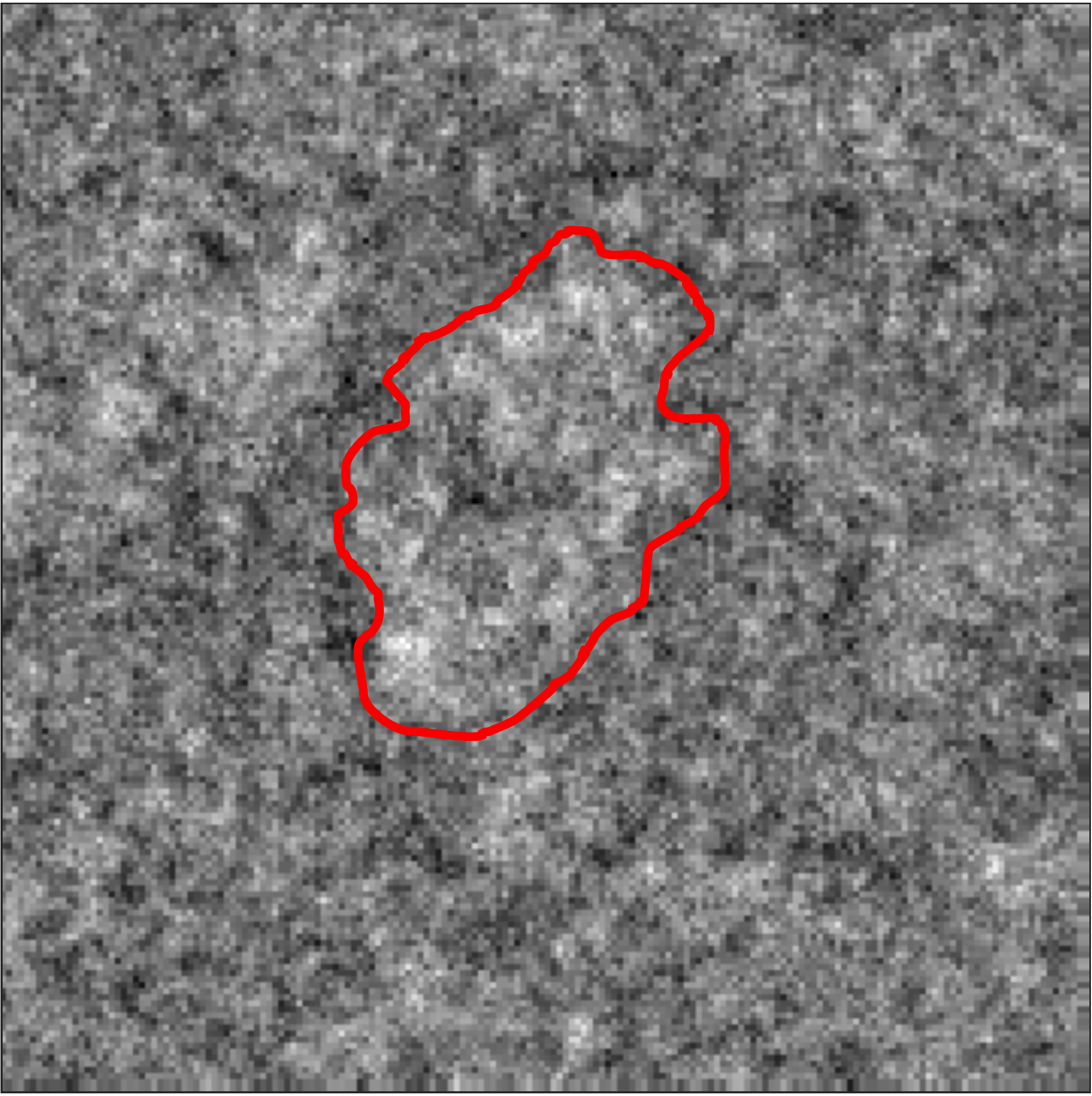}
			\caption{}
			\label{fig:zoom1}
		\end{subfigure}
		\begin{subfigure}[t]{0.23\textwidth}
			\centering
			\includegraphics[width=\textwidth]{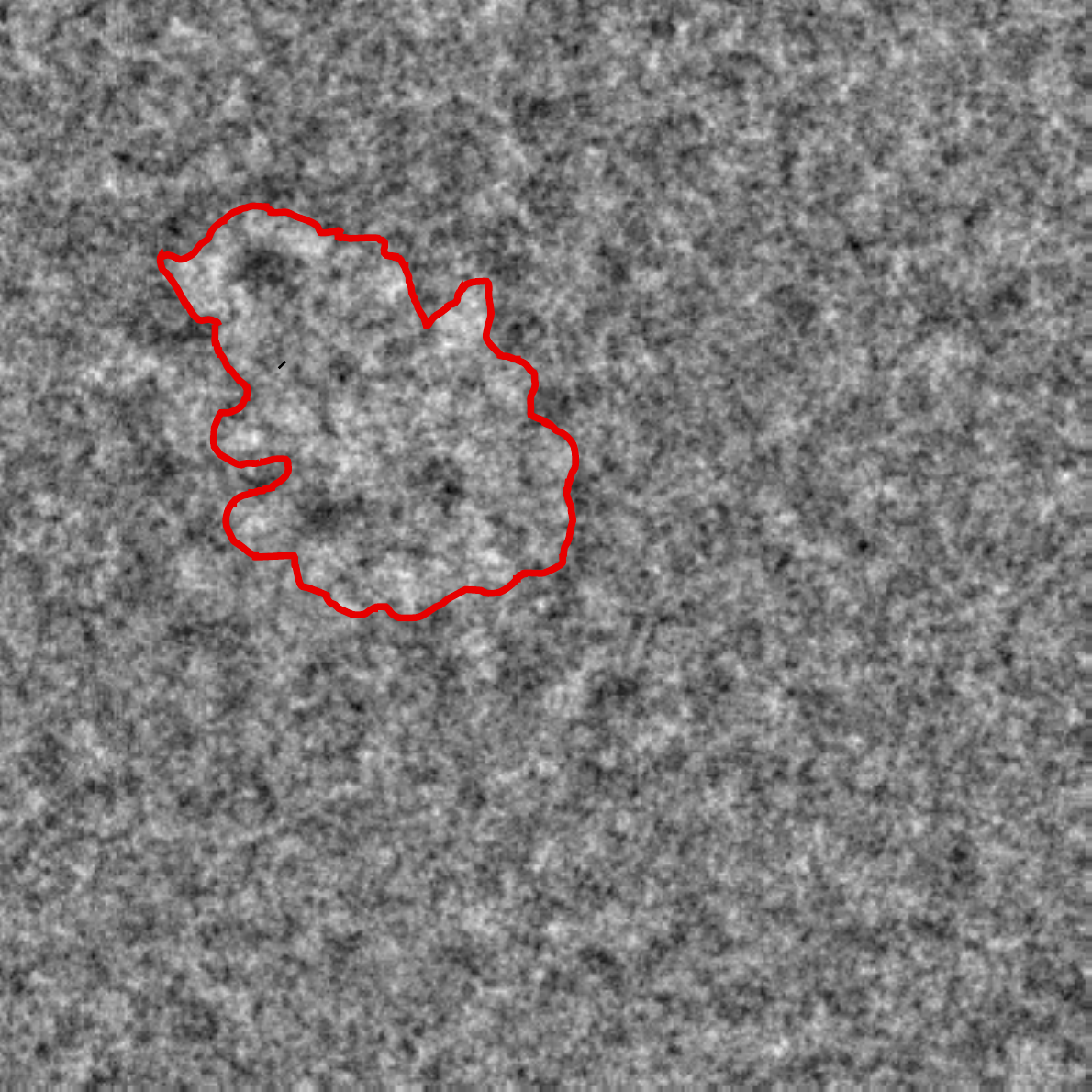}
			\caption{}
			\label{fig:zoom2}
		\end{subfigure}\\
		\begin{subfigure}[t]{0.23\textwidth}
			\centering
			\includegraphics[width=\textwidth]{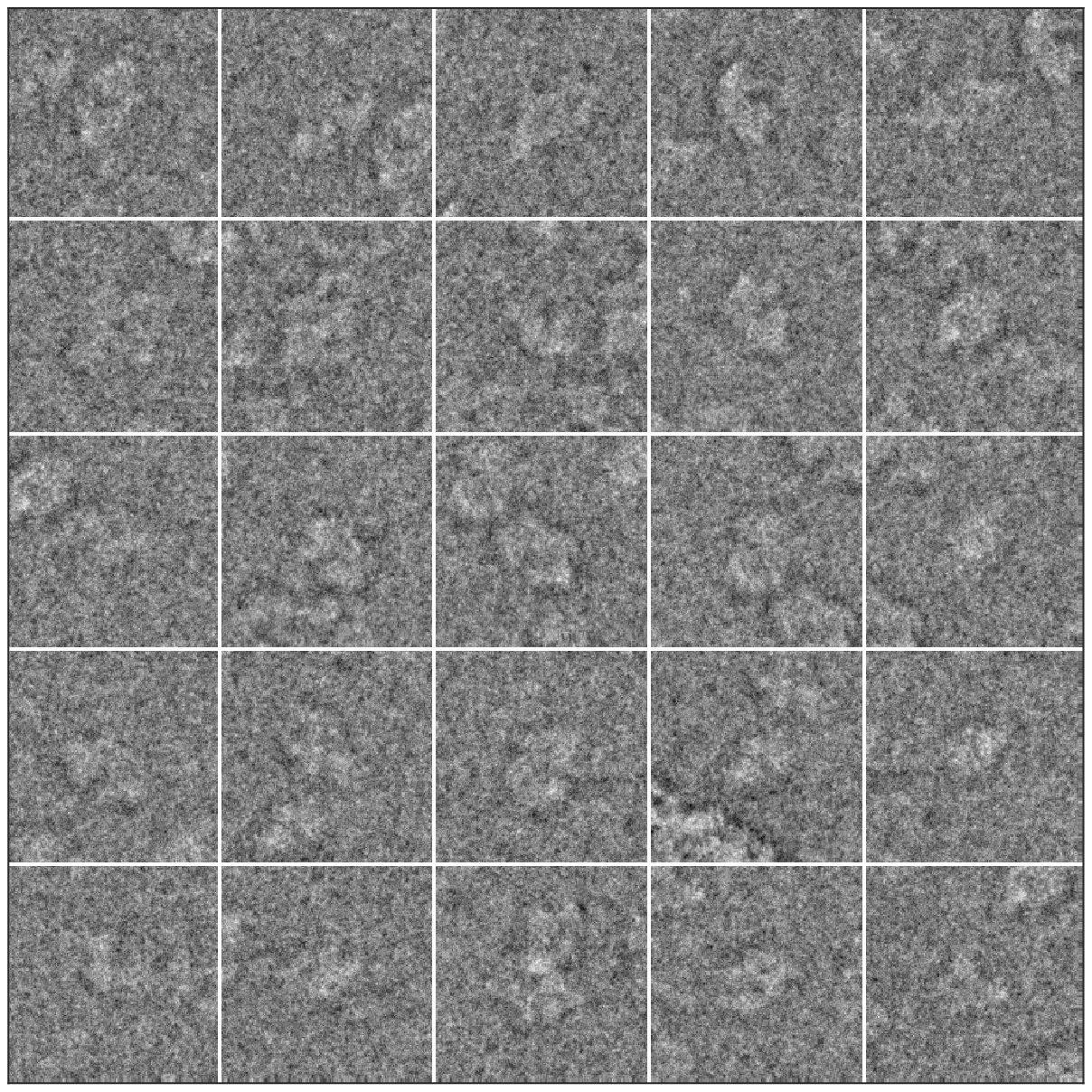}
			\caption{}
			\label{fig:picked1}
		\end{subfigure}
		\begin{subfigure}[t]{0.23\textwidth}
			\centering
			\includegraphics[width=\textwidth]{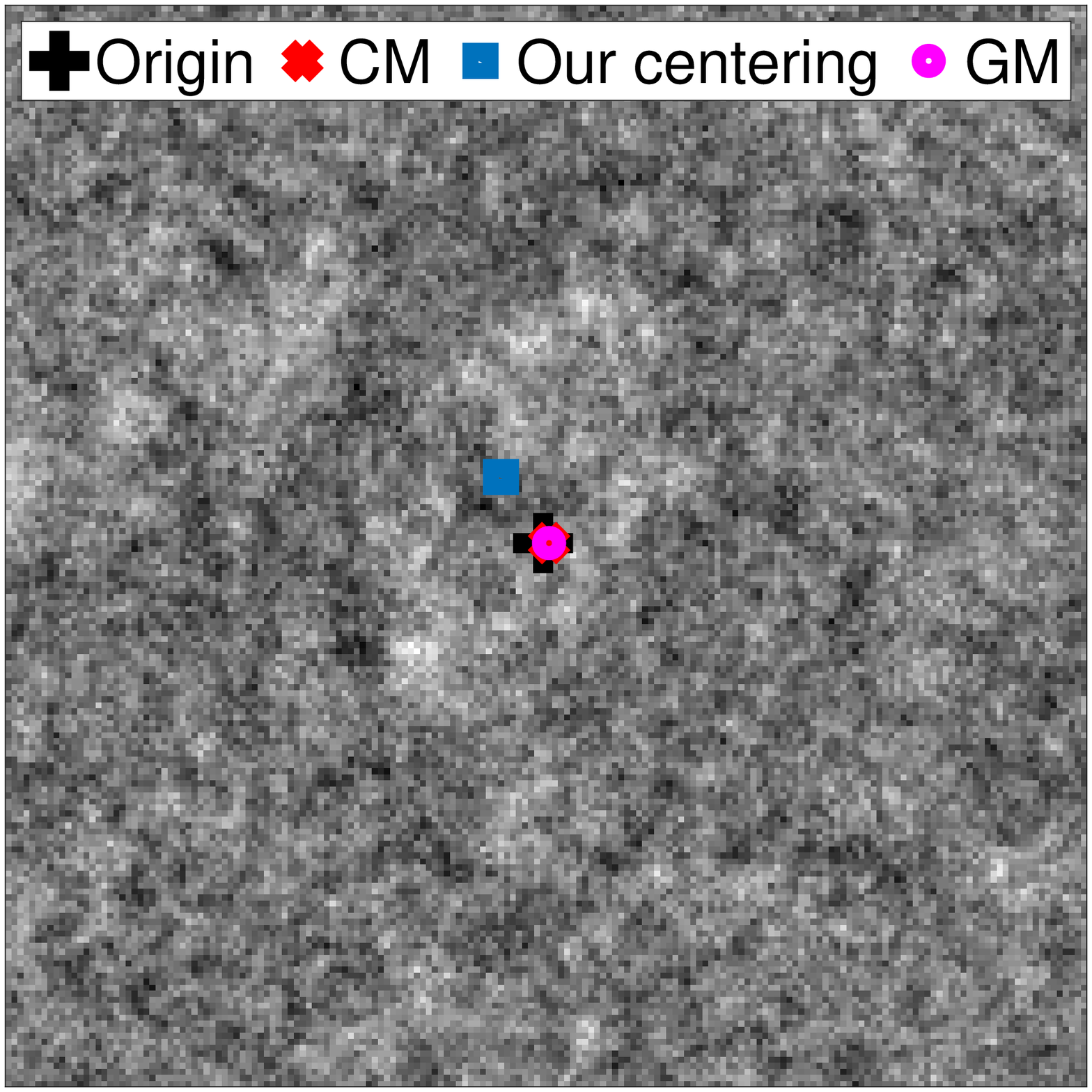}
			\caption{}
			\label{fig:center1}
		\end{subfigure}
		\begin{subfigure}[t]{0.23\textwidth}
			\centering
			\includegraphics[width=\textwidth]{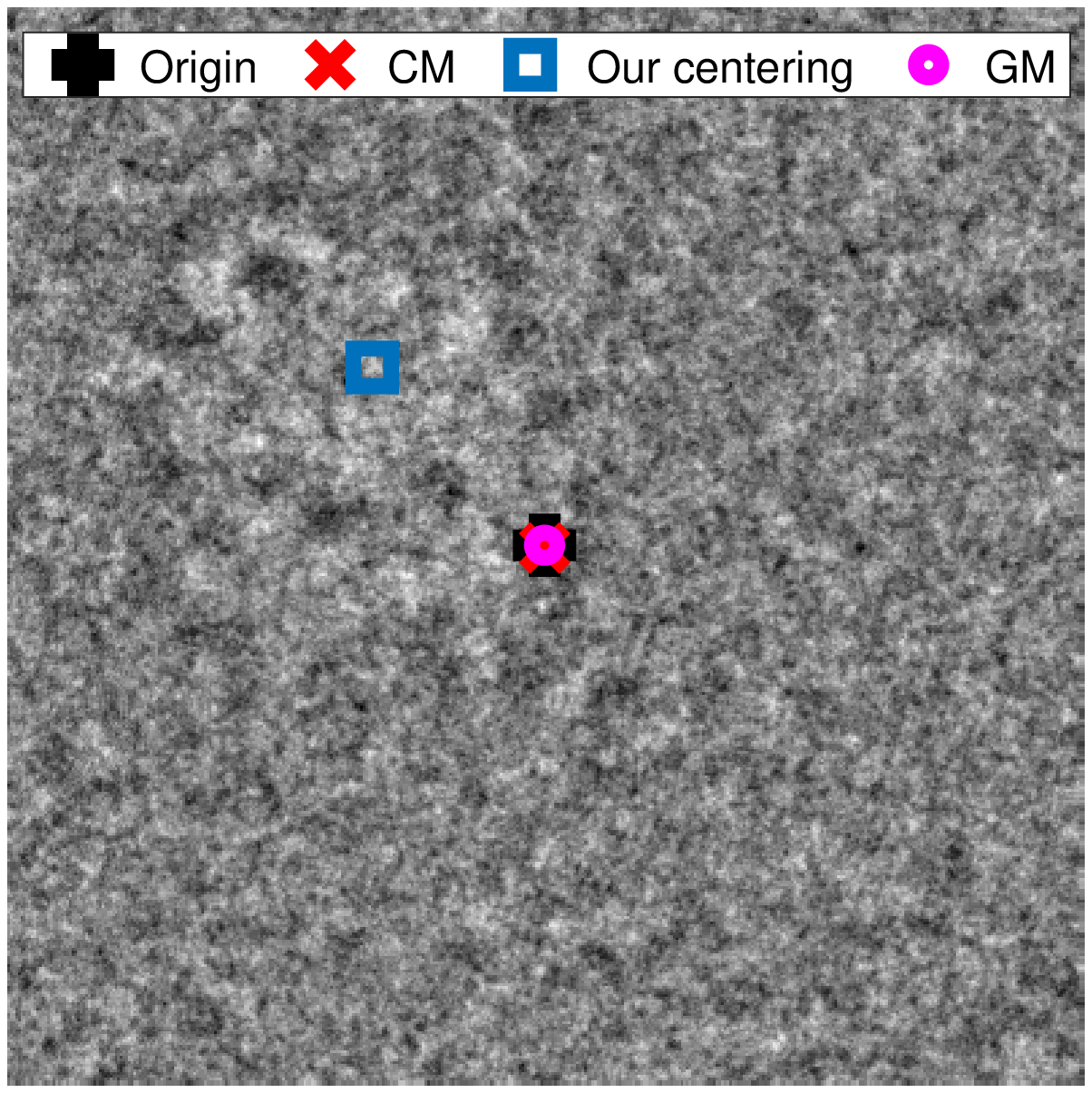}
			\caption{}
			\label{fig:center2}
		\end{subfigure}
		\caption{Cryo-EM experimental image of $\beta$-galactosidase taken from the EMPIAR-10017 dataset~\cite{iudin2016empiar, scheres2015relion} and 80S ribosome taken from the EMPIAR-10028~\cite{wong2014empiar} dataset.  (\subref{fig:micrograph1}) A section of an experimental image of $\beta$-galactosidase. There are many projections located close together.  
			(\subref{fig:zoom1}) - (\subref{fig:zoom2})  Zoom-in of projection images of $\beta$-galactosidase and ribosome, correspondingly. For the reader's convenience, we demarcated the projection of interest. 
			(\subref{fig:picked1}) Regions containing projections of $\beta$-galactosidase. The size of each region is $180 \times 180$ pixels. We note that, as the particles are dispersed with high density throughout the experimental image, we will often need to identify the center of a projection given a window that contains multiple projections. This will cause a direct computation of the geometric median to fail.
			(\subref{fig:center1})-(\subref{fig:center2}) Approximated centers of panel \subref{fig:zoom1} and \subref{fig:zoom2}.  The center of mass and geometric median coincide with the origin \rev{(which is the pixel located at the intersection of the central row and central column)} of the projection image (an explanation of this appears in Section~\ref{sec:cm_gm}), while our method provides a better approximation of the center of mass of the projection. }
		\label{fig:experimental_intro}
	\end{center}
\end{figure}

\rev{As such}, we present in this paper the following contributions. First, we introduce a surrogate function to the center of mass. We explore the mathematical properties of this function and, specifically, its connection with the geometric median, which is a robust approximation of the center of mass. Furthermore, we show that in cases where the geometric median is affected by near-by objects or extreme noise, our surrogate function retains its reliability. Another contribution is the idea of angular denoising, wherein uncorrelated noise, found along each ring of a constant radius in the image, is averaged out. This denoising is self-contained in each image, as we do not use any data external to the image. Due to the angular denoising, our surrogate function has higher robustness to noise and offers a powerful tool to analyze noisy images. 

\rev{As a case study, we use} single particle  cryo-electron microscopy (cryo-EM). The goal of cryo-EM is to resolve the 3-D structure of macromolecules at near atomic resolutions (typically under $0.3-0.4$ nm)~\cite{nogales2015development}. In this method, many instances of a specimen are embedded into vitreous ice and imaged in an electron microscope.  As a result, each experimental image contains many noisy projections located in close proximity, and the center of mass of each projection is of interest. An example of such experimental images is presented in Figures~\ref{fig:micrograph1} and~\ref{fig:picked1}. 
A demonstration of the advantages of our surrogate function over direct computation of the center of mass and  geometric median is presented in Figures~\ref{fig:zoom1},~\ref{fig:zoom2},~\ref{fig:center1} and~\ref{fig:center2}.

We use images of macromolecules collected by an electron microscope to experimentally verify our centering method. In particular, we show that for a dataset of the 80S ribosome our method decreases the average shift of the estimated center of mass from the true center of mass by $35\%$. We demonstrate similar improvements in assessing the image centers on a dataset of  $\beta$-galactosidase. Furthermore, we confirm that our method facilitates the use of projections that would otherwise be unusable. We demonstrate the importance of these additional projections on a TRPV1 dataset, where the employment of our method leads to a better reconstruction. \rev{Lastly, we show that reconstruction from centered projection images will reduce the overall runtime for reconstruction}. 

This paper is organized as follows. In Section~\ref{sec:P3DA}, we establish the theoretical aspects of our alignment method. Section~\ref{sec:algorithm} describes in detail the proposed algorithm. In Section~\ref{sec:numerics}, we present numerical examples to demonstrate the performance of our algorithm in various settings, as well as the effect of our method on 3-D reconstruction in cryo-EM. Finally, we conclude the paper in Section~\ref{sec:conclusion}. 

For research reproducibility purpose, MATLAB code is publicly available as a package at the following link: \href{https://github.com/nirsharon/RACER}{https://github.com/nirsharon/RACER}. Additionally, GPU-enabled code of our method is also available at the link:
\href{https://github.com/ayeletheimowitz/noisy_image_centering}{https://github.com/ayeletheimowitz/noisy\_image\_centering}.

\section{Translational Alignment} \label{sec:P3DA}

\subsection{Center of mass and geometric median}
\label{sec:cm_gm}

In this section we present the formal definitions of the center of mass and the geometric median and show they are unsuitable for center of mass estimation in noisy experimental images.

\begin{definition}[center of mass]
	The center of mass (CM) of a finite set of points $\{ p_i \}_{i=1}^n \subset \mathbb{R}^N$ and their associated non-negative weights $\{ \omega_i \}_{i=1}^n \subset \mathbb{R}_{+}$ is defined as their weighted average,
	\begin{equation} \label{eqn:CM_as_arithmetic_mean}
	\mu =  \frac{1}{\sum_{j=1}^n \omega_j}  \sum_{i=1}^n \omega_i  p_i  .
	\end{equation}
\end{definition}

An alternative way to define the CM is \rev{as follows,}
\begin{definition}[center of mass]
	The center of mass (CM) of a finite set of points $\{ p_i \}_{i=1}^n \subset \mathbb{R}^N$ and their associated non-negative weights $\{ \omega_i \}_{i=1}^n \subset \mathbb{R}_{+}$ is defined as  the minimizer of the Fr\'echet variance,
	\begin{equation} \label{eqn:CM}
	\mu =  \arg \min_x \sum_{i=1}^n \omega_i d^2(p_i,x) .
	\end{equation}
\end{definition}

The Fr\'echet mean is greatly influenced by outliers and extreme values. We therefore suggest the use of the Fr\'echet median as a robust estimation of the CM~\cite{fletcher2008gm}. The Fr\'echet  median is also known as the median of mass or the geometric median.
\begin{definition}[geometric median] \label{def:GM}
	The geometric median (GM) of a finite set of points $\{ p_i \}_{i=1}^n$ and their associated non-negative weights $\{ \omega_i \}_{i=1}^n$ is defined using the Fr\'echet median as  
	\begin{equation} \label{eqn:MOM}
	\mu_1 =  \arg \min_{x} \sum_{i=1}^n \omega_i d(p_i,x) .
	\end{equation}
\end{definition}

In our setting, the set of points $\pset = \{ p_i \}_{i=1}^n$ is the parametric description of the image (that is, the image grid) and the weights are the intensities of the image (pixel values).  We therefore reformulate the CM of an image as
\begin{equation} \label{eqn:image_CM}
\mu =  \arg \min_{x \in \pset} \sum_{p_i \in \pset} I \left( p_i \right) d^2(p_i,x) ,
\end{equation}
where $I$ is the  image we aim to center. We further reformulate the GM as
\begin{equation} \label{eqn:moving_MOM}
\mu_1 =  \arg \min_{x \in \pset} \sum_{p_i \in \pset} I \left( p_i \right) d(p_i,x) .
\end{equation}
We note that, \rev{as $I(p_i)$ represents mass, the pixel values of the noiseless image are assumed to be non-negative. Additionally, since }
the image is discretized, the true CM may be off-grid. As we do not have any interest in sub-pixel accuracy, we ignore this option, and search for a minimizer on the pixel grid. That is, we limit $x \in \pset$ in~\eqref{eqn:image_CM} and~\eqref{eqn:moving_MOM}. In addition, unless the nonzero pixels in the image are collinear, the GM is uniquely determined~\cite{weiszfeld1937gm}. Therefore, we will consider only cases of images having a unique GM.

While the GM has higher robustness to noise than the CM, it is still susceptible to errors. One such scenario is when the object\footnote{\rev{By object we mean a single, connected (physically and conceptually) region in the image with non-zero mass.}} we wish to center is in close proximity to other entities. In cryo-EM, for example, several projections of a particle will often appear next to each other. In this case, the value of $\sum_{p_i \in \pset} I \left( p_i \right) d(p_i,x)$ can contain contributions from  nearby objects, effectively masking the minimum of \eqref{eqn:moving_MOM}.  This is exemplified in Figure~\ref{fig:geomean}.

\begin{figure}[t]
	\begin{center}
		\begin{subfigure}[t]{0.25\textwidth}
			\centering
			\includegraphics[width=\textwidth]{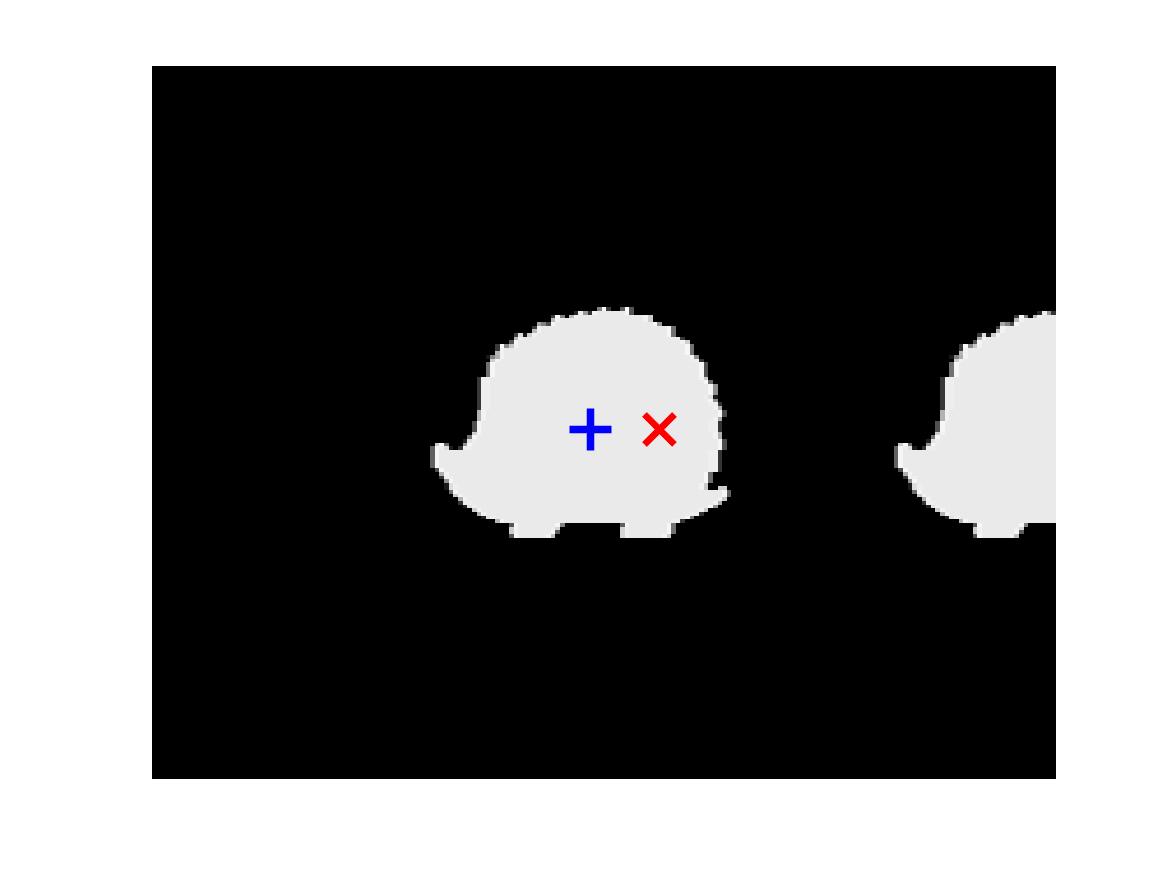}
			\caption{}
		\end{subfigure} \qquad
		\begin{subfigure}[t]{0.25\textwidth}
			\centering
			\includegraphics[width=\textwidth]{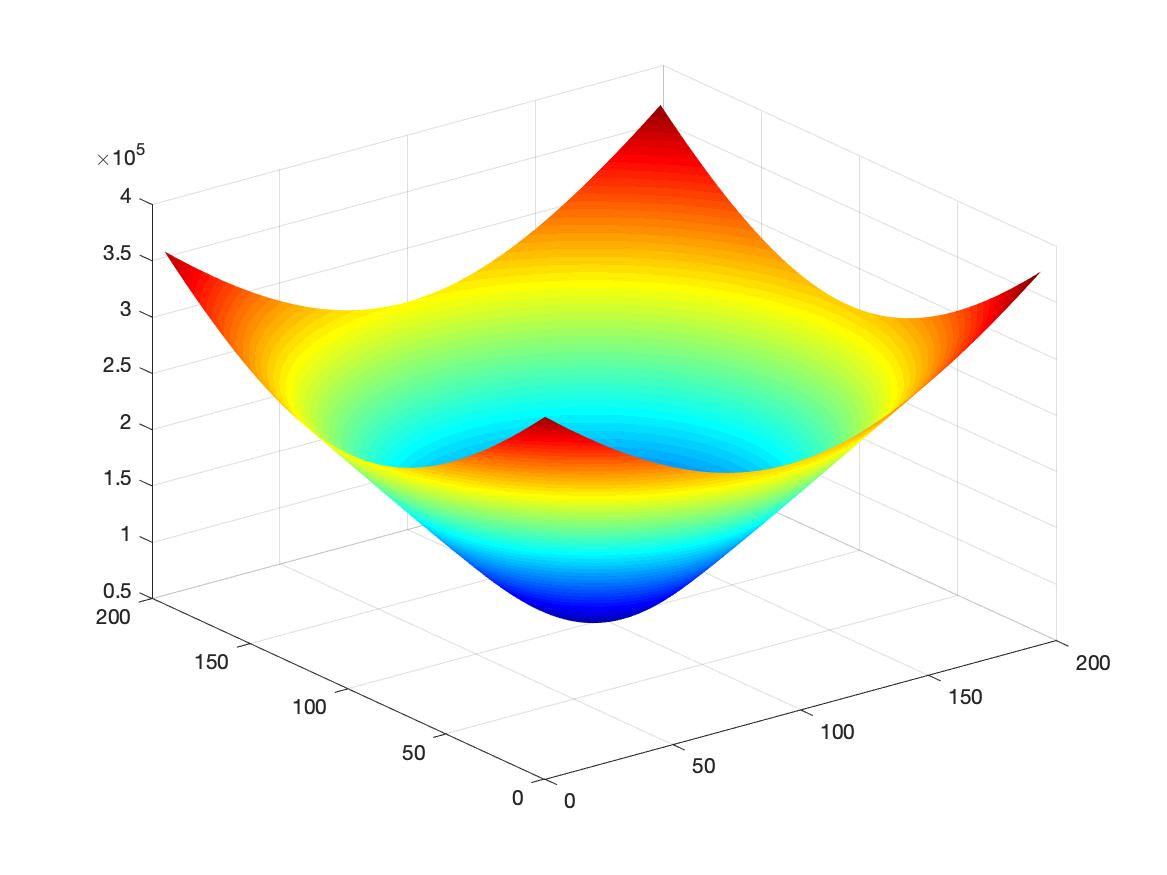}
			\caption{}
		\end{subfigure} \qquad
		\begin{subfigure}[t]{0.25\textwidth}
			\centering
			\includegraphics[width=\textwidth]{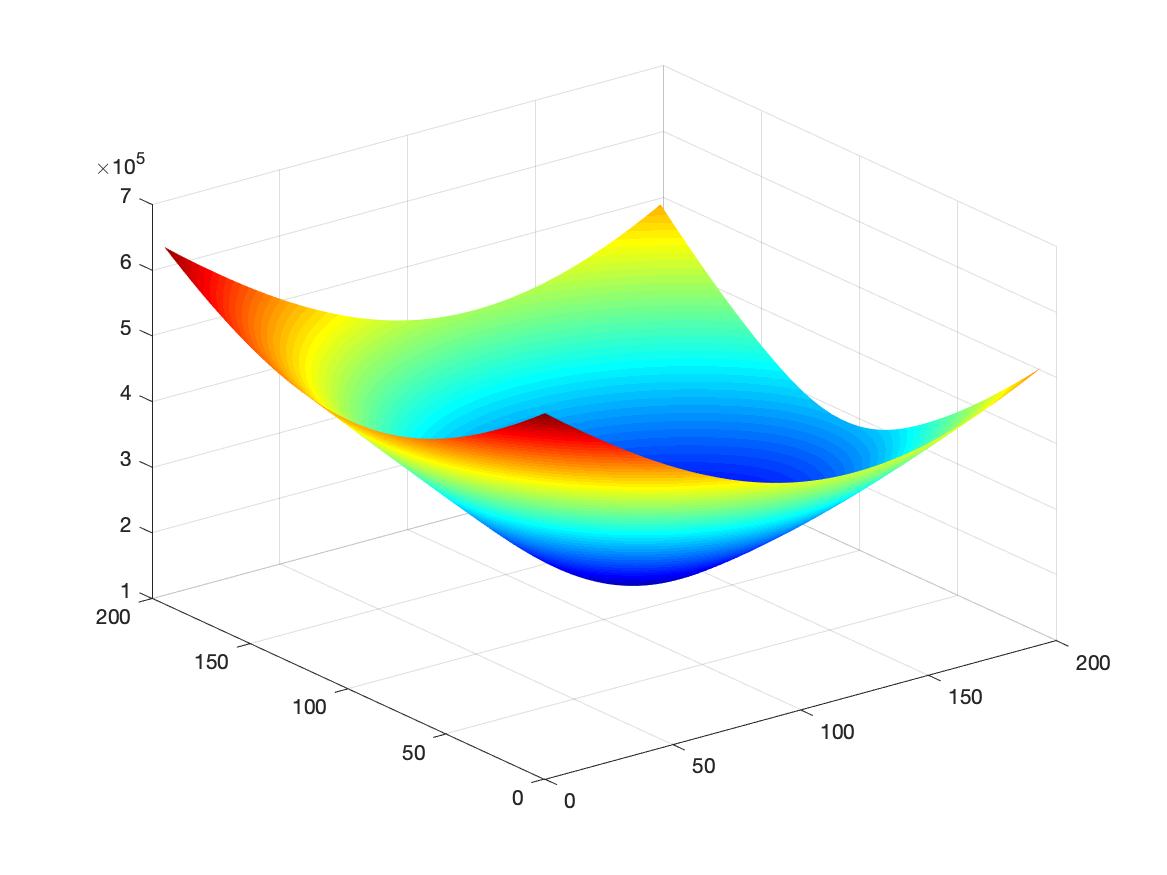}
			\caption{}
		\end{subfigure}
		\caption{Illustration of the misidentification of the GM. (a) Image of a hedgehog silhouette and a partial hedgehog silhouette. This image includes the GM of the hedgehog (blue $+$) and the GM of the total image (red x). (b) Landscape of the GM of the single hedgehog. (c) Landscape of the GM of the image in panel (a). The landscape of the GM is defined as $\sum_{p_i \in \pset} I \left( p_i \right) d(p_i,x)$.}
		\label{fig:geomean}
	\end{center}
\end{figure}

For images containing high levels of noise, as is the case with cryo-EM experimental images, the Fr\'echet mean is mostly determined by the noise. The GM, on the other hand, is more resilient to zero-mean noise. For other noise models, the GM is also highly affected by the added \rev{non-zero mean} noise. An illustration of this effect is presented in Figure~\ref{fig:noisy_sil}, where we observe gradual contamination by two types of noise. In our centering method, which we present next, we suggest a surrogate function to the CM. As seen in Figure~\ref{fig:noisy_sil}, our approach is robust to the aforementioned limitations of the CM and GM. 

\begin{figure}
	\begin{center}
		\begin{subfigure}[t]{\textwidth}
			\centering
			\includegraphics[width=.95\textwidth]{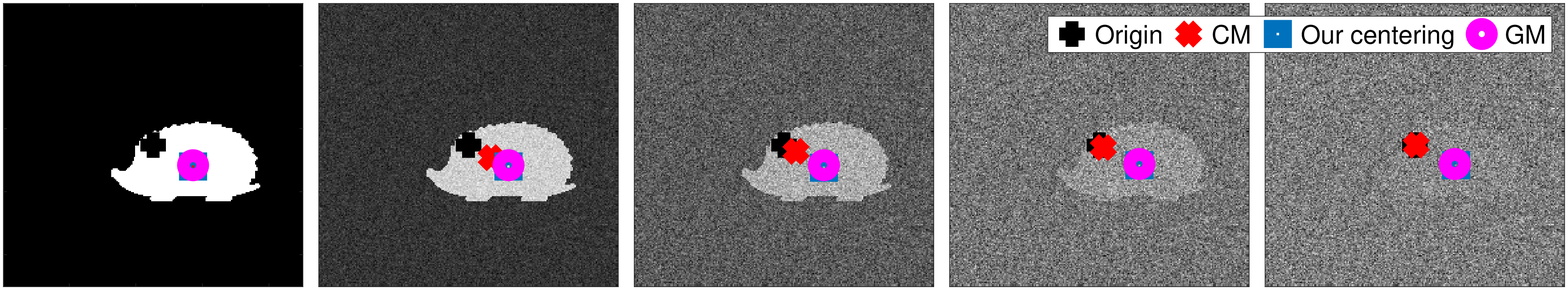}
			\caption{}
			\label{fig:noisy_gaussian_row}
		\end{subfigure} \\
		\begin{subfigure}[t]{\textwidth}
			\centering
			\includegraphics[width=.95\textwidth]{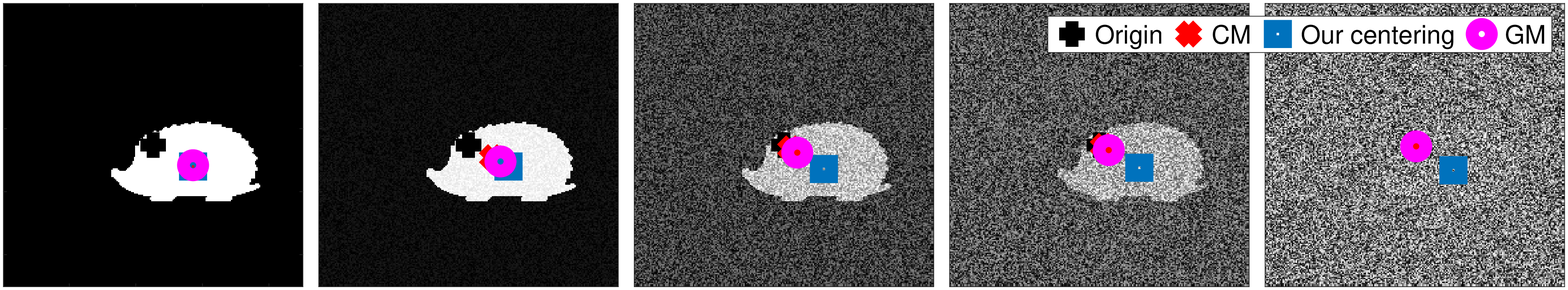}
			\caption{}
			\label{fig:noisy_uniform_row}
		\end{subfigure}
		\caption{The effect of noise over the CM, GM, and our centering. From left to right, we present a  clean shifted image followed by a series of noisy shifted silhouettes. In~(\subref{fig:noisy_gaussian_row}), we use  growing levels of Gaussian noise \rev{(which will cause pixel values to become negative)} while  in~(\subref{fig:noisy_uniform_row}) the added noise is positive uniform. As the level of noise increases, the CM gravitates towards the origin. In~(\subref{fig:noisy_gaussian_row}), the GM and our centering remain in approximately the same location along the entire series whereas in~(\subref{fig:noisy_uniform_row}) the GM joins the CM while our centering stays in the same location.}
		\label{fig:noisy_sil}
	\end{center}
\end{figure}

\subsection{Angular averaging}
\label{app:informal_description}

We once again consider the simple example of the silhouette of a hedgehog. The pixel values within the \rev{(noiseless)} silhouette are strictly positive and equal one, while the background pixel values equal zero. We will compare the case where the CM of the hedgehog coincides with the origin of image and the case where it does not, and demonstrate that the rotational average (that is the average of all possible rotations)  in each case  is distinct.

In the case where the silhouette image has its CM at the origin, any application of a rotation operator spins the silhouette around its CM while keeping the CM in place. Summing all possible rotations results in a rotationally symmetric ``spread silhouette." We illustrate this process in Figures~\ref{fig:informal1}--\ref{fig:informal4}. Specifically, in Figure~\ref{fig:informal1} we present a centered silhouette of a hedgehog. Then, we carefully rotate the hedgehog to obtain several  rotated copies of this image, as seen in Figure~\ref{fig:informal2}. Adding these rotated copies together results in the image of Figure~\ref{fig:informal3}. Finally, by averaging over all possible rotations, we  obtain Figure~\ref{fig:informal4}. 

The principal observation is that any translation of the original hedgehog image, moving its CM away from the origin, will cause the ``spread silhouette" to cover a larger area of the rotationally averaged image. To visualize this, we shift the hedgehog by $10 \%$ of the maximum possible shift, that is $20$ pixels along each axis in our $400 \times 400$ image, as seen in Figure~\ref{fig:informal5}. As before, we introduce in Figure~\ref{fig:informal6} the same finite set of rotations as in Figure~\ref{fig:informal2}, now applied over the shifted image of Figure~\ref{fig:informal5}. The sum of these rotations causes the average image to be smeared across a larger area, as seen both in the partial sum of Figure~\ref{fig:informal7} as well as the average of all possible rotations in Figure~\ref{fig:informal8}. The difference between Figure~\ref{fig:informal4} (the centered case) and Figure~\ref{fig:informal8} (the shifted case) is clearly seen.

\begin{figure}[!htbp]
	\captionsetup[subfigure]{width=.8\textwidth}
	\centering
	\begin{subfigure}[t]{0.22\textwidth}
		\centering
		\includegraphics[width=.8\textwidth]{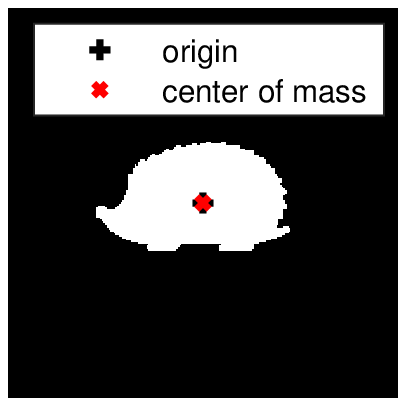}
		\caption{Centered}
		\label{fig:informal1}
	\end{subfigure} 
	\begin{subfigure}[t]{0.22\textwidth}
		\centering
		\includegraphics[width=.8\textwidth]{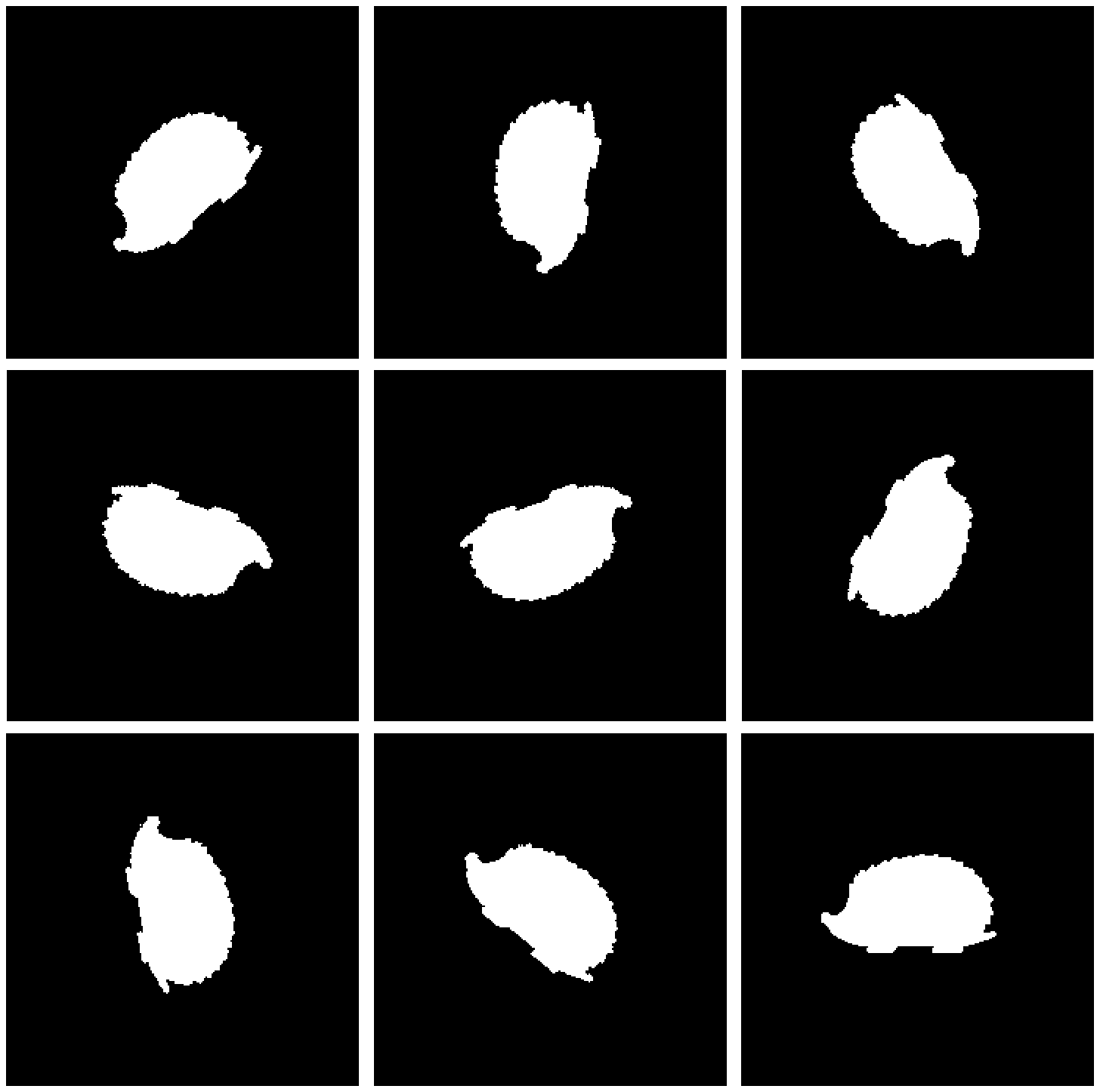}
		\caption{rotated copies}
		\label{fig:informal2}
	\end{subfigure} 
	\begin{subfigure}[t]{0.22\textwidth}
		\centering
		\includegraphics[width=.8\textwidth]{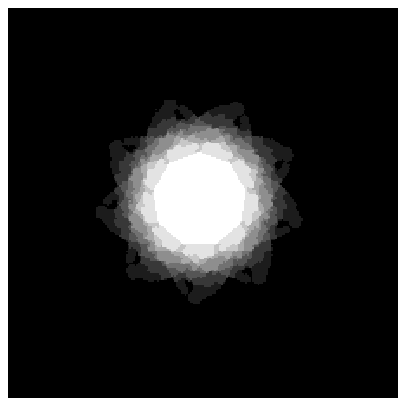}
		\caption{Sum of rotations in (\subref{fig:informal2})}
		\label{fig:informal3}
	\end{subfigure}%
	\begin{subfigure}[t]{0.22\textwidth}
		\centering
		\includegraphics[width=.8\textwidth]{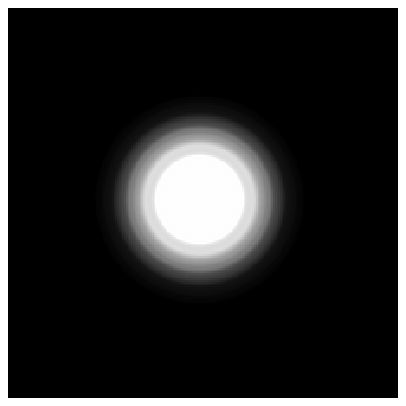}
		\caption{Average of all possible rotations}
		\label{fig:informal4}
	\end{subfigure} \\
	\begin{subfigure}[t]{0.22\textwidth}
		\centering
		\includegraphics[width=.8\textwidth]{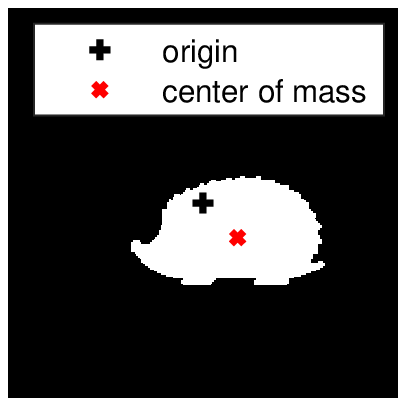}
		\caption{Shifted}
		\label{fig:informal5}
	\end{subfigure} 
	\begin{subfigure}[t]{0.22\textwidth}
		\centering
		\includegraphics[width=.8\textwidth]{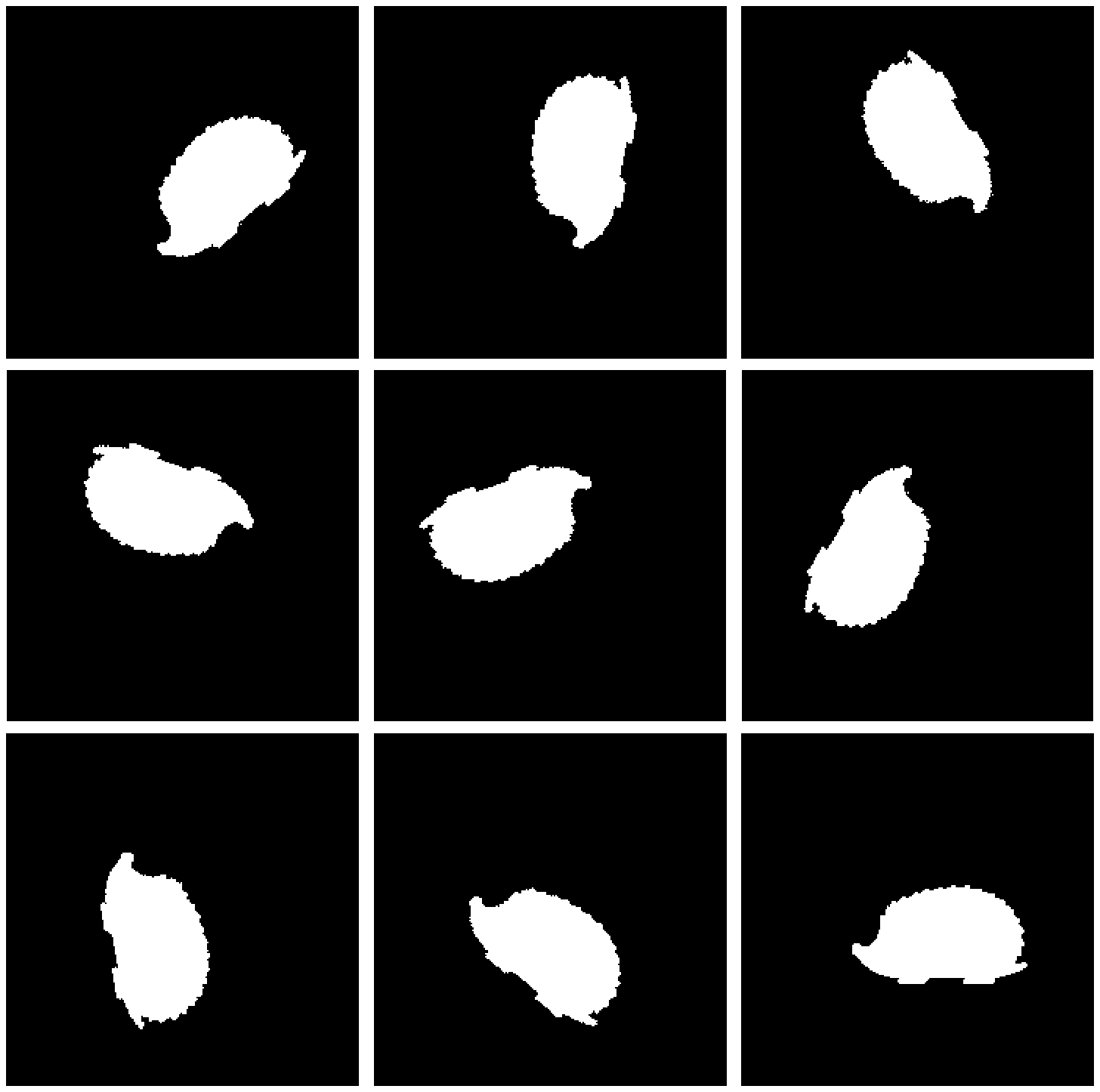}
		\caption{rotated copies}
		\label{fig:informal6}
	\end{subfigure} 
	\begin{subfigure}[t]{0.22\textwidth}
		\centering
		\includegraphics[width=.8\textwidth]{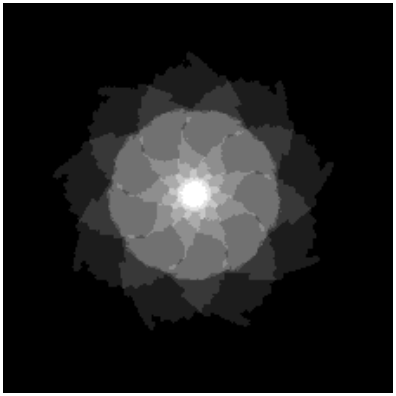}
		\caption{Sum of rotations in (\subref{fig:informal6})}
		\label{fig:informal7}
	\end{subfigure}%
	\begin{subfigure}[t]{0.22\textwidth}
		\centering
		\includegraphics[width=.8\textwidth]{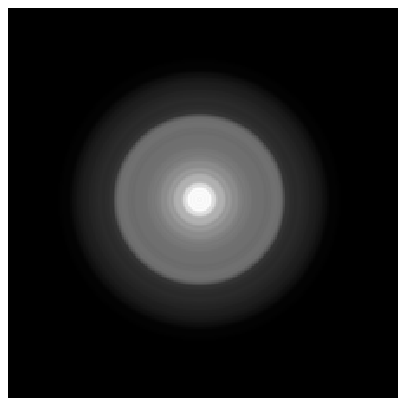}
		\caption{Average of all possible rotations}
		\label{fig:informal8}
	\end{subfigure}
	\caption{The difference in energy spread of rotational averaging of centered (top row) and uncentered (bottom row) hedgehogs. In each row, from left to right, we present the original silhouette, a few possible rotations, the sum of these rotated copies, and the average of all possible rotations of the leftmost image.}
	\label{fig:informal}
\end{figure}

One significant advantage of rotational averaging is its denoising effect. Namely, the pixels along each circle of a fixed radius are averaged, so the impact of noise decreases. Our angular denoising is therefore geared towards problems where the rotational average of an image is sufficient knowledge of its content. This allows us to contend with levels of noise that would render techniques applied directly on the noisy images useless. A demonstration of this phenomenon is given in Figure~\ref{fig:denoise}.

\begin{figure}
	\captionsetup[subfigure]{width=0.85\textwidth}
	\centering
	\begin{subfigure}[t]{0.22\textwidth}
		\centering
		\includegraphics[width=.8\textwidth]{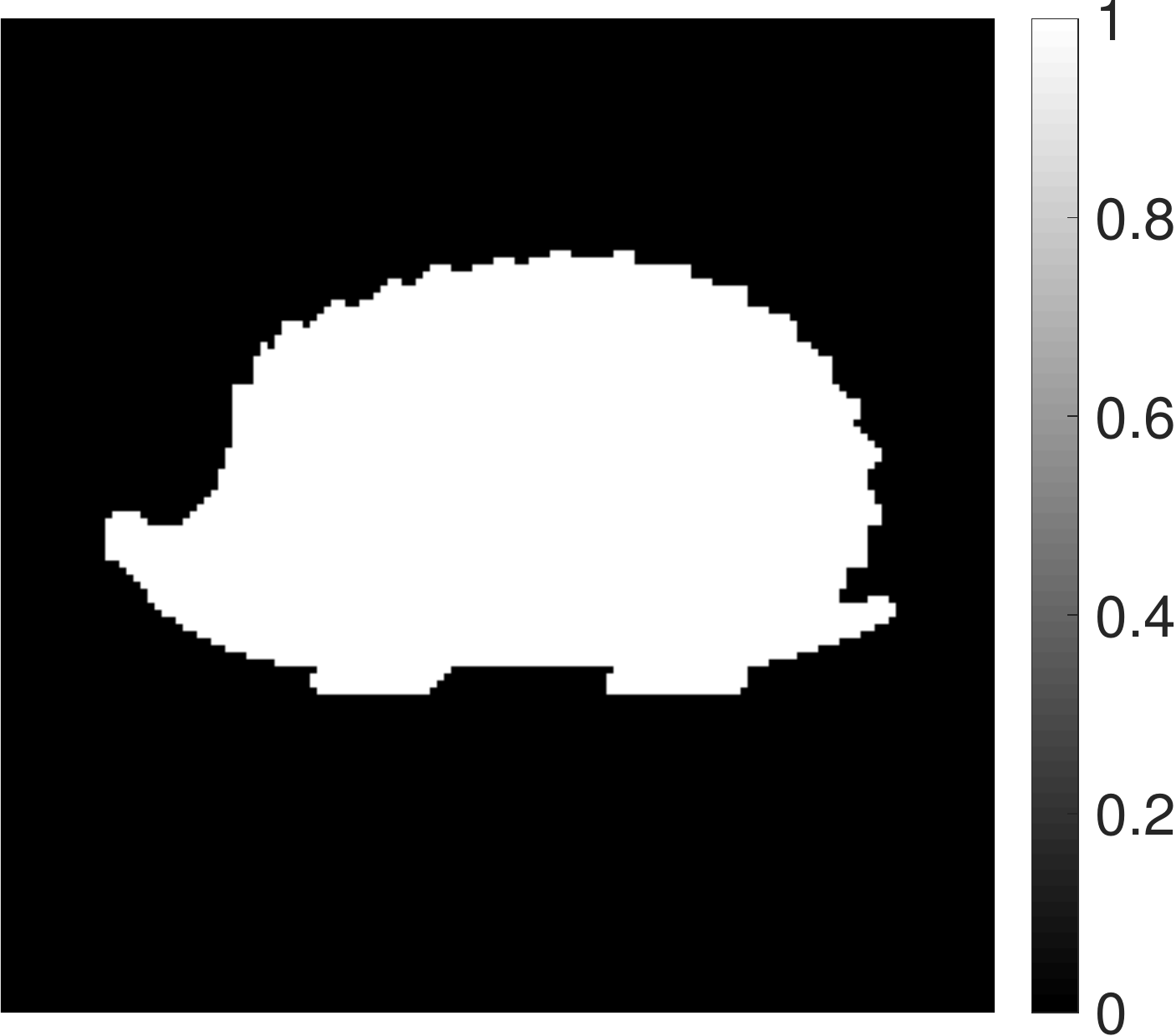}
		\caption{Clean silhouette}
		\label{fig:denoise1}
	\end{subfigure}
	\begin{subfigure}[t]{0.22\textwidth}
		\centering
		\includegraphics[width=.8\textwidth]{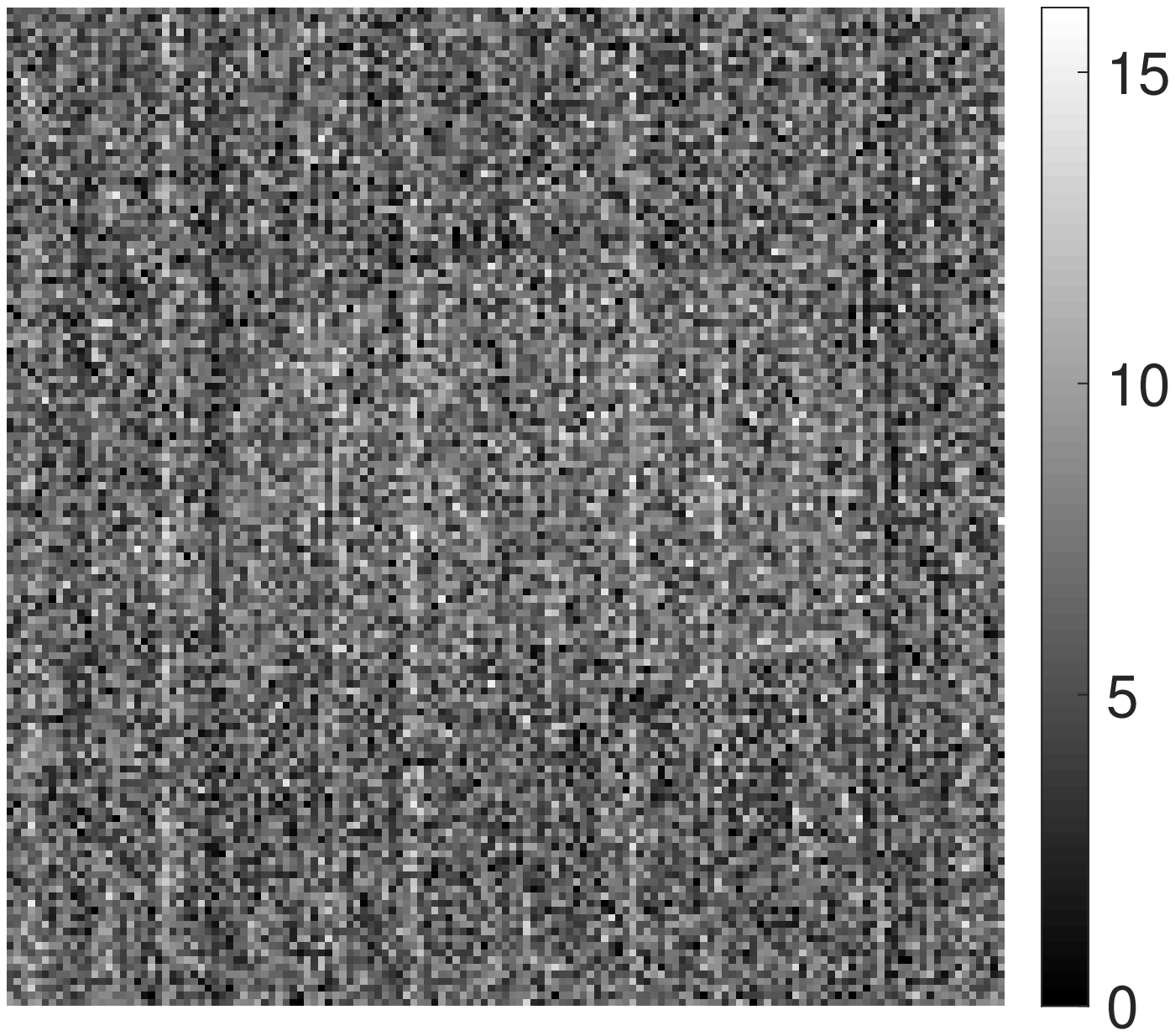}
		\captionsetup{width=.75\linewidth}
		\caption{Noisy silhouette with SNR $= 1 / 45$}
		\label{fig:denoise2}
	\end{subfigure}
	\begin{subfigure}[t]{0.22\textwidth}
		\centering
		\includegraphics[width=.7\textwidth]{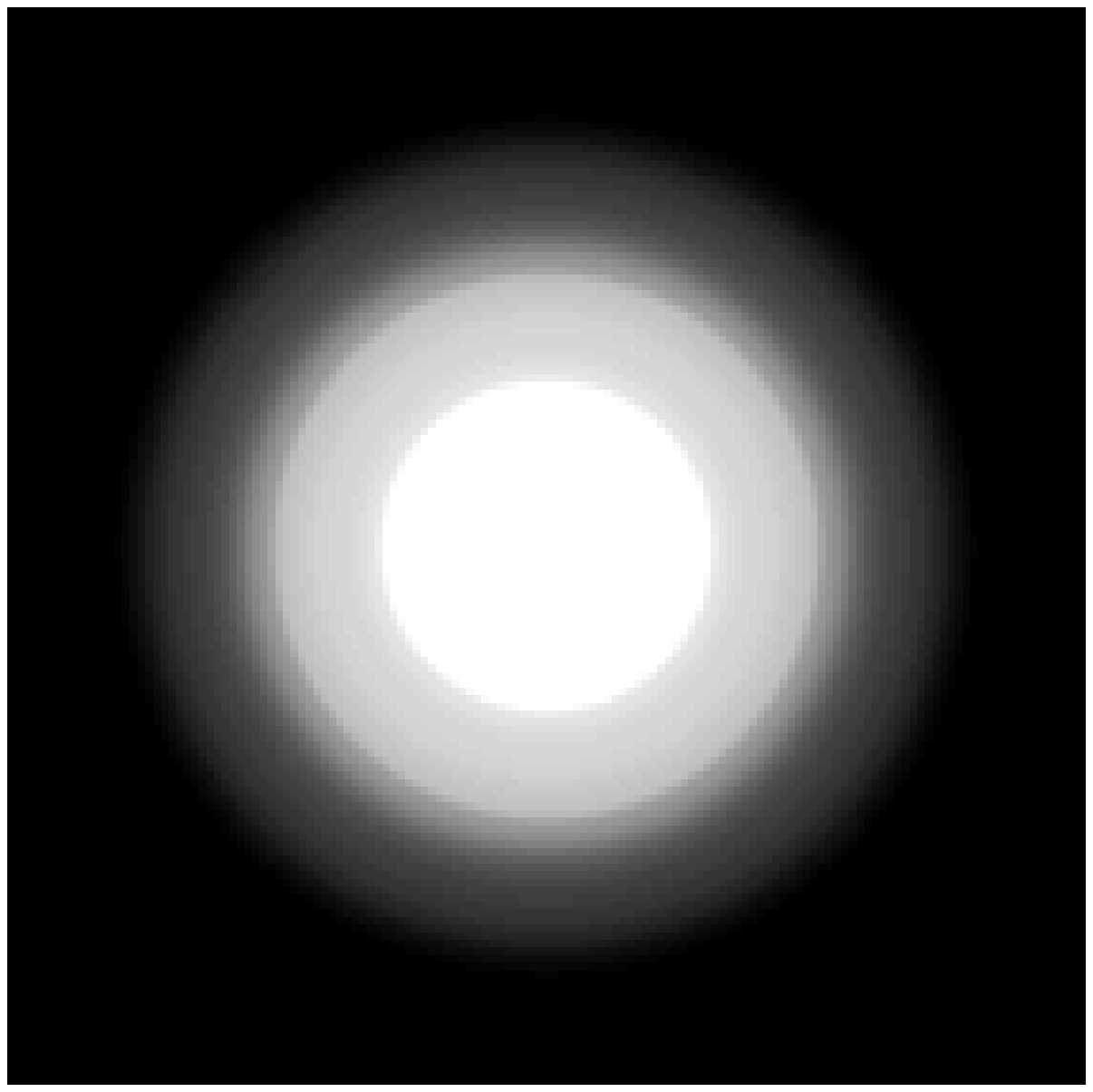}
		\caption{Rotational averaging of~(\subref{fig:denoise1}) }
		\label{fig:denoise3}
	\end{subfigure}
	\begin{subfigure}[t]{0.22\textwidth}
		\centering
		\includegraphics[width=.7\textwidth]{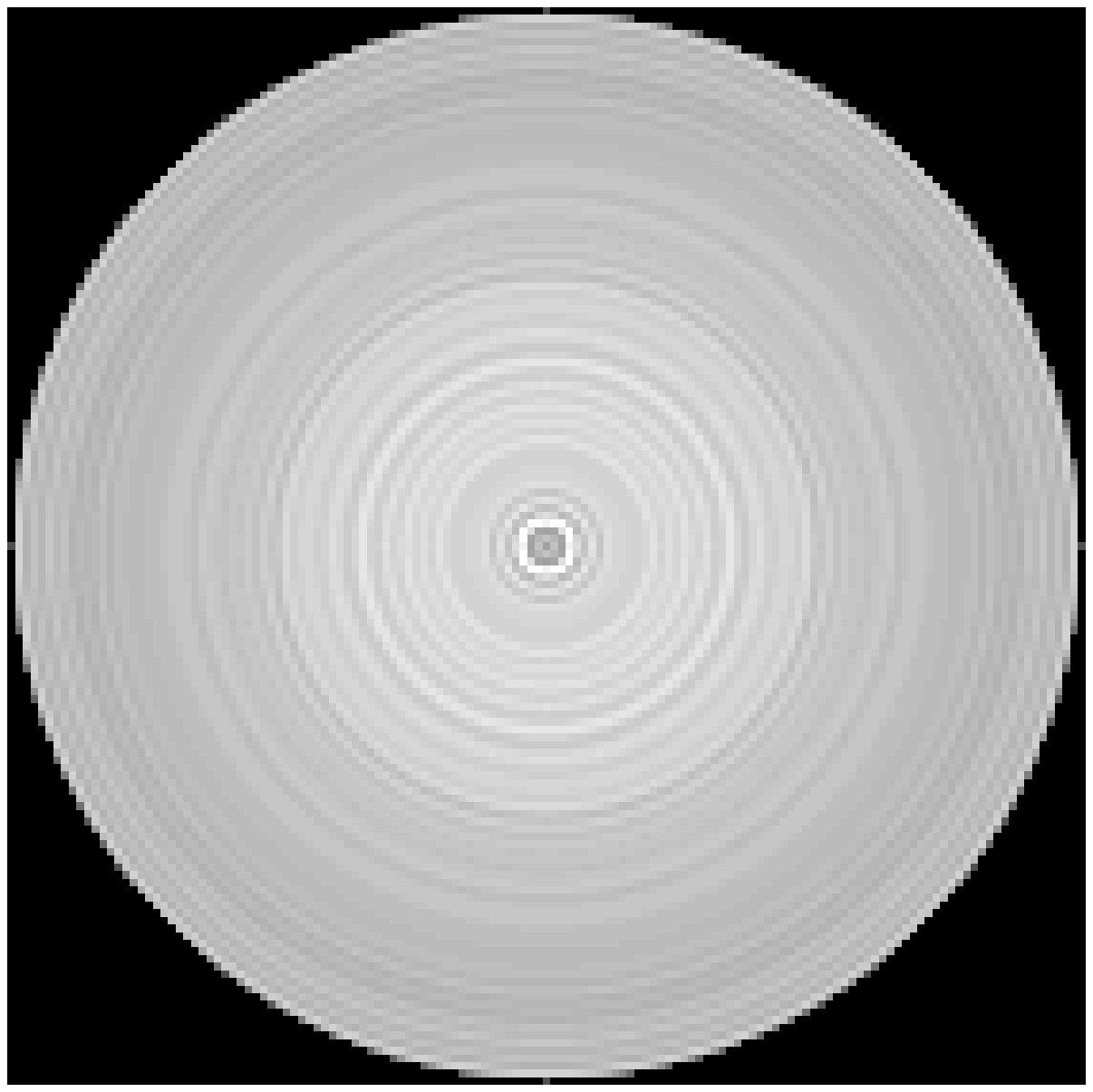}
		\captionsetup{width=.9\linewidth}
		\caption{Rotational averaging of~(\subref{fig:denoise2}). In view of~(\subref{fig:denoise3}), the SNR $= 1$}
		\label{fig:denoise4}
	\end{subfigure} \\
	\begin{subfigure}[t]{0.22\textwidth}
		\centering
		\includegraphics[width=.8\textwidth]{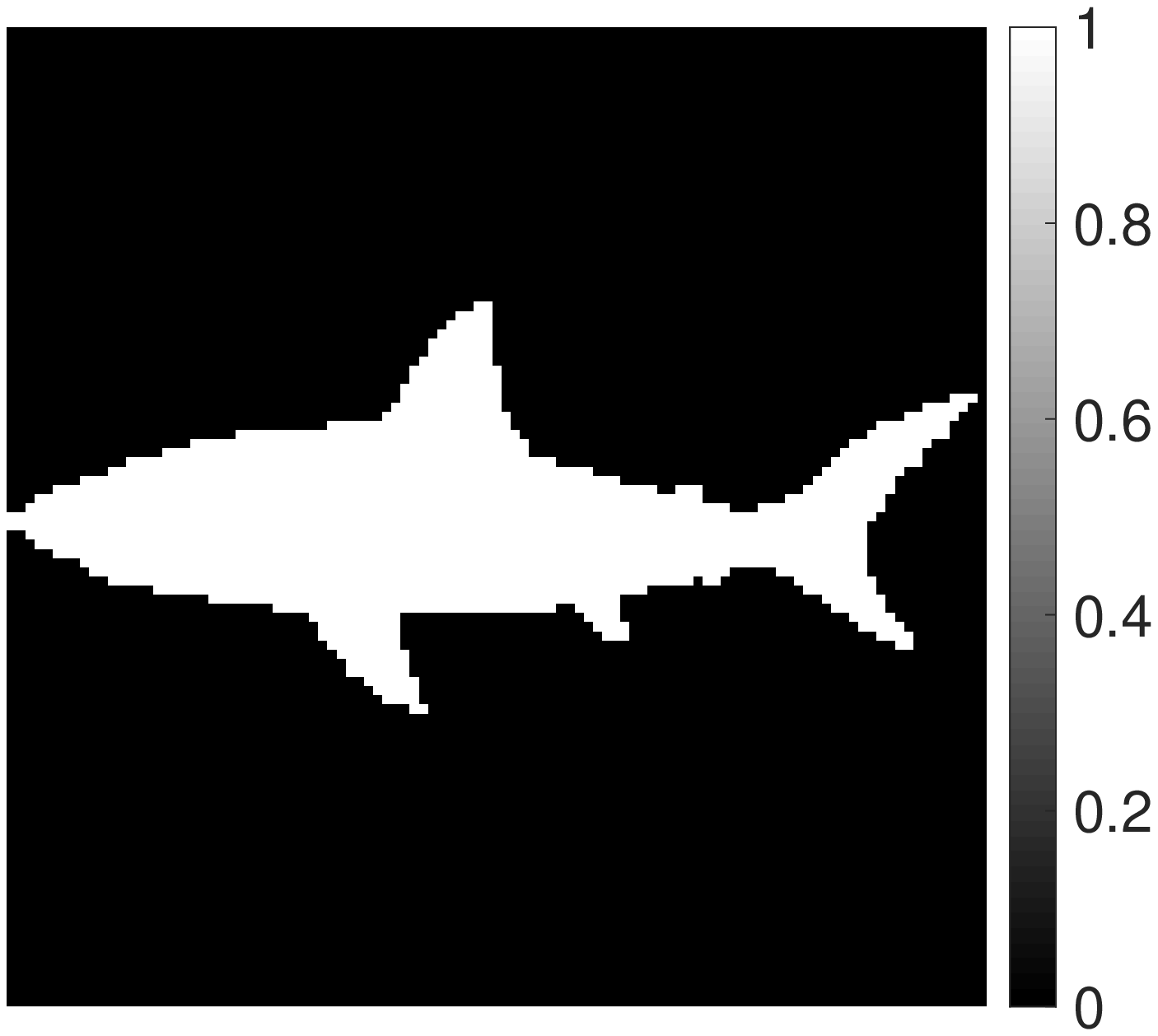}
		\caption{Clean silhouette}
		\label{fig:denoise1_shark}
	\end{subfigure}
	\begin{subfigure}[t]{0.22\textwidth}
		\centering
		\includegraphics[width=.8\textwidth]{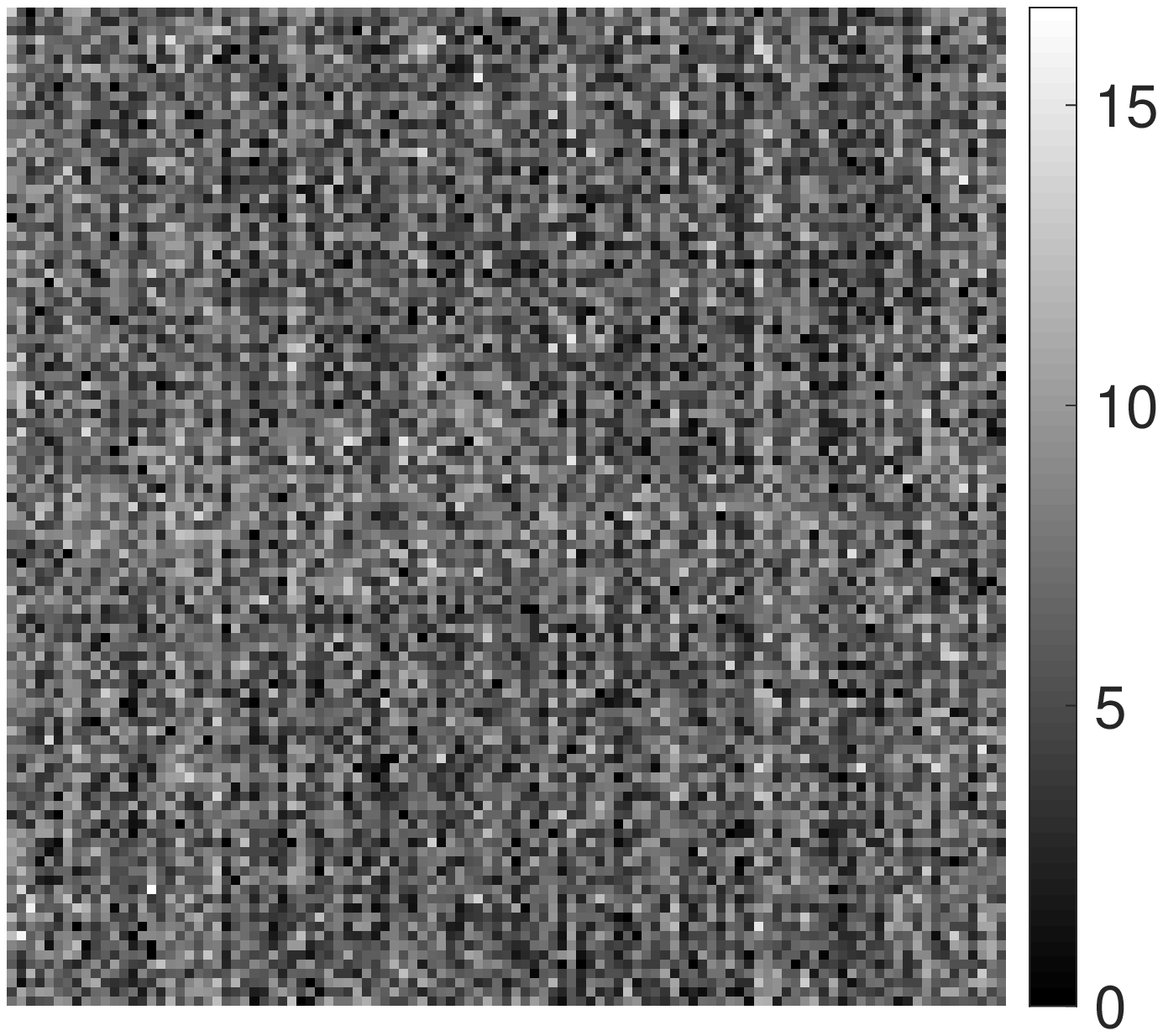}
		\captionsetup{width=.75\linewidth}
		\caption{Noisy silhouette with SNR $= 1 / 45$}
		\label{fig:denoise2_shark}
	\end{subfigure}
	\begin{subfigure}[t]{0.22\textwidth}
		\centering
		\includegraphics[width=.7\textwidth]{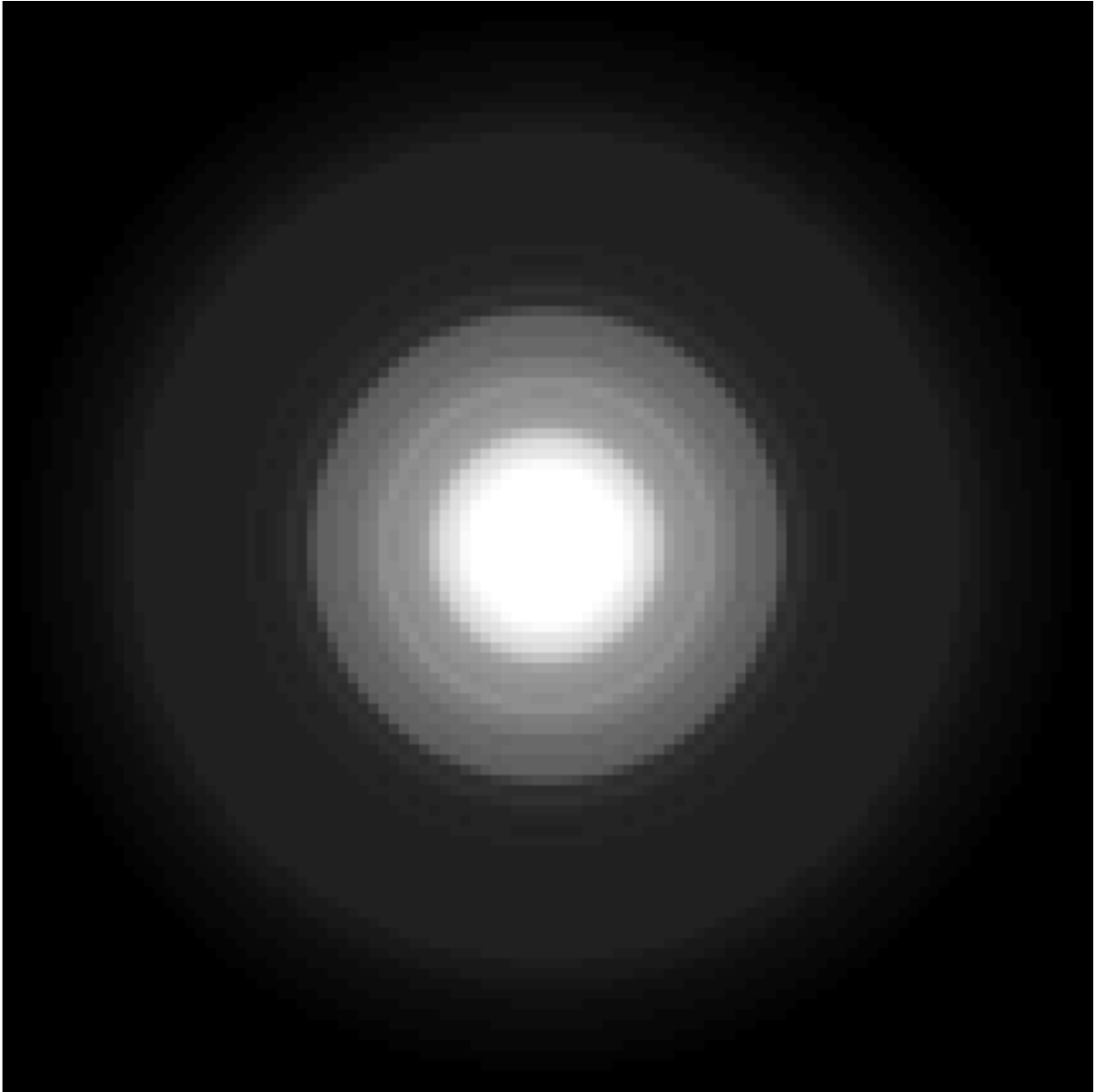}
		\caption{Rotational averaging of~(\subref{fig:denoise1_shark}) }
		\label{fig:denoise3_shark}
	\end{subfigure}
	\begin{subfigure}[t]{0.22\textwidth}
		\centering
		\includegraphics[width=.7\textwidth]{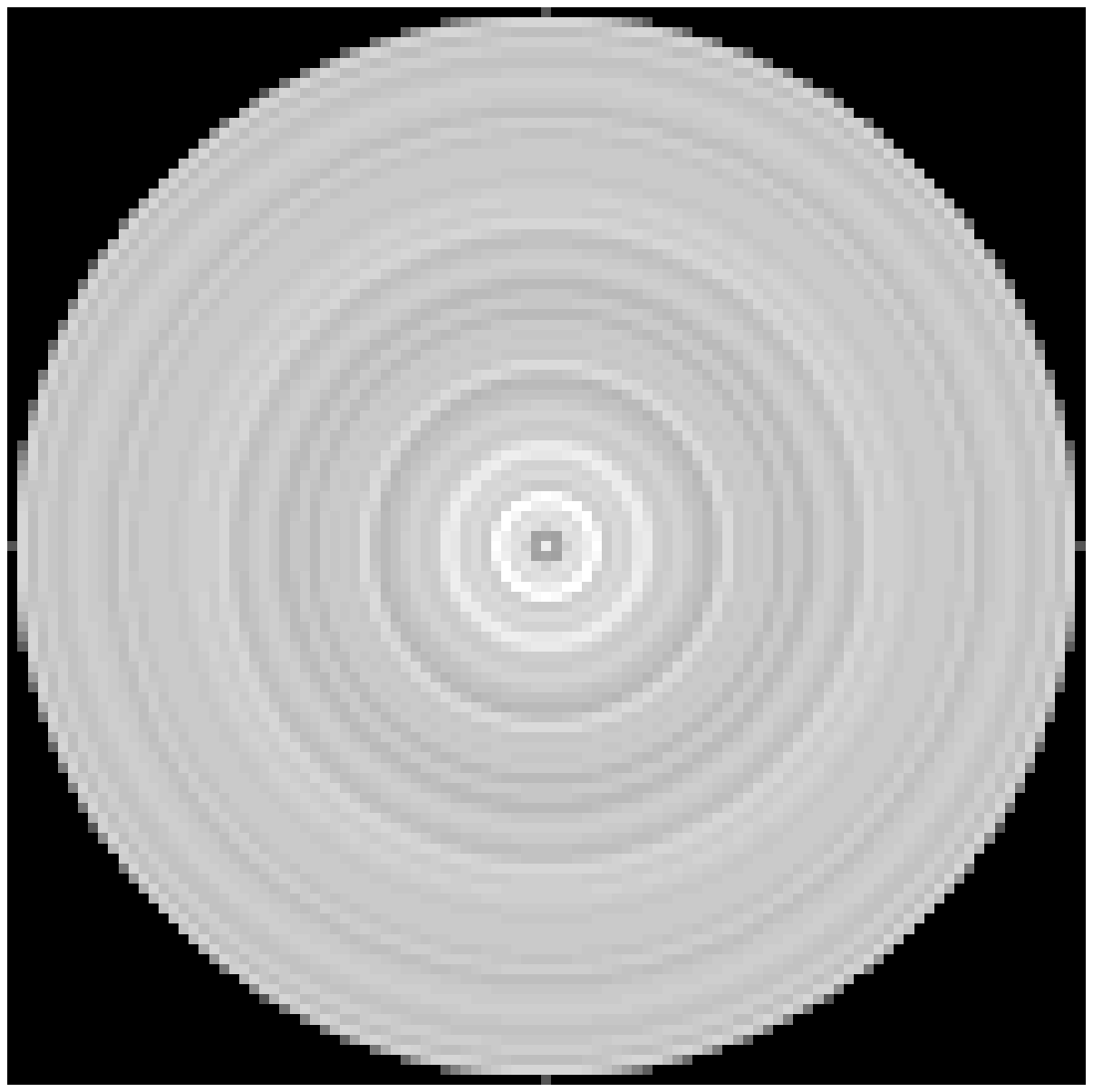}
		\captionsetup{width=.9\linewidth}
		\caption{Rotational averaging of~(\subref{fig:denoise2_shark}). In view of (\subref{fig:denoise3_shark}), the SNR $= 1 / 5$}
		\label{fig:denoise4_shark}
	\end{subfigure}
	\caption{Illustration of the SNR increase attained via rotational averaging. We compare the SNR  between images (\subref{fig:denoise1}), (\subref{fig:denoise2}), (\subref{fig:denoise1_shark}) and (\subref{fig:denoise2_shark}), to their rotational averages, (\subref{fig:denoise3}), (\subref{fig:denoise4}), (\subref{fig:denoise3_shark}) and (\subref{fig:denoise4_shark}), respectively. The noise levels have dropped from SNR of $1/45$ in(\subref{fig:denoise2}) and (\subref{fig:denoise2_shark}) to SNR of $1$ and $1/5$ in the rotational averages of the hedgehog and the shark, respectively. The difference between these two cases lies in the angular spread of the two objects. For a formal definition of SNR see~\eqref{eqn:snr}.}
	\label{fig:denoise}
\end{figure}  

The above demonstration shows how to reduce the problem of centering silhouettes to that of measuring the energy spread of rotationally averaged images. We obtain this measurement by posing a metric in which we can quantify the distance between a given rotationally averaged projection and the ideal centered image, which is a delta-image where the total pixel energy is concentrated in the origin. We expect that, as the CM of an image nears its origin, its rotational average will become closer to the delta-image.

We formally define our surrogate function next, in Section~\ref{subsubsec:surrogate}.

\subsection{Surrogate function for center of mass} \label{subsubsec:surrogate}

In Section~\ref{sec:cm_gm} we discussed the limitations of direct estimation of the CM of projection images \rev{as well as the limitations of the GM as a robust estimator of the CM}. In this section, we present our suggested surrogate function, which supplies a robust estimate of the \rev{CM}.  \rev{We note that, in the following, 
	we assume that in a \rev{noiseless} image the pixel values of the object are strictly larger than those of the background which consists of zero-valued pixels, corresponding to zero mass. This setting is typical in many applications~\cite{frank2006three, thomas2006comparison}.}  

We begin the definition of our surrogate function with some notations. First, we denote the parametric description of the image (the image grid) as the set of pixels $\pset$. We further denote the $m$th ring of pixels around $p \in \pset$ as $A_m(p)$. Formally, we define $A_m(p)$ as
\[ A_m(p) = \left\lbrace  s \in \pset \mid m-1 < d\left(p,s\right) \le m  \right\rbrace  , \quad m=1,\dots, R.  \]
For $m=0$ we set $A_0(p) = \{p\}$. We further define the disk
\[ B_R(x) = \bigcup_{m=0}^{R} A_m(x). \]
Unless stated otherwise, the metric $d$ is the standard Euclidean metric.
\begin{definition}[surrogate function for center of mass] \label{def:surrogate}
	We define the surrogate function for the CM \rev{(sCM)} of an image $I$ as the minimizer of the following sum of absolute deviations:
	\begin{equation} \label{eqn:surrogate}
	\sfun =  \arg \min_{x \in \pset} \sum_{l=0}^R  \left( { \emax - \sum_{m=0}^{l} \sum_{s \in A_m (x)} I(s) }  \right).
	\end{equation}
	Here, \rev{the image $I$ is normalized such that all pixel values are non-negative,} $R$ is an upper bound on, and strictly larger than, the radius of the object we aim to center and $\emax =  \max_{x \in \pset} \sum_{\ell=0}^R \sum_{p \in A_{\ell}(x)} I(p)$ \rev{is, in the case of a noiseless image, the total mass of the object}. 
\end{definition}

\rev{As our goal is to approximate the center of mass of an object, it follows that our estimate $\mu_s$ must rely on the entire mass of the object, and only the mass of this object. The parameter $R$ is determined such that the entire object will fit in a disk of diameter $2R+1$ around the true GM of the object, thereby guaranteeing that $\mu_s$ relies on the entire mass of the object. In Figure~\ref{fig:R} we illustrate the effect of the parameter $R$ on $\mu_s$. To this end, we use an image of a hedgehog, where the diameter of the hedgehog is $112$. When $R<57$ our estimate of $\mu_s$ has a constant change. However, when $R>57$, $\mu_s$ intersects with the GM and remains unchanged as $R$ grows. We discuss formal bounds on $R$ such that the optimum $\mu_s$ is unaffected by nearby objects in Theorem~\ref{thm:relation3}.} 

\begin{figure}
	\captionsetup[subfigure]{width=0.8\textwidth}
	\centering
	\begin{subfigure}[t]{0.4\textwidth}
		\centering
		\includegraphics[width=.65\textwidth]{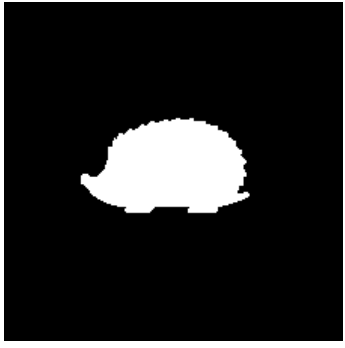}
		\label{fig:hedge_R}
	\end{subfigure}
	\begin{subfigure}[t]{0.4\textwidth}
		\centering
		\includegraphics[width=.8\textwidth]{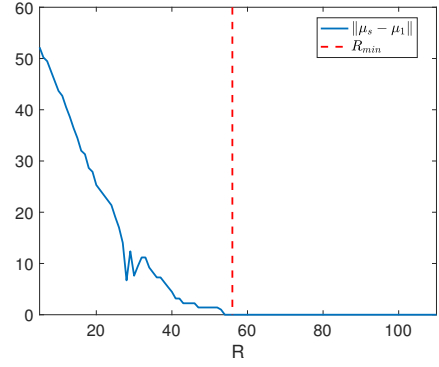}
		\label{fig:R_plot}
	\end{subfigure}
	\caption{\rev{Illustration of  the effect of $R$ on the sCM, $\mu_s$. As $R$ grows beyond $R_{min}$ (the radius of the hedgehog), $\mu_s$ remains stable and intersects with the GM $\mu_1$.}}
	\label{fig:R}
\end{figure}

We now explore the connection between the GM and \rev{the sCM} in the noiseless case.
\begin{theorem} \label{thm:relation}
	Consider a noise-free setting, where image $I$ contains a single object \rev{of non-zero mass}. Then, the GM of~\eqref{eqn:moving_MOM}  \rev{with respect to the composed metric $d(x,y) =  \left\lceil   \norm{x-y} \right\rceil$ (where $\left\lceil  \cdot  \right\rceil$ is the operation of rounding up) 
		coincides with a local minimum of the 
		landscape of \rev{the sCM},
		\begin{equation} \label{eqn:proof}
		L_I(x) = \sum_{l=0}^R \left( { \emax - \sum_{m=0}^{l} \sum_{s \in A_m (x)} I(s) } \right).
		\end{equation}}
\end{theorem}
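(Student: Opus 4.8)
The plan is to show that, once the order of the two outer summations in \eqref{eqn:proof} is interchanged, the sCM landscape $L_I$ differs from the geometric-median objective of \eqref{eqn:moving_MOM} (taken with the composed metric $d(x,y)=\lceil\norm{x-y}\rceil$) only by a term that is constant on a neighborhood of the GM. The whole point of introducing the composed metric is that it makes the rings $A_m$ coincide exactly with metric spheres: by definition $s\in A_m(x)$ iff $m-1<\norm{x-s}\le m$, i.e.\ iff $\lceil\norm{x-s}\rceil=m$, so $A_m(x)$ is precisely the set of grid points at composed distance $m$ from $x$. Writing $a_m(x)=\sum_{s\in A_m(x)}I(s)$ and $\esum_l(x)=\sum_{m=0}^{l}a_m(x)=\sum_{s\in B_l(x)}I(s)$ for the mass enclosed within composed distance $l$, the inner double sum in \eqref{eqn:proof} is exactly $\esum_l(x)$, so $L_I(x)=(R+1)\emax-\sum_{l=0}^R\esum_l(x)$.

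The key algebraic step I would carry out next is to swap the two summations. Since each $a_m$ contributes to $\esum_l$ for every $l$ with $m\le l\le R$, I get
\[ \sum_{l=0}^R \esum_l(x)=\sum_{l=0}^R\sum_{m=0}^l a_m(x)=\sum_{m=0}^R (R+1-m)\,a_m(x). \]
Splitting the factor $R+1-m$ then yields
\[ L_I(x)=(R+1)\bigl(\emax-\esum_R(x)\bigr)+\sum_{m=0}^R m\,a_m(x). \]
The final sum is the GM objective restricted to the disk $B_R(x)$, because $s\in A_m(x)$ forces $d(x,s)=m$, so
\[ \sum_{m=0}^R m\,a_m(x)=\sum_{s\in B_R(x)} I(s)\,d(x,s). \]
This is the heart of the proof: $L_I$ agrees with the (truncated) GM landscape up to the single additive term $(R+1)\bigl(\emax-\esum_R(x)\bigr)$.

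It then remains to show this additive term is locally constant at the GM. In the noiseless single-object case, $\esum_R(x)\le M$, where $M$ is the total mass of the object, with equality exactly when $B_R(x)$ contains the object's entire support; moreover $\emax=\max_x \esum_R(x)=M$. By the hypothesis on $R$ (strictly larger than the object radius, so the object fits inside $B_R$ about its GM), I would argue that $\esum_R\equiv\emax$ on a whole grid-neighborhood $U$ of $\mu_1$, and that on $U$ the truncated sum $\sum_{m=0}^R m\,a_m(x)$ equals the full GM objective of \eqref{eqn:moving_MOM} (no object mass escapes $B_R(x)$). Both facts follow from the single geometric condition $\mathrm{supp}(I)\subseteq B_R(x)$ for $x\in U$. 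Consequently $L_I$ equals the GM objective plus a constant on $U$; since $\mu_1$ is the global minimizer of that objective, it minimizes $L_I$ on $U$ and is therefore a local minimum of $L_I$.

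The hard part will be the discrete bookkeeping in this last paragraph: verifying that $\esum_R\equiv\emax$ holds not merely at $\mu_1$ but throughout a neighborhood, i.e.\ that $\mu_1$ lies in the \emph{interior} of the ``fully covered'' region. This is exactly where the strict inequality in the condition on $R$ is needed, and where care is required because on a pixel grid a ``neighborhood'' of $\mu_1$ means its adjacent pixels; one must ensure that translating $x$ by one grid step away from $\mu_1$ still keeps the object within composed distance $R$. Everything else reduces to the routine summation rearrangement above.
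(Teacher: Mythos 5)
Your proposal is correct and takes essentially the same route as the paper's proof: your summation swap $\sum_{l=0}^{R}\sum_{m=0}^{l}a_m(x)=\sum_{m=0}^{R}(R+1-m)\,a_m(x)$ is precisely the paper's rearrangement into $\sum_{s\in B_R(x)}\left(R+1-\left\lceil \norm{x-s}\right\rceil\right)I(s)$, and your decomposition of $L_I$ into a locally constant term plus the truncated GM objective is just a repackaging of the paper's comparison of $L_I(\mu_1)$ against $L_I(x)$ over the set of pixels $x$ for which $B_R(x)$ contains the whole object. The discrete interiority point you flag as the ``hard part'' is handled no more carefully in the paper, which likewise relies on $R$ strictly exceeding the object radius so that this set contains a grid neighborhood of $\mu_1$.
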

\begin{proof}
	Recall that all pixel values of the image $I$ are non-negative. Then, for any pixel $x$ where $B_R(x)$ contains the entire object, \eqref{eqn:proof} becomes
	\begin{equation} \label{eqn:surrogate_landscape}
	L_I(x) = \sum_{l=0}^R \left( \emax  - \sum_{m=0}^{l} \sum_{s \in A_m (x)} I(s) \right) = (R+1) \ \emax - \sum_{l=0}^R \sum_{m=0}^{l} \sum_{s \in A_m (x)} I(s) ,
	\end{equation}
	We further note that
	\begin{eqnarray}\label{eqn:L_simplified2}
	\sum_{l=0}^R \sum_{m=0}^{l} \sum_{s \in A_m (x)} I(s) & = & (R+1)I(x) +  R \sum_{s \in A_1 (x)} I(s) + (R-1) \sum_{s \in A_2 (x)} I(s) + \cdots + \sum_{s \in A_R (x)} I(s) \\ \nonumber
	& = & \sum_{s \in B_R(x)} (R+1 - \left\lceil   \norm{x-s} \right\rceil ) I(s), 
	\end{eqnarray}
	We denote the GM as $\mu_1$.  By definition, for any point $x$ and, specifically, for any point $x \neq \mu_1$ where $B_R(x)$ contains the entire object
	\begin{equation}
	\sum_{s \in \pset}   \left\lceil   \norm{\mu_1-s} \right\rceil I(s) < \sum_{s \in \pset}   \left\lceil   \norm{x-s} \right\rceil I(s).
	\end{equation}
	Equivalently, this can be expressed as
	\begin{equation}
	\sum_{s \in B_R(\mu_1)}   \left\lceil   \norm{\mu_1-s} \right\rceil I(s) + \sum_{s \notin B_R(\mu_1)}   \left\lceil   \norm{\mu_1-s} \right\rceil I(s) < \sum_{s \in B_R(x)}   \left\lceil   \norm{x-s} \right\rceil I(s) + \sum_{s \notin B_R(x)}   \left\lceil   \norm{x-s} \right\rceil I(s).
	\end{equation}
	We note that, by the construction of the disk $B_R(x)$, every pixel $s \notin B_R(x)$ must equal zero. The same is true for every pixel  $s \notin B_R(\mu_1)$. Therefore,
	\begin{equation}
	\sum_{s \in B_R(\mu_1)}   \left\lceil   \norm{\mu_1-s} \right\rceil I(s)  < \sum_{s \in B_R(x)}   \left\lceil   \norm{x-s} \right\rceil I(s).
	\end{equation}
	
	Furthermore, as the object is fully contained in $B_R(x)$ and in $B_R(\mu_1)$,
	\begin{equation}
	(R+1)  \sum_{s \in B_R(\mu_1)}  I(s) = (R+1) \sum_{s \in B_R(x)}  I(s).
	\end{equation}
	
	In conclusion,
	\begin{equation} \label{eqn:L_simplified3}
	\sum_{s \in B_R(\mu_1)}   \left\lceil   \norm{\mu_1-s} \right\rceil I(s) - (R+1)  \sum_{s \in B_R(\mu_1)}  I(s)  < \sum_{s \in B_R(x)}   \left\lceil   \norm{x-s} \right\rceil I(s) - (R+1) \sum_{s \in B_R(x)}  I(s).
	\end{equation}
	Thus, \eqref{eqn:L_simplified3} together with~\eqref{eqn:surrogate_landscape} and~\eqref{eqn:L_simplified2} imply the claim.
\end{proof}

Theorem~\ref{thm:relation} states that in the regime where the GM provides a valid estimate of the CM, that is, a clean image containing a single object, \rev{the sCM} is guaranteed to reach a local minimum at the GM. \rev{This will become a global minimum} under the condition that for any disk $B_m(x)$, where $0 \le m \le R$, the intensity in the outer $R-m$ rings is limited relative to the internal $m$ rings \rev{(or, i}n other words, the intensity values of the object do not increase rapidly as we move away from its GM\rev{)}. We formally pose this condition in the following Theorem.

\begin{theorem} \label{thm:relation2}
	Consider a noise-free setting, where image $I$ consists of a single object. Assume that for any $x \in \pset$ the object satisfies
	\begin{equation} \label{eqn:thm_cond}
	\sum_{s \in \pset} \left\lceil   \norm{\mu_1-s} \right\rceil   I(s)  < \sum_{s \in B_R(x)} \left\lceil   \norm{x-s} \right\rceil   I(s) + (R+1) \sum_{s \in \pset \setminus B_R(x)} I(s), 
	\end{equation}
	where $R$ is an upper bound on, and strictly larger than, the radius of the object we aim to center. 
	Then, \rev{the GM of~\eqref{eqn:moving_MOM}  with respect to the composed metric $d(x,y) =  \left\lceil   \norm{x-y} \right\rceil$		coincides with the 
		global minimum of {the landscape of the sCM}~\eqref{eqn:surrogate}. }
\end{theorem}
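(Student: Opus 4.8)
The plan is to extend the computation from the proof of Theorem~\ref{thm:relation} — which only compared centers $x$ whose disk $B_R(x)$ already contains the whole object — to \emph{all} $x \in \pset$, and then to recognize that the hypothesis~\eqref{eqn:thm_cond} is nothing but the inequality $L_I(\mu_1) < L_I(x)$ written out. First I would observe that the rearrangement~\eqref{eqn:L_simplified2}, namely $\sum_{l=0}^R \sum_{m=0}^l \sum_{s\in A_m(x)} I(s) = \sum_{s\in B_R(x)}\left(R+1-\lceil\norm{x-s}\rceil\right) I(s)$, is a purely combinatorial counting identity and therefore valid for every $x$, with no assumption that $B_R(x)$ contains the object. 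Feeding it into the distributed form~\eqref{eqn:surrogate_landscape} yields, for all $x$,
\begin{equation*}
L_I(x) = (R+1)\,\emax - (R+1)\sum_{s\in B_R(x)} I(s) + \sum_{s\in B_R(x)} \lceil\norm{x-s}\rceil\, I(s).
\end{equation*}

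The key simplification is to identify $\emax$ with the total mass. Since $R$ strictly exceeds the radius of the object, the disk $B_R(\mu_1)$ contains the entire object, so $\sum_{s\in B_R(\mu_1)} I(s) = \sum_{s\in\pset} I(s)$; as all pixel values are non-negative, no disk can collect more mass than this, hence $\emax = \sum_{s\in\pset} I(s)$. Writing $\sum_{s\in B_R(x)} I(s) = \sum_{s\in\pset} I(s) - \sum_{s\in\pset\setminus B_R(x)} I(s)$ and cancelling the two $(R+1)\,\emax$ contributions, the landscape collapses to
\begin{equation*}
L_I(x) = \sum_{s\in B_R(x)} \lceil\norm{x-s}\rceil\, I(s) + (R+1)\sum_{s\in\pset\setminus B_R(x)} I(s),
\end{equation*}
which is exactly the right-hand side of~\eqref{eqn:thm_cond}.

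To finish, I would evaluate this at the GM. Because $B_R(\mu_1)$ contains the whole object, its tail $\sum_{s\in\pset\setminus B_R(\mu_1)} I(s)$ vanishes and $L_I(\mu_1) = \sum_{s\in\pset}\lceil\norm{\mu_1-s}\rceil\, I(s)$, which is the left-hand side of~\eqref{eqn:thm_cond}. Thus the hypothesis reads $L_I(\mu_1) < L_I(x)$ for every competing center $x$, i.e. $\mu_1$ is the global minimizer of the landscape~\eqref{eqn:surrogate}; since $\mu_1$ is by definition the minimizer of the GM objective~\eqref{eqn:moving_MOM} under the composed metric $\lceil\norm{\cdot}\rceil$, the GM and the sCM coincide at the global minimum.

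The genuinely new content relative to Theorem~\ref{thm:relation} is the treatment of centers $x$ whose disk $B_R(x)$ misses part of the object; this is precisely where the term $(R+1)\sum_{s\in\pset\setminus B_R(x)} I(s)$ appears, and controlling this tail is exactly what hypothesis~\eqref{eqn:thm_cond} is engineered to enforce. The main thing to get right is therefore the bookkeeping: that the identity~\eqref{eqn:L_simplified2} and the equality $\emax = \sum_{s\in\pset} I(s)$ both survive without any containment assumption on $x$, after which the conclusion is immediate. I would also flag one small caveat: at $x=\mu_1$ the two sides of~\eqref{eqn:thm_cond} are equal, so the hypothesis is to be read as holding for all $x\neq\mu_1$, which is equivalent to $\mu_1$ being the (unique) global minimizer.
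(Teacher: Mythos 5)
Your proposal is correct and takes essentially the same route as the paper's proof: both hinge on the identity~\eqref{eqn:L_simplified2} (valid for every $x$, as you note) together with the fact that all mass lies in $B_R(\mu_1)$, and both reduce the hypothesis~\eqref{eqn:thm_cond} to $L_I(\mu_1) < L_I(x)$ --- the paper by rearranging the hypothesis and cancelling the common term $(R+1)\,\emax$, you by first collapsing the landscape to a closed form via the identification $\emax = \sum_{s\in\pset} I(s)$. Your closing caveat, that~\eqref{eqn:thm_cond} holds with equality at $x=\mu_1$ so the hypothesis must be read as ranging over $x\neq\mu_1$, is a correct observation about the theorem statement that the paper itself glosses over.
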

\begin{proof}
	By~\eqref{eqn:thm_cond}, we have
	\begin{equation}
	\sum_{s \in \pset} \left\lceil   \norm{\mu_1-s} \right\rceil   I(s)  < \sum_{s \in B_R(x)} \left\lceil   \norm{x-s} \right\rceil   I(s) + (R+1) \sum_{s \in \pset} I(s) - (R+1) \sum_{s \in B_R(x)} I(s),
	\end{equation}
	that is
	\begin{equation}
	\sum_{s \in \pset} \left\lceil   \norm{\mu_1-s} \right\rceil   I(s) - (R+1) \sum_{s \in \pset} I(s)  < \sum_{s \in B_R(x)} \left\lceil   \norm{x-s} \right\rceil   I(s)  - (R+1) \sum_{s \in B_R(x)} I(s).
	\end{equation}
	Additionally, any pixel $x \in \pset$ which has value other than zero must be in the disk $B_R(\mu_1)$. Therefore,
	\begin{equation}
	\sum_{s \in B_R(\mu_1)} \left\lceil   \norm{\mu_1-s} \right\rceil   I(s) - (R+1) \sum_{s \in B_R(\mu_1)} I(s)  < \sum_{s \in B_R(x)} \left\lceil   \norm{x-s} \right\rceil   I(s)  - (R+1) \sum_{s \in B_R(x)} I(s).
	\end{equation}
	Recall~\eqref{eqn:surrogate_landscape} and~\eqref{eqn:L_simplified2}, to say, the GM is the global minimum of the landscape of \rev{the sCM}.
\end{proof}

So far we have seen that in the regime where the GM provides a reliable estimate of the CM, \rev{the sCM} will identify the GM as either a local or a global minimum. \rev{We will now consider two cases where the GM fails to identify the CM, whereas the sCM retains its reliability.}

\rev{In the first case, w}e consider image $I$ as a sum of two images, $I_1$ and $I_2$. Image $I_1$ fully contains a single object supported on the set of pixels $\pset_1$. \rev{Image $I_2$ includes a constant background, that is $I_2(p) = \alpha>0$ at any pixel $p$.}
\rev{
	\begin{lemma} \label{lemma:fixedBackgroound}
		Assuming the number of pixels at each disk $B_r(p)$, $r\le R$ is fixed for any $p\in \pset$. Then, the sCM of $I_1$, which agrees with the assumptions of Theorem~\ref{thm:relation}, is preserved for $I=I_1 + I_2$, with $I_2(p) = \alpha>0$ at any pixel $p$.
	\end{lemma}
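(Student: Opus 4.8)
The plan is to show that superimposing the flat background $I_2$ changes the sCM landscape only by an additive constant, so that its minimizer is untouched. First I would discard the parts of $L_I$ that play no role in the optimization. The quantity $\emax$ does not depend on the minimization variable, so the term $(R+1)\emax$ appearing in~\eqref{eqn:surrogate_landscape} is a pure offset, and minimizing $L_I$ is equivalent to maximizing
\[
G_I(x) \;=\; \sum_{l=0}^R \sum_{m=0}^{l} \sum_{s \in A_m (x)} I(s) \;=\; \sum_{s \in B_R(x)} \bigl( R + 1 - \lceil \norm{x-s} \rceil \bigr)\, I(s),
\]
the second equality being the combinatorial identity~\eqref{eqn:L_simplified2} from the proof of Theorem~\ref{thm:relation}, which is valid for any image and any $x$.

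Next I would use the fact that the geometric weights $R+1-\lceil \norm{x-s}\rceil$ depend only on positions and not on pixel values, so $G$ is linear in the image: $G_I(x) = G_{I_1}(x) + G_{I_2}(x)$. It then remains only to show that the background contribution $G_{I_2}$ is independent of $x$. Writing $a_m(x) = \abs{A_m(x)}$ for the number of pixels in the $m$th ring about $x$ and substituting $I_2 \equiv \alpha$, I would obtain
\[
G_{I_2}(x) \;=\; \alpha \sum_{s \in B_R(x)} \bigl( R+1 - \lceil \norm{x-s} \rceil \bigr) \;=\; \alpha \sum_{m=0}^{R} (R+1-m)\, a_m(x).
\]
This is where the hypothesis is used decisively: since $\abs{B_r(p)}$ is assumed to be the same for every $p \in \pset$ and every $r \le R$, each ring cardinality $a_m(x)$ is determined by those fixed disk cardinalities and is therefore independent of $x$, so the right-hand side equals a constant $C$ that does not depend on $x$. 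Hence $G_I = G_{I_1} + C$, equivalently $L_I$ and $L_{I_1}$ differ by a constant. Because a constant shift preserves both the set of minimizers and the local-minimum structure, every local minimum of $L_{I_1}$ — in particular the one that Theorem~\ref{thm:relation} places at the GM $\mu_1$ — together with its global minimizer is preserved verbatim; thus the sCM of $I = I_1 + I_2$ coincides with that of $I_1$, as claimed.

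I expect the only delicate point to be the ring-cardinality hypothesis itself, which is an idealization rather than a literal fact on a finite grid: near the image boundary $\abs{B_r(x)}$ is smaller, so $G_{I_2}$ is only approximately constant there. I would therefore phrase the statement for an infinite grid (or a torus), where the hypothesis holds exactly, and remark that the boundary discrepancy is negligible whenever the object lies well inside the frame — precisely the setting in which $R$ can be taken larger than the object's radius while $B_R$ stays within $\pset$. A minor bookkeeping check is that dropping the $\emax$ offset is legitimate even though $\emax$ is recomputed for $I$ and for $I_1$; this is immediate because in each case it contributes only a constant to the respective landscape.
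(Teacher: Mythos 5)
Your proof is correct and follows essentially the same route as the paper's: decompose the landscape linearly into the contributions of $I_1$ and $I_2$, and use the fixed-cardinality hypothesis to show the flat background contributes only a constant in $x$, so the minimizer is unchanged. Your version is in fact a bit more careful than the paper's one-line argument, since by discarding the offsets and working with $G_I$ you sidestep the point that the paper's claimed linearity $L_I(p) = L_{I_1}(p) + L_{I_2}(p)$ implicitly requires $\emax$ to be additive across $I = I_1 + I_2$ (which does hold here, precisely because $\abs{B_R(p)}$ is constant over $p \in \pset$, but the paper never verifies it).
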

	\begin{proof}
		Since the value $L_{I_2}(p)$ of the cost function~\eqref{eqn:proof} is constant for all $p \in \pset$ and the landscape of~\eqref{eqn:proof} is linear, that is $L_I(p) = L_{I_1}(p) + L_{I_2}(p)$, it follows that $\arg \underset{p\in \pset}{\min} L_I(p) = \arg  \underset{p\in \pset}{\min} L_{I_1}(p)$.
	\end{proof}
}
\rev{A consequence of Lemma~\ref{lemma:fixedBackgroound} is that the the sCM will be uneffected by an addition of some constant to the image $I$. It follows that iid noise with zero mean and iid noise with positive mean will have the same effect on the sCM. This is not the case for the GM, as was illustrated in Figure~\ref{fig:noisy_uniform_row}. To see this, we turn to}
the landscape of the GM 
\begin{equation}\label{eqn:gm_landscape}
\rev{L_I^G}(x) = \sum_{p \in \pset} I(p) d(p, x) .
\end{equation}
\rev{This landscape is also linear, that is, ${L_I^G}(x) = L^G_{I_1}(x) + L^G_{I_2}(x)$.}
As the two \rev{images $I_1$ and $I_2$} have distinct GMs, it follows that the overall GM of $I$ \rev{may} deviate from the GM of the object in $I_1$. \rev{In particular, since we assume $I_2$ is constant, its GM appears at (or very close to) the central pixel, that is, the origin. This is usually not the case for image $I_1$ (otherwise, centering would be unnecessary). Therefore, there is no result analogous to Lemma~\ref{lemma:fixedBackgroound} for the GM}.

\rev{In the second case where the GM fails to provide a reliable estimate of the CM of an object, images $I_1$ contains a single object supported on the set of pixels $\pset_1$ and image $I_2$ contains a single partial object supported on the set of pixels $\pset_2$. As the objects are distinct, it follows that $\pset_1 \cap \pset_2 = \emptyset$. Recall that our aim is to identify the center of the object supported on $\pset_1$. As before, since the objects have distinct GMs, the overall GM of $I$ is not guaranteed to coincide with GM of the object in $I_1$. Indeed, such an example is presented in Figure~\ref{fig:geomean}, where the minimum of $L_I^G$ migrates towards the area between both objects, since the distance to all non-zero pixels is reduced there. This is not the case for the sCM. We } analyze the performance of \rev{the sCM in the following theorem.}
\begin{theorem} \label{thm:relation3}
	Consider a noise-free setting, where image $I = I_1+I_2$ is a sum of two images. Image $I_1$ fully contains a single object supported on the set of pixels $\pset_1$. Image $I_2$ contains a single partial object supported on the set of pixels $\pset_2$.  
	Then, the minimizer of \rev{the sCM} will remain unchanged so long the following conditions are satisfied:
	\begin{enumerate}
		\item \textit{distance assumption:} The GM of $I_1$ has a distance of at least $R$  from any non-zero pixels in $I_2$.
		\item \textit{intensity assumption:}  $$\max_{x \in \pset} \sum_{s \in B_R(x)} I(s) = \sum_{s \in \pset_1} I_1(s).$$
	\end{enumerate}
\end{theorem}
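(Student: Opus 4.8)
The plan is to work entirely with the landscape $L_I$ and to reduce the assertion to a single pointwise inequality, in direct analogy with the proofs of Theorems~\ref{thm:relation} and~\ref{thm:relation2}, the only new feature being that we must track the second object. First I would discard the constant term: since $\emax$ is independent of $x$, minimizing $L_I$ is the same as maximizing
\[ g(x) := \sum_{l=0}^R\sum_{m=0}^{l}\sum_{s\in A_m(x)}I(s) = \sum_{s\in B_R(x)}\bigl(R+1-\lceil\norm{x-s}\rceil\bigr)I(s), \]
the identity being the one already recorded in~\eqref{eqn:L_simplified2}. Because all pixel values are nonnegative and $I=I_1+I_2$ has disjoint supports, $g$ splits linearly, $g=g_1+g_2$ with $g_i(x)=\sum_{s\in B_R(x)}(R+1-\lceil\norm{x-s}\rceil)I_i(s)\ge 0$. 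The intensity assumption is exactly the statement $\emax=\sum_{s\in\pset_1}I_1(s)=:T_1$, and, more usefully, that the mass $m_R(x):=\sum_{s\in B_R(x)}I(s)$ captured by any radius-$R$ disk satisfies $m_R(x)\le T_1$ for every $x$.

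Second I would evaluate at the GM $\mu_1$ of $I_1$. The distance assumption guarantees that $B_R(\mu_1)$ contains no nonzero pixel of $I_2$, while the choice of $R$ (strictly larger than the radius of the object in $I_1$) guarantees $\pset_1\subseteq B_R(\mu_1)$. Hence $g_2(\mu_1)=0$ and $g(\mu_1)=g_1(\mu_1)$, which by Theorem~\ref{thm:relation2} applied to $I_1$ is the global maximum of $g_1$; equivalently, recalling~\eqref{eqn:surrogate_landscape}, $L_I(\mu_1)=(R+1)T_1-g_1(\mu_1)=D_1$ with $D_1:=\sum_{s\in\pset_1}\lceil\norm{\mu_1-s}\rceil I_1(s)$. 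The essential structural point is that the far object $I_2$ is invisible to the truncated tent weight at $\mu_1$; this is precisely why the sCM is more robust than the GM, whose untruncated kernel in~\eqref{eqn:gm_landscape} always feels $I_2$.

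Third comes the substance: showing $g(x)\le g(\mu_1)$ for all $x$. I would split on whether $x$ already sees the entire first object. If $B_R(x)\supseteq\pset_1$, then $m_R^{(1)}(x):=\sum_{s\in B_R(x)}I_1(s)=T_1$, so the intensity bound $m_R(x)\le T_1$ forces $m_R^{(2)}(x)=0$; therefore $g_2(x)=0$ and $g(x)=g_1(x)\le g_1(\mu_1)=g(\mu_1)$ by Theorem~\ref{thm:relation2}. This case is immediate, and it is where the intensity assumption does its cleanest work: any $x$ good enough to capture all of $I_1$ is automatically forbidden from capturing any of $I_2$.

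The remaining case, $B_R(x)\not\supseteq\pset_1$, is the main obstacle. Here $x$ forfeits first-object mass $\tilde M_1(x)=\sum_{s\in\pset_1\setminus B_R(x)}I_1(s)>0$, and intensity only yields $m_R^{(2)}(x)\le\tilde M_1(x)$. Writing $g(x)=(R+1)m_R(x)-\Delta(x)$ with $\Delta(x):=\sum_{s\in B_R(x)}\lceil\norm{x-s}\rceil I(s)$, the goal $g(x)\le g(\mu_1)$ is equivalent to
\[ \Delta(x)+(R+1)\bigl(T_1-m_R(x)\bigr)\ge D_1, \]
which is precisely the two-object analogue of hypothesis~\eqref{eqn:thm_cond}; so the plan is to deduce it from~\eqref{eqn:thm_cond} for $I_1$ together with the two assumptions. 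The delicate point is the accounting for $I_2$: the naive estimate $g_2(x)\le(R+1)m_R^{(2)}(x)$ is too lossy, since a captured pixel of $I_2$ carries the tent weight $R+1-\lceil\norm{x-s}\rceil$ rather than $R+1$, and I expect the crux to be proving that the lost weight $(R+1)m_R^{(2)}(x)-\sum_{s\in\pset_2\cap B_R(x)}\lceil\norm{x-s}\rceil I_2(s)$ is absorbed by the slack that the separation of supports and the intensity assumption (applied not just at $x$ but across the translates forced away from $\pset_2$) leave in~\eqref{eqn:thm_cond} for $I_1$. One clean bookkeeping device is the Abel-summation identity $g(x)=\sum_{l=0}^R m_l(x)$, $m_l(x)=\sum_{s\in B_l(x)}I(s)$: because the tent has unit decrements it reduces matters to comparing the nested disk masses $m_l(x)$, where $l=R$ is handled directly by the intensity assumption and the inner radii must be controlled through the separation of the two objects. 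Making this last domination quantitative is where I expect essentially the entire difficulty of the theorem to reside.
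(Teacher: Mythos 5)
Your proposal is incomplete, and the gap is exactly where you said it is. Your skeleton coincides with the paper's proof: the paper decomposes the landscape as $L_I=L_{I_1}-L_{-}$ (its~\eqref{eqn:decompose_L}), which is your $g=g_1+g_2$ up to the constant $(R+1)\emax$ and a sign; it uses the distance assumption to get $L_{-}(\mu_1)=0$; and it reduces the global claim to showing that no pixel $p$ satisfies $g(p)>g_1(\mu_1)$, which is precisely its~\eqref{eqn:contradict}. Your Case~A ($B_R(x)\supseteq\pset_1$) is correct and is in fact handled more carefully than in the paper: there the intensity assumption really does force $\sum_{s\in B_R(x)}I_2(s)=0$, hence $g(x)=g_1(x)\le g_1(\mu_1)$. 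What you leave unproven is Case~B ($B_R(x)\not\supseteq\pset_1$), so as it stands the proposal does not prove the theorem.

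But you should know that the step you could not complete is not completed in the paper either: the paper dispatches every competitor $p$ (your Case~B included) with the single assertion that~\eqref{eqn:contradict} ``contradicts the intensity assumption,'' and that implication does not hold. Inequality~\eqref{eqn:contradict} compares the tent-weighted sums $\sum_{l=0}^{R}\sum_{s\in B_l(\cdot)}I(s)$, whereas the intensity assumption only bounds the unweighted top-radius mass $\sum_{s\in B_R(x)}I(s)$; a pixel can capture less total mass but at much smaller radii and still win. Concretely, let $I_1$ be a uniform disk of radius $\rho$ about $a$ with total mass $T_1$ (so its sCM and GM are at $a$), and let $I_2$ be a single pixel $b$ with $I_2(b)=T_1$ and $\norm{a-b}>2R+\rho$. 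Every disk $B_R(x)$ then meets at most one of the two supports, so $\max_x\sum_{s\in B_R(x)}I(s)=T_1$ and both hypotheses of the theorem hold; yet $g(b)=(R+1)T_1$ strictly exceeds $g(a)=(R+1)T_1-\sum_{s}\left\lceil\norm{a-s}\right\rceil I_1(s)$, so the minimizer of the sCM moves from $a$ to $b$. Thus the Case-B domination you were seeking is false under the stated hypotheses; no bookkeeping (Abel summation or otherwise) can establish it, and both your argument and the paper's can only be closed by strengthening the assumptions --- for instance requiring the per-radius domination $\max_x\sum_{s\in B_l(x)}I(s)=\sum_{s\in B_l(\mu_1)}I_1(s)$ for every $l\le R$, which makes your identity $g(x)=\sum_{l=0}^{R}\sum_{s\in B_l(x)}I(s)$ finish the proof in one line, or else capping how concentrated $I_2$ may be. In short: your diagnosis of where the entire difficulty resides is accurate, and that difficulty is a genuine defect of the theorem's hypotheses rather than a missing trick.
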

\begin{proof}
	We divide the landscape~\eqref{eqn:surrogate_landscape} into two parts, 
	\begin{equation} \label{eqn:decompose_L}
	L_I(x) = \underbrace{(R+1) \ \emax - \sum_{l=0}^{R} \sum_{m=0}^{l} \sum_{s \in A_m(x)} I_1(s)}_{L_{I_1}(x)}
	- \underbrace{ \sum_{l=0}^{R} \sum_{m=0}^{l} \sum_{s \in A_m(x)} I_2(s)}_{L_{-}(x)} .
	\end{equation} 
	By Theorem~\ref{thm:relation}, the landscape $L_{I_1}(x)$ reaches a minimum at the GM of the object, $x_1^\ast$. Furthermore, by the distance assumption, $L_{-}(x_1^\ast) = 0$. 
	That is, for any pixel $x$ such that the object depicted in $I_2$ is outside the disk $B_R(x)$,  
	\begin{equation} \label{eqn:same_min}
	L_{I}(x) = L_{I_1}(x).
	\end{equation}
	Since $x_1^\ast$ is a (local) minimum of $L_{I_1}(x)$ it must also be  a (local) minimum of   $L_I(x)$. 
	
	Additionally, when $x_1^\ast$ is the global minimum of $L_{I_1}(x)$, it is also the global minimum of $L_I(x)$.  To show this, we assume in contradiction that there exists a pixel $p$ such that $L_{I}(p)<L_{I}(x_1^\ast)$. Then, by~\eqref{eqn:same_min}, we have
	\begin{multline*}
	(R+1) \ \emax - \sum_{l=0}^{R} \sum_{m=0}^{l} \sum_{s \in A_m(p)} I_1(s)
	- \sum_{l=0}^{R} \sum_{m=0}^{l} \sum_{s \in A_m(p)} I_2(s) <  \\
	(R+1) \ \emax - \sum_{l=0}^{R} \sum_{m=0}^{l} \sum_{s \in A_m(x_1^\ast)} I_1(s)
	- \sum_{l=0}^{R} \sum_{m=0}^{l} \sum_{s \in A_m(x_1^\ast)} I_2(s) .
	\end{multline*}
	However, $\sum_{l=0}^{R} \sum_{m=0}^{l} \sum_{s \in A_m(x_1^\ast)} I_2(s) = 0$ and so,
	\begin{equation} \label{eqn:contradict}
	\sum_{l=0}^{R}  \sum_{m=0}^{l}  \sum_{s \in A_m(p)} \left( I_1(s) + I_2(s) \right) > \sum_{l=0}^{R}  \sum_{m=0}^{l}  \sum_{s \in A_m(x_1^\ast)} I_1(s) .
	\end{equation}
	In conclusion, \eqref{eqn:contradict} contradicts the intensity assumption, which proves that no such pixel $p$ exists. 
\end{proof}

Contrary to the GM, \rev{the sCM}~\eqref{eqn:surrogate} uses an upper bound on the radius of the object as additional information. It is natural to examine a similar restriction on the cost function~\eqref{eqn:gm_landscape} of the GM. We therefore define the landscape of the \textit{local geometric median} as
\begin{equation} 	\label{eqn:gm_landscape_mod}
L^R_{I}(x) = \sum_{s \in B_R(x)} I(s) d(s, x) .
\end{equation}
Both the \rev{sCM}~\eqref{eqn:surrogate} and the local GM~\eqref{eqn:gm_landscape_mod} provide an indication on the object's GM. Additionally, the compact support of both landscapes ensures that distant partial objects will not affect the center estimation at the vicinity of an object. Nevertheless, these methods differ in other aspects. For example, the landscape of $L_{I_1}^R(x)$ reaches a local minimum at the GM and grows with distance from the GM. It then falls to zero, and remain zero at pixels $x$ where $B_R(x)$ does not contain any object. The low values outside an object can create spurious minima at pixels residing between multiple objects. In comparison, at these locations \rev{the} landscape \rev{of the sCM} will remain high due to the comparison with $\emax$ and our regularization, which assigns higher weights to the inner rings. A visual comparison between the GM, local GM, and \rev{the sCM} in the case of single and multiple objects is given in Figure~\ref{fig:cases}. This comparison summarizes the above analysis.

\begin{figure}[t]
	\begin{center}
		\begin{subfigure}[t]{0.185\textwidth}
			\centering
			\includegraphics[width=\textwidth]{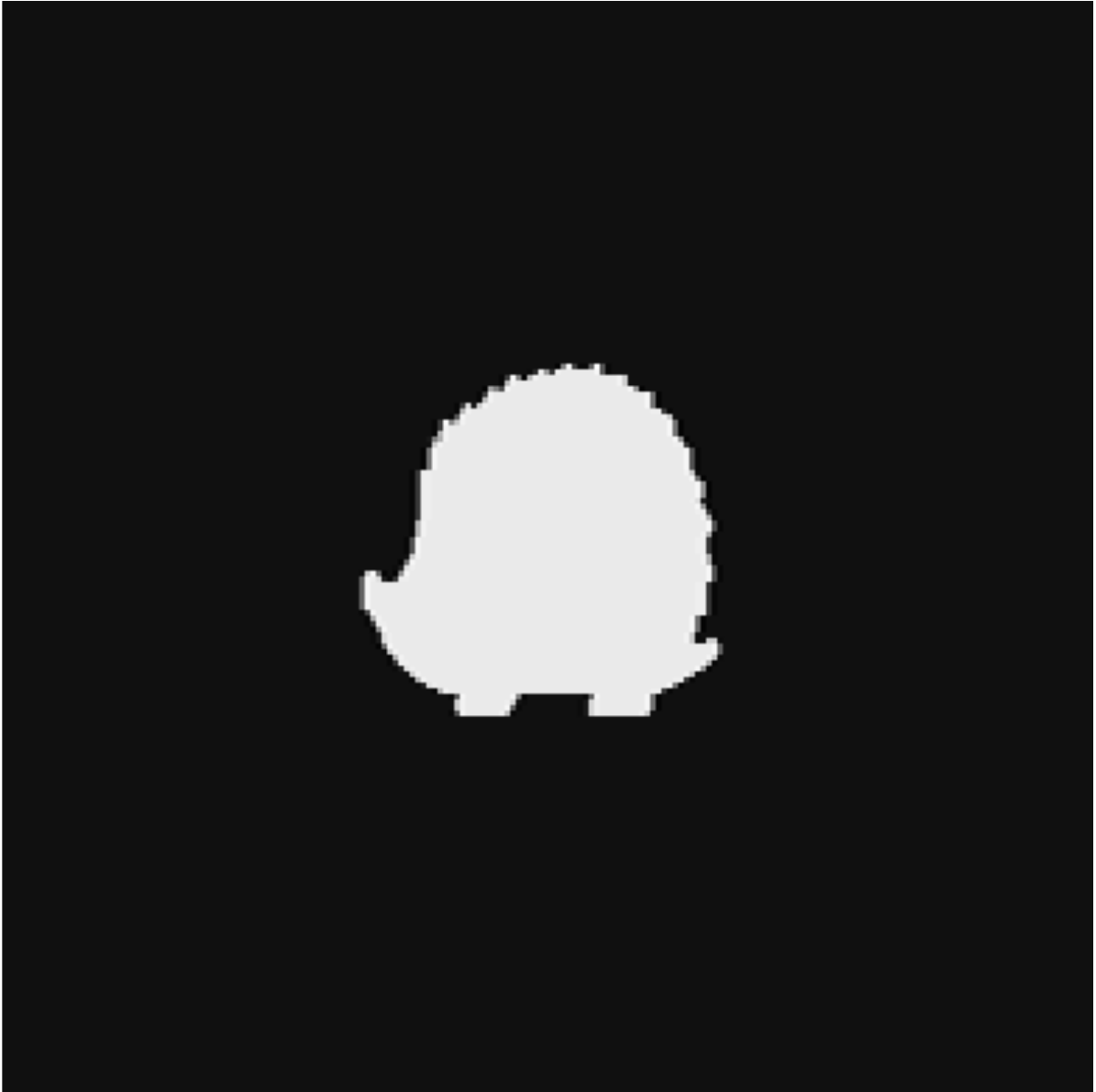}
		\end{subfigure}  \quad
		\begin{subfigure}[t]{0.22\textwidth}
			\centering
			\includegraphics[width=\textwidth]{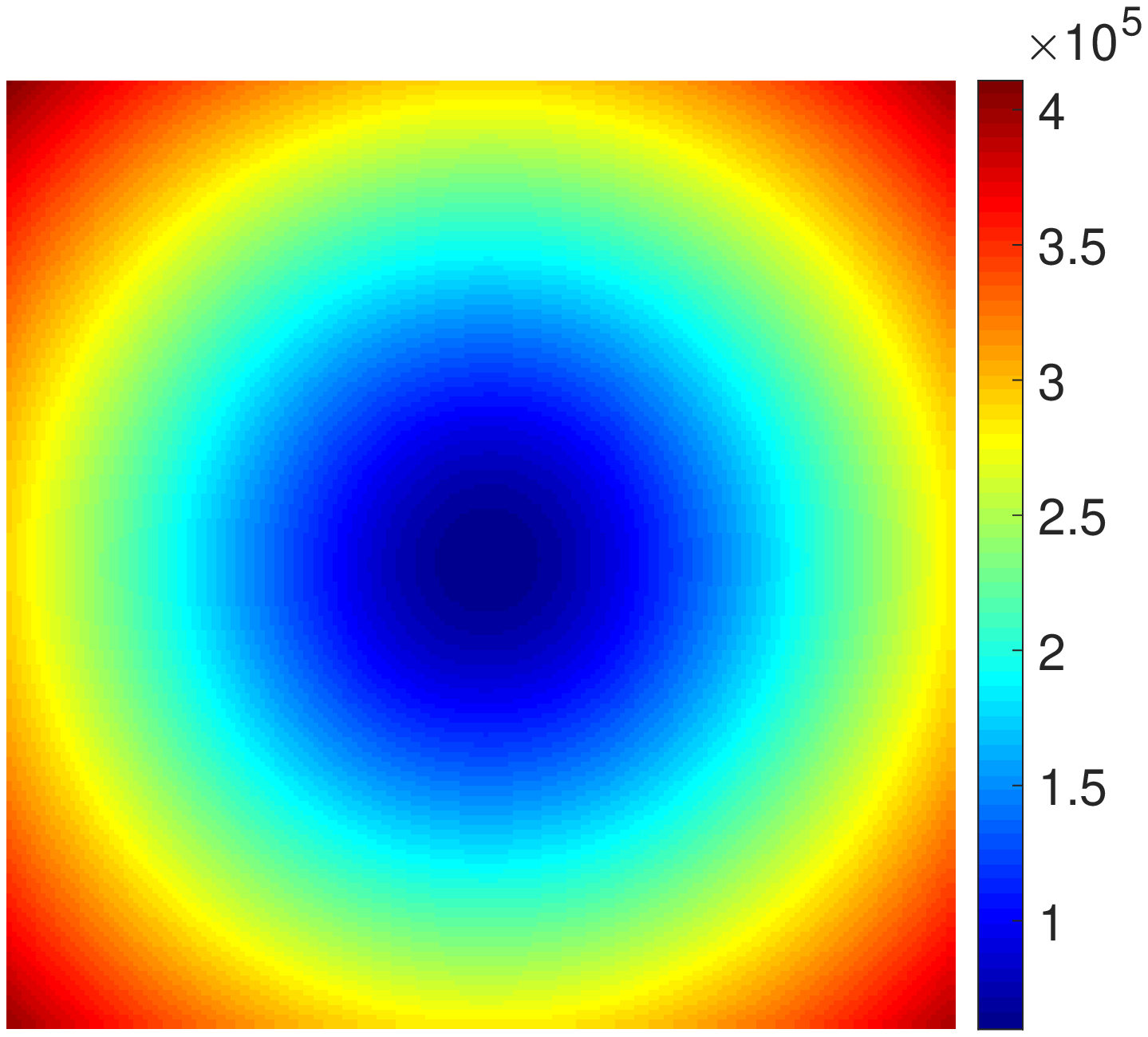}
		\end{subfigure}  \quad
		\begin{subfigure}[t]{0.22\textwidth}
			\centering
			\includegraphics[width=\textwidth]{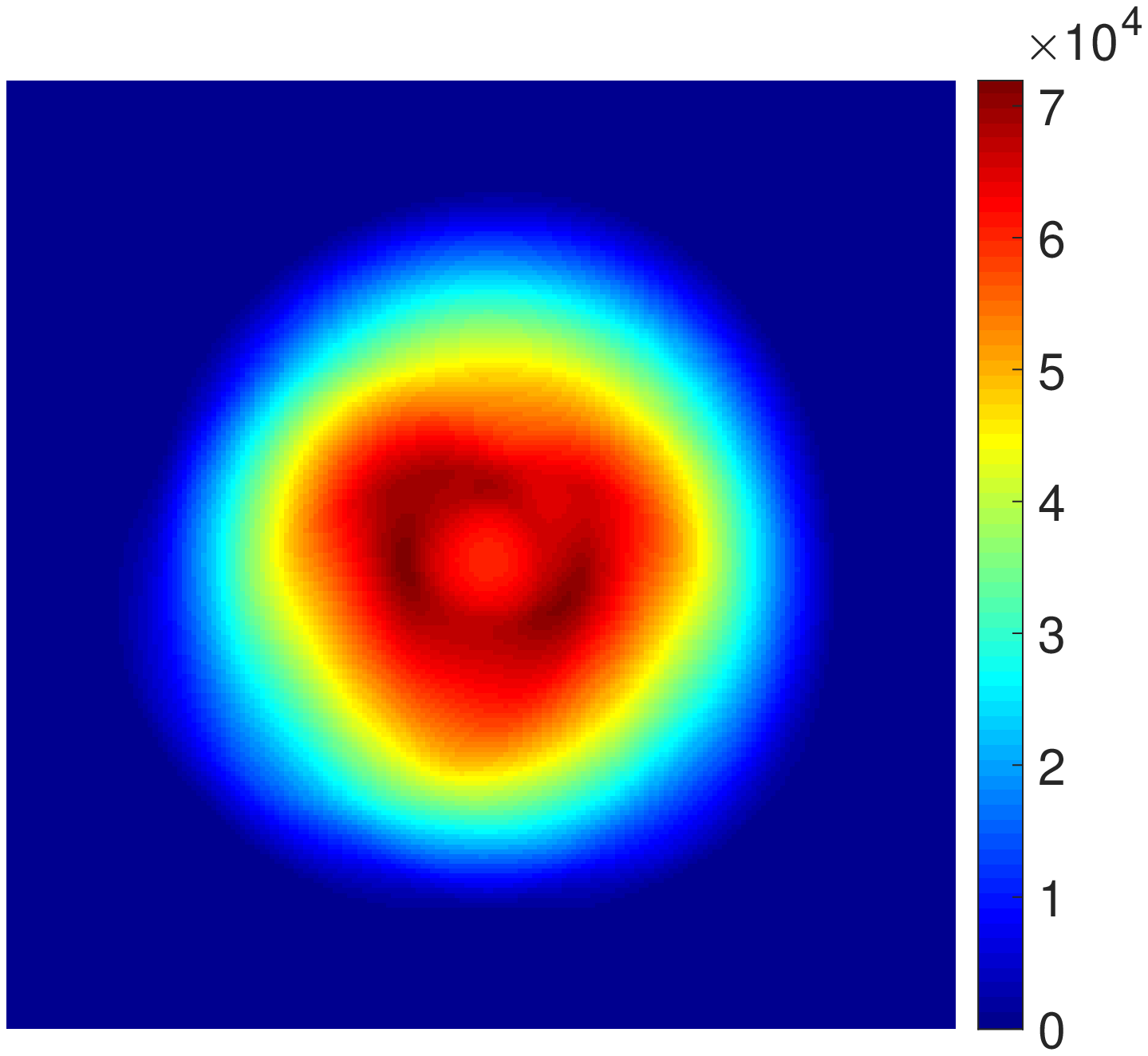}
		\end{subfigure}  \quad
		\begin{subfigure}[t]{0.23\textwidth}
			\centering
			\includegraphics[width=\textwidth]{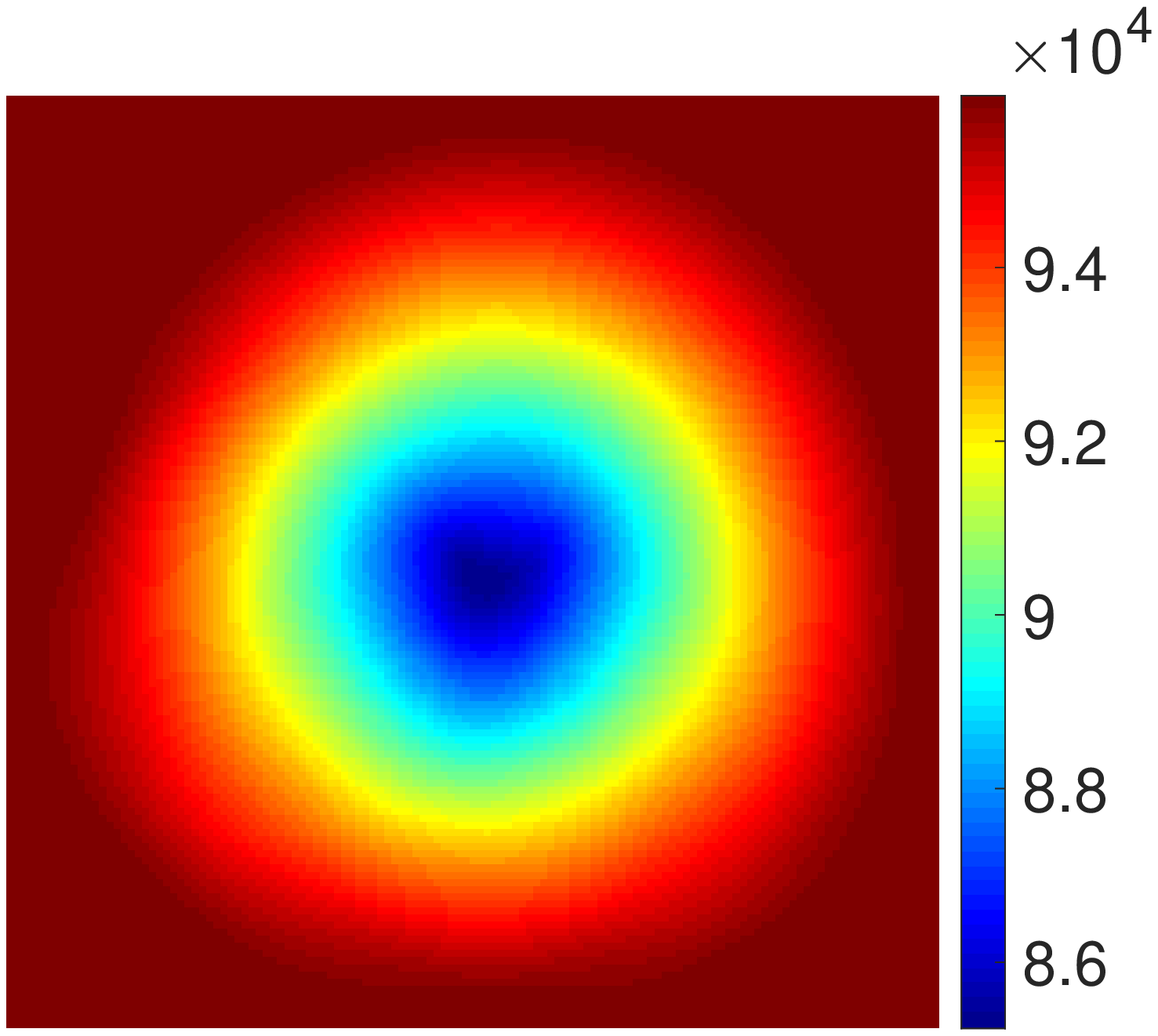}
		\end{subfigure}\\
		\begin{subfigure}[t]{0.185\textwidth}
			\centering
			\includegraphics[width=\textwidth]{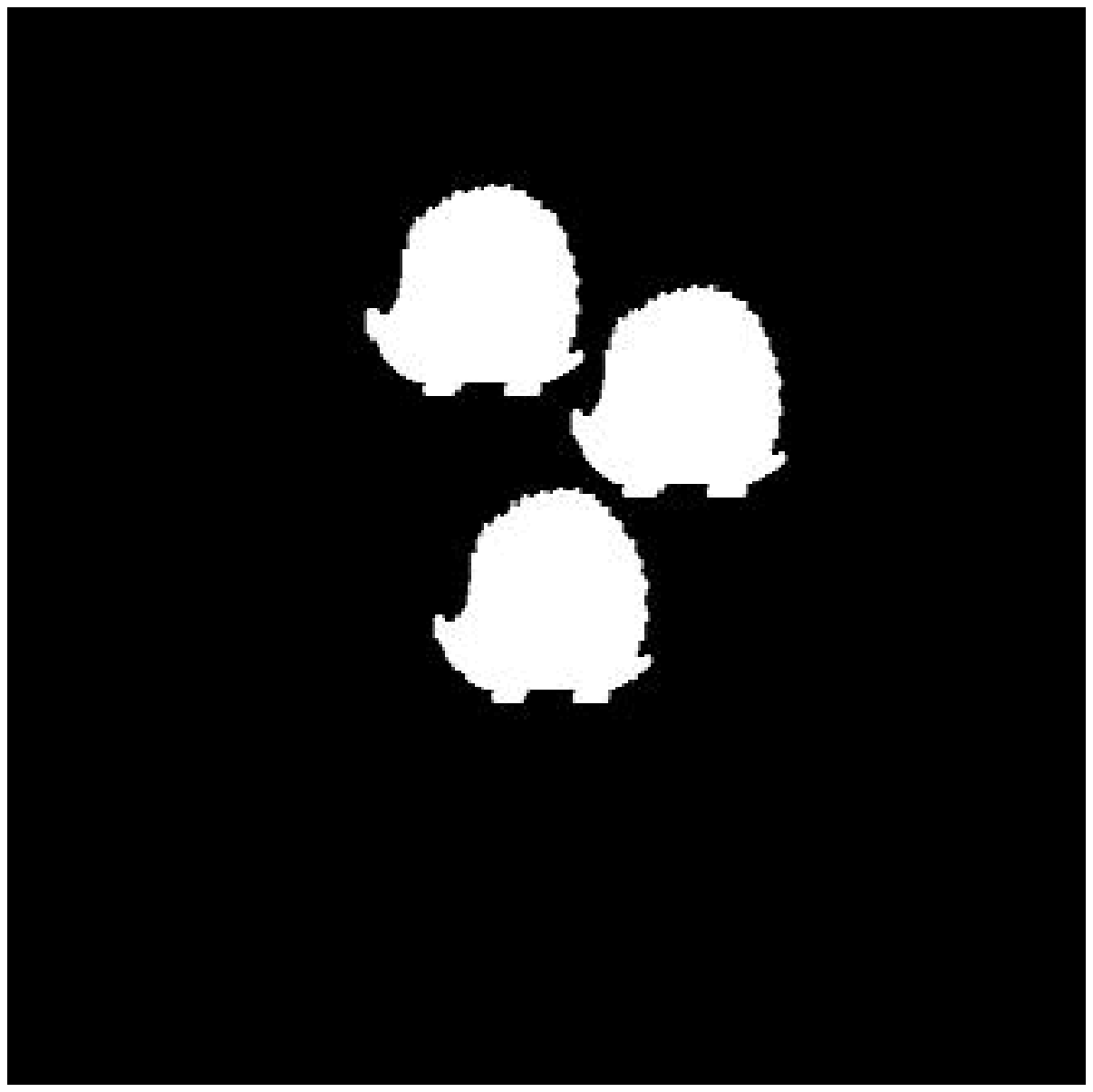}
		\end{subfigure}   \quad
		\begin{subfigure}[t]{0.22\textwidth}
			\centering
			\includegraphics[width=\textwidth]{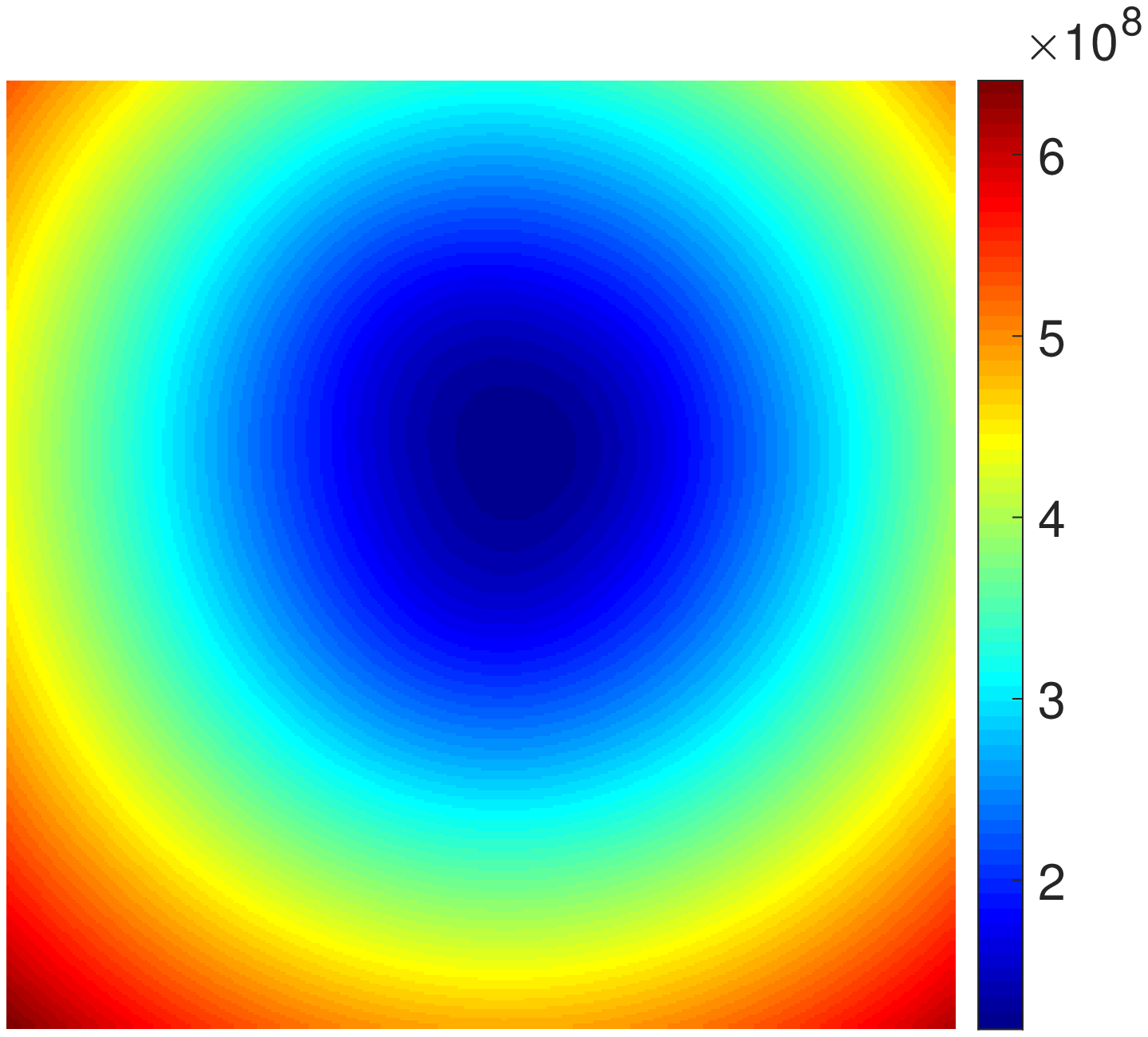}
		\end{subfigure}  \quad
		\begin{subfigure}[t]{0.22\textwidth}
			\centering
			\includegraphics[width=\textwidth]{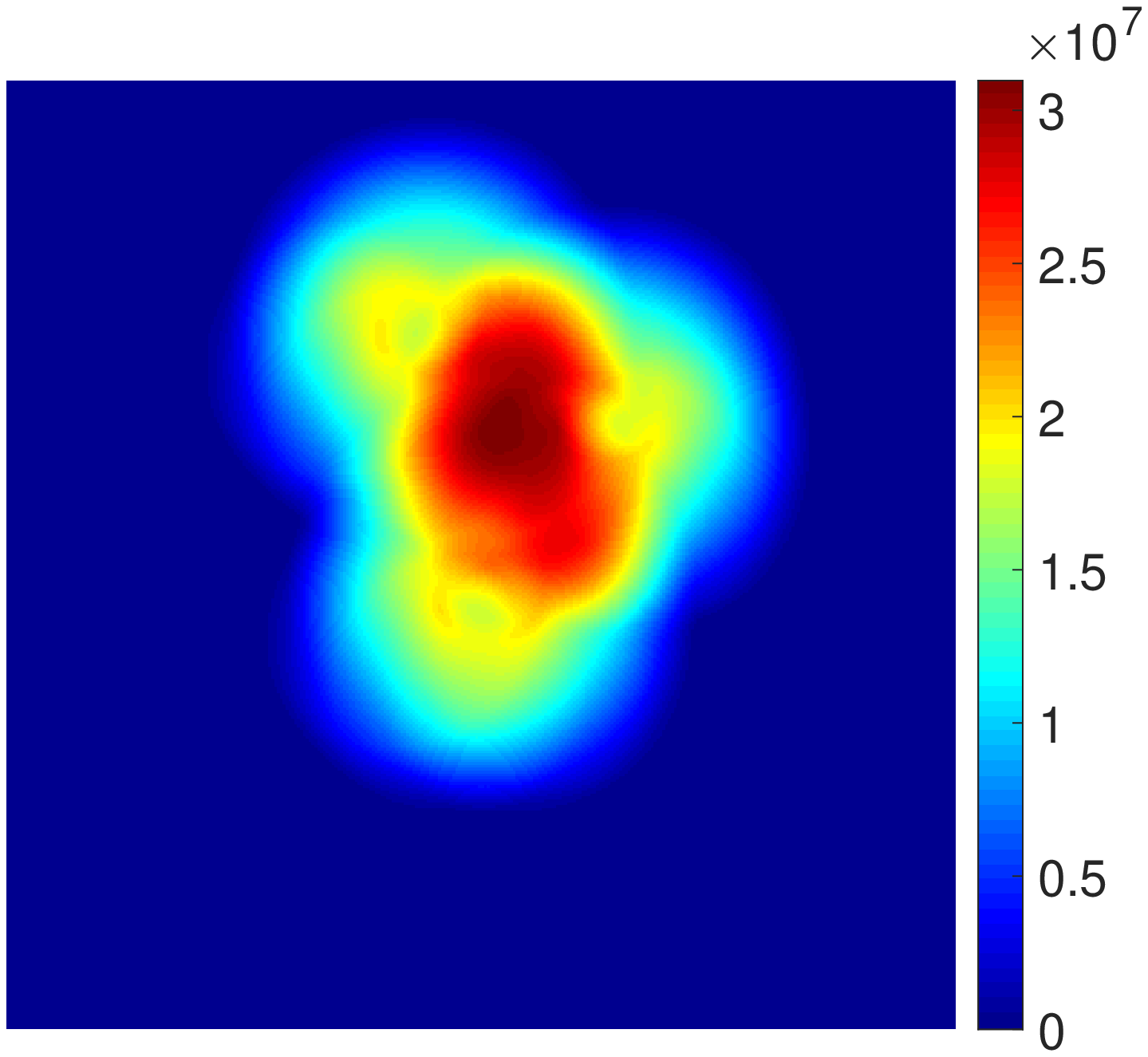}
		\end{subfigure}   \quad
		\begin{subfigure}[t]{0.23\textwidth}
			\centering
			\includegraphics[width=\textwidth]{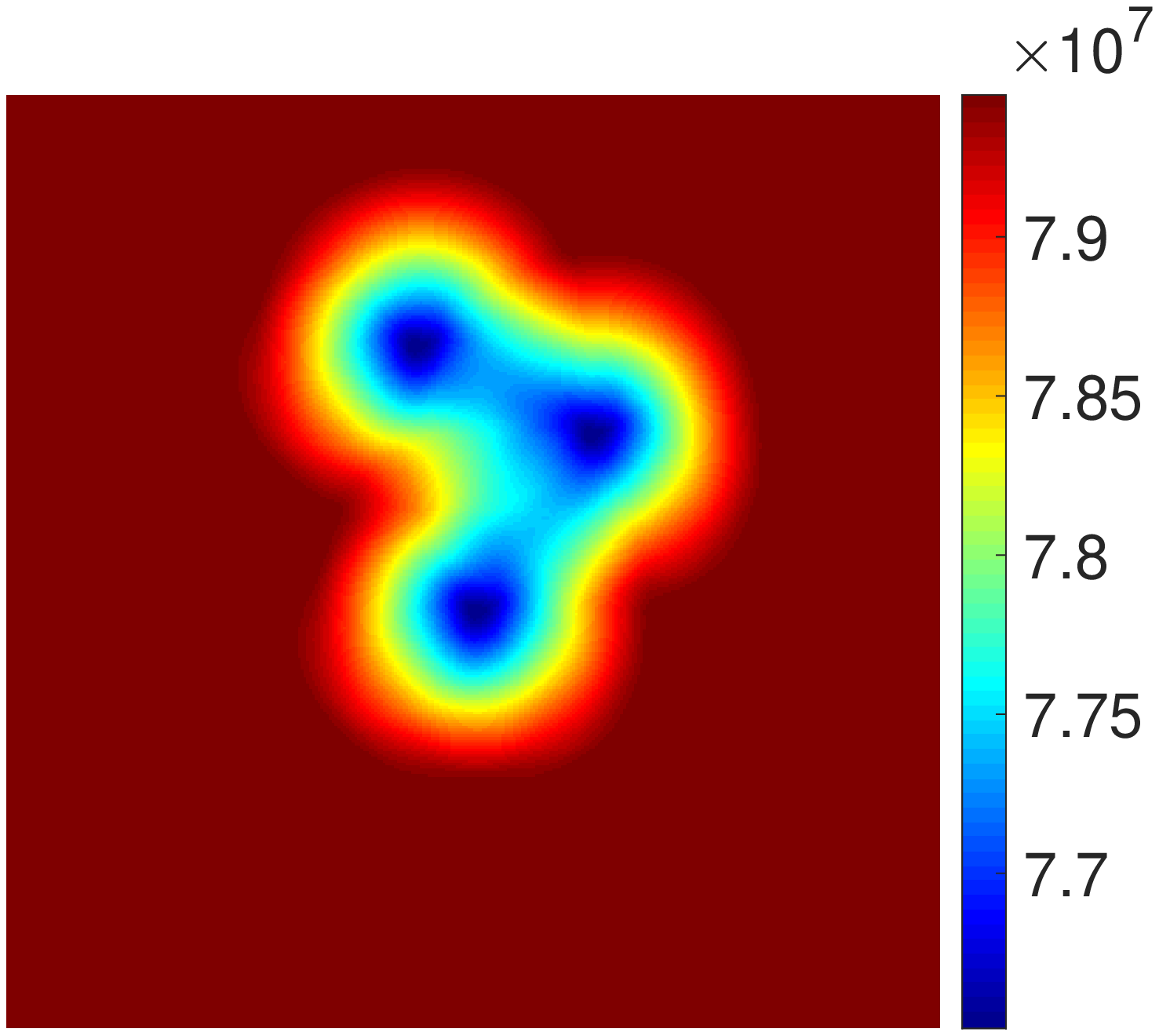}
		\end{subfigure}
		\caption{Comparison between the landscape functions of the GM, local GM, and \rev{the sCM} for images of a single and multiple hedgehogs. The leftmost column contains the hedgehog images. The \rev{center-left} column contains at each pixel $x$ the sum~\eqref{eqn:gm_landscape}. The \rev{center-right} column contains at each pixel $x$ the sum~\eqref{eqn:gm_landscape_mod}. The rightmost column contains the landscape of \rev{the sCM}, that is, for each pixel $x$ the sum~\eqref{eqn:proof}}
		\label{fig:cases}
	\end{center}
\end{figure}

\subsection{The connection to  Earth Mover's Distance} \label{subsec:EMD}

To conclude this section, we present the connection between \rev{the sCM} and the Earth Mover's Distance (EMD). We have previously alluded to this connection in Section~\ref{app:informal_description} when discussing the motivation for our method. Here we present this connection in more detail. We start with \rev{ the definition} of EMD\rev{, see e.g.,~\cite{rubner2000earth}.}\rev{
	\begin{definition}[Earth Mover's Distance]
		The EMD between two images $I_{1}$ and $I_{2}$ given a joint parameterization is defined as
		\begin{align} \label{eqn:EMD_def}
		d_{\text{EMD}}(I_{1},I_{2}) =  \min_{ \{f_{i,j}\}} \frac{\sum_{i,j \in \mathcal{P}} f_{i,j} d_{i,j}}{\sum_{i,j \in \mathcal{P}} f_{i,j}}, \quad \text{s.t.} \quad & \sum_{j} f_{i,j} \le I_{1}(p_i),\\ \nonumber
		& \sum_{i} f_{i,j} \le I_{2}(p_j),\\
		& \sum_{i,j} f_{i,j} = \min \left( \sum_{p \in \pset} I_{1}(p), \sum_{p \in \pset} I_{2}(p) \right), \quad  f_{i,j}\ge 0. \nonumber
		\end{align}
		Here $d_{i,j}=d(p_i,p_j)$ is the ground distance and $\{f_{i,j}\}$ denotes the flows from $I_{1}$ to $I_{2}$.
\end{definition} }

A key ingredient is what we define as a \rev{delta-image.
	\begin{definition}
		Given an image $I$, the associated delta-image $\delta_p(I)$ at the pixel $p$, is the image with the same size as $I$, defined as
		\[ \delta_p\left( I \right)(s) =
		\left\{
		\begin{array}{ll}
		0  & \mbox{if } s \neq p \\
		\sum_j \abs{I(j)} & \mbox{if } s = p .
		\end{array}
		\right. \]
	\end{definition}
	Next, we obtain the EMD of an image with its delta-image.}
\begin{lemma} \label{lemma:EMD}
	Denote by $E = \sum_{s \in \pset} \rev{\abs{I(s)}}$ the total sum of pixel intensities of an image $I$. Then, 
	\begin{equation}
	d_{\text{EMD}}(I, \rev{\delta_p\left(I\right) \left(c\right)}) =  \sum_{s \in \pset}   \rev{I(s)}  d(s,p)/E, 												
	\end{equation}
	\rev{where $c$ is the central pixel (origin) of the image $I$.}
\end{lemma}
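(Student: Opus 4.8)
The plan is to exploit the fact that the target image $\delta_p(I)$ concentrates all of its mass at the single pixel $p$, which leaves essentially no freedom in the transport plan of~\eqref{eqn:EMD_def}. First I would record the total masses of the two images appearing in~\eqref{eqn:EMD_def}: under the paper's non-negativity normalization we have $\sum_{s \in \pset} I(s) = \sum_{s\in\pset}\abs{I(s)} = E$, and by construction $\sum_{s\in\pset}\delta_p(I)(s)=E$ as well. Hence both marginals carry equal total mass $E$, so the coupling constraint in~\eqref{eqn:EMD_def} reads $\sum_{i,j}f_{i,j}=\min(E,E)=E$, and the normalizing denominator of the EMD is exactly $E$.

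Next I would argue that equal total masses force every inequality constraint to hold with equality. Summing the source constraints $\sum_j f_{i,j}\le I(p_i)$ over all $i$ gives $\sum_{i,j}f_{i,j}\le\sum_i I(p_i)=E$; since the coupling constraint already fixes $\sum_{i,j}f_{i,j}=E$, no slack is possible, and therefore $\sum_j f_{i,j}=I(p_i)$ for every source pixel $p_i$. The same reasoning applied to the sink constraints yields $\sum_i f_{i,j}=\delta_p(I)(p_j)$ for every $j$.

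The flow is then pinned down completely. Because $\delta_p(I)(p_j)=0$ for every $p_j\neq p$, the sink equality forces $\sum_i f_{i,j}=0$, and since all flows are non-negative this gives $f_{i,j}=0$ for all $i$ whenever $p_j\neq p$. Thus the only admissible flows are those into $p$, and the source equalities then read $f_{i,p}=I(p_i)$. Consequently the feasible set of~\eqref{eqn:EMD_def} is the single plan $f_{i,p}=I(p_i)$ (all other entries zero); there is nothing left to minimize over. Substituting this unique plan into the objective gives
\[
d_{\text{EMD}}\bigl(I,\delta_p(I)\bigr)
=\frac{\sum_{i} f_{i,p}\,d(p_i,p)}{\sum_{i}f_{i,p}}
=\frac{\sum_{i} I(p_i)\,d(p_i,p)}{E}
=\frac{1}{E}\sum_{s\in\pset} I(s)\,d(s,p),
\]
which is the claimed identity.

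The argument is essentially bookkeeping, so there is no deep obstacle. The one point requiring care---the ``hard part''---is justifying that the inequality source/sink constraints collapse to equalities, i.e.\ that equal total marginal mass leaves no slack in either family of constraints; this is what upgrades a potential optimization into a forced, unique plan. A secondary item to keep straight is the non-negativity normalization, which is exactly what makes $E=\sum_s I(s)=\sum_s\abs{I(s)}$ and ensures the objective's denominator is genuinely $E$ rather than merely the nominal $\min\bigl(\sum_p I_1(p),\sum_p I_2(p)\bigr)$.
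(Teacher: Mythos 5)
Your proposal is correct and follows essentially the same route as the paper's proof: both exploit the fact that the delta-image concentrates all mass at the single pixel $p$, so the constraints of~\eqref{eqn:EMD_def} admit exactly one feasible flow, $f_{i,j_p}=I(s_i)$, which is then substituted into the objective. The only difference is that you spell out the tightness argument (equal total masses leave no slack in the marginal constraints) that the paper states implicitly.
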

\begin{proof}
	By the definition of \rev{$\delta_p\left(I\right)\left(c\right)$}, we get that, of the flow $f_{i,j}$ between \rev{$I$ and $\delta_p\left(I\right)\left(c\right)$}, only $f_{i,j_p}$ is nonzero, where $j_p$ is the index associated with the pixel $p$. Consider $s_i$ to be associated with $i$, the constraints on the flow imply that \rev{$f_{i,j_p} = I(s_i)$}. Then, the lemma follows by the EMD definition~\eqref{eqn:EMD_def}.
\end{proof}

Focusing on the case of a single object and the composed metric $d(x,y)= \left\lceil   \norm{x-s} \right\rceil$ \rev{as the ground distance}, we combine Theorem~\ref{thm:relation2} and Lemma~\ref{lemma:EMD} to deduce:
\begin{corollary}
	Let \rev{$I$ be an image that fully contains a single, noise-free object. Assume that $p_1$ is the center pixel as determined by \rev{the sCM}~\eqref{eqn:surrogate}. Then, under the condition~\eqref{eqn:thm_cond} of Theorem~\ref{thm:relation2}, and for any pixel $p_2 \neq p_1$,}
	\[  d_{\text{EMD}}\rev{(I,\delta_{p_1}\left(I\right)\left(c\right)}) \le d_{\text{EMD}}\rev{(I,\delta_{p_2}\left(I\right)\left(c\right)})  . \]
	In other words, with the above settings, the \rev{sCM} minimizes the EMD  to the delta-image.
\end{corollary}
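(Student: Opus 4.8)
The plan is to show that the sCM center $p_1$ is precisely the pixel minimizing the map $p \mapsto d_{\text{EMD}}(I, \delta_p(I)(c))$, by chaining together the characterizations already established in the excerpt. First I would invoke Lemma~\ref{lemma:EMD} with the composed ground distance $d(x,y) = \left\lceil \norm{x-y} \right\rceil$, which yields
\begin{equation*}
d_{\text{EMD}}(I, \delta_p(I)(c)) = \frac{1}{E} \sum_{s \in \pset} I(s) \left\lceil \norm{s-p} \right\rceil, \qquad E = \sum_{s \in \pset} \abs{I(s)}.
\end{equation*}
Because $I$ is noise-free with non-negative pixel values, $E = \sum_{s \in \pset} I(s)$ is a fixed constant independent of the target pixel $p$. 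Hence minimizing the EMD to the delta-image over $p$ is equivalent to minimizing the GM landscape $\sum_{s \in \pset} I(s) \left\lceil \norm{s-p} \right\rceil$ of~\eqref{eqn:moving_MOM} over $p$.

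Second, I would identify this minimizer. By the definition~\eqref{eqn:moving_MOM} of the GM with respect to the composed metric, the minimizer of $\sum_{s \in \pset} I(s) \left\lceil \norm{s-p} \right\rceil$ is exactly the GM $\mu_1$ of the object. By Theorem~\ref{thm:relation2}, under hypothesis~\eqref{eqn:thm_cond}, this GM coincides with the global minimum of the sCM landscape~\eqref{eqn:surrogate}. But $p_1$ is, by assumption, the pixel determined by the sCM, that is, the global minimizer of that very landscape. Consequently $p_1 = \mu_1$, and so $p_1$ is the pixel that minimizes $\sum_{s \in \pset} I(s) \left\lceil \norm{s-p} \right\rceil$ and therefore also $d_{\text{EMD}}(I, \delta_p(I)(c))$.

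The stated inequality then follows at once: for any $p_2 \neq p_1$,
\begin{equation*}
d_{\text{EMD}}(I, \delta_{p_1}(I)(c)) = \frac{1}{E}\sum_{s \in \pset} I(s)\left\lceil \norm{s-p_1}\right\rceil \le \frac{1}{E}\sum_{s \in \pset} I(s)\left\lceil \norm{s-p_2}\right\rceil = d_{\text{EMD}}(I, \delta_{p_2}(I)(c)),
\end{equation*}
since $p_1$ is the global minimizer of the common objective.

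The main obstacle here is not any hard estimate but careful bookkeeping. I must ensure that the EMD ground distance is the same composed metric $d = \left\lceil \norm{\cdot} \right\rceil$ under which both the GM definition~\eqref{eqn:moving_MOM} and Theorem~\ref{thm:relation2} are stated, so that all three descriptions of the minimizer genuinely refer to a single objective function. I would also verify that $I$ and every delta-image $\delta_p(I)(c)$ share the same total mass $E$, so that the flow constraint $\sum_{i,j} f_{i,j} = E$ is saturated and the normalization factor $1/E$ appearing in Lemma~\ref{lemma:EMD} is genuinely pixel-independent; this is exactly what licenses dropping the $1/E$ factor when comparing minimizers across $p$.
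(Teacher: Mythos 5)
Your proposal is correct and follows exactly the paper's own route: the paper deduces the corollary by combining Lemma~\ref{lemma:EMD} (with the composed ground distance $d(x,y)=\left\lceil\norm{x-y}\right\rceil$, which identifies the EMD to the delta-image with the GM landscape up to the constant factor $1/E$) with Theorem~\ref{thm:relation2} (which identifies that GM with the global minimizer $p_1$ of the sCM landscape). Your additional bookkeeping about the shared total mass and the pixel-independence of $E$ is a sensible, correct elaboration of the same argument.
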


\section{The algorithm} \label{sec:algorithm}

In this section, we describe the implementation of our suggested approach. For convenience, we present a sequential algorithm, focusing on a single input image. In practice, one can easily apply a concurrent version of the algorithm, and process a stack of multiple input images simultaneously.

\subsection{Surrogate function reformulated}

Let $z_p$ be the rotational summation vector around the pixel $p$. The $m$th entry of this vector is defined as
\[ z_p[m] = \sum_{s \in B_m(p) } I(s)  , \quad m=0,\ldots , R .  \]
Then, the induced vector of cumulative sum is
\[ u_p[m] = \sum_{\ell=0}^{m}  z_p[\ell] , \quad m=0,\ldots , R . \]
Recall that we normalize the image $I$ to ensure that all pixel values are non-negative. Therefore, the total accumulated energy of an image $I$ over the disk  $B_R(p)$ is
\[  u_p[R]  = \sum_{\ell=0}^R \sum_{s \in B_\ell(p)} I(s)   . \]
Denote by $P \subset \pset $ the set of pixels $p$ such that $B_R(p)$ is fully contained in the image $I$. Now, our minimization criterion can be expressed as
\begin{equation} \label{eqn:main_criterion}
\sfun = \arg \min_{p \in P} \rev{ \hat{L}_I (p) , \quad \text{ where } \quad \hat{L}_I (p) = \norm{u_p/\emax - \mathbf{1}}_1 .}
\end{equation}
\rev{Here,} $\mathbf{1} = \left(1,\ldots,1\right)$ is the all ones vector, $\norm{\cdot}_1$  is the standard $\ell_1$ norm and $\emax=\underset{p \in P}{\max}\, u_p[R]$. 

\subsection{Algorithm description}

We note that any object can be contained in an image of size $(2R+1) \times (2R+1)$, where $R$ is an upper bound on the radius of the object and the origin of the image coincides with the center of the object. In our case, the location of the center is initially unknown. As the general location of the object is known, we can take as the initial center $p_0$ a point within the object. This guarantees that our initial center is at some distance smaller than $R$ from the true center. It follows that the object is fully contained within an image $I$ of size $ (4R+1)\times (4R+1)$ around our $p_0$\rev{, and we therefore apply the sCM to this image}.

Given the image $I$, our goal  is to find a pixel $p$ that is best centered with respect to the \rev{object's} CM. We do this by computing the scaled version of \rev{the sCM} for each pixel $p \in P$ and then identifying the minimizer\rev{, as seen in~\eqref{eqn:main_criterion}.}
This process includes cropping images around possible center pixels in $P$ and applying rotational averaging. The averaging is done through expansion over a basis of orthogonal, rotationally symmetric functions, and is presented in detail in Section~\ref{subsec:PSWF}. Once the rotational averaging has produced the vector $z_p$ for each $p \in P$, we compute the cumulative sums $u_p$, $\emax$ and, finally, the landscape of\rev{~\eqref{eqn:main_criterion}.} 

We summarize the above process with pseudo-code given in Algorithm~\ref{alg:anchor}. In Line~\ref{alg2:line1} we set the initial grid of potential centers $P$. The search over these centers starts with the main loop in Line~\ref{alg2:line2}.  Rotational averaging is done on Line~\ref{alg2:line4} and $\hat{L}_I (\cdot )$ is calculated on Line~\ref{alg2:line5}. After a search over all potential centers, the point that minimizes \rev{the sCM} is chosen, and the algorithm terminates.

\begin{algorithm}
	\caption{Robust Translational Centering} \label{alg:anchor}
	\begin{algorithmic}[1] \label{alg:nu}
		\REQUIRE  A (non-negative) image $I$, and \sout{an estimated} the particle radius $R$  
		\ENSURE  The chosen center $\mu_s$
		\STATE Define the set $P$ of possible centers. \label{alg2:line1}
		\FORALL{ $p \in P$} \label{alg2:line2}
		\STATE $I_p \gets$ Crop $I$  around the pixel $p$ according to $R$  \label{alg2:line3}
		\STATE $u_p \gets$ Apply rotational averaging over $I_p$   \label{alg2:line4}
		\STATE $\hat{L}_I (p) \gets$ Calculate and store, see~\eqref{eqn:main_criterion} \label{alg2:line5}
		\ENDFOR  \label{alg2:endfor}
		\STATE $\mu_s \gets \arg \min_{p \in P} \hat{L}_I (p) $ 
		\STATE  Return $\mu_s$
	\end{algorithmic}
\end{algorithm}

\subsection{Rotational averaging via steerable basis functions} \label{subsec:PSWF}

When computing a rotational average of an image, it is natural to convert the image into polar coordinates. Averaging directly in Cartesian coordinates will produce an inexact result due to the discretization of the image. Therefore, in our implementation, the image is decomposed into polar-separable building blocks (that is, basis functions of the form $f ( \rho, \theta) = g(\rho) \Phi(\theta)$,  where $\rho$ and $\theta$ are the radial and angular coordinates, respectively). The rotational averaging can now be attained directly from the expansion coefficients that are associated with the zero-frequency of  the angular part, $\Phi$. Once an average is computed, the sum $z_p$ is easily recovered.

The above radial coordinate system is supported by many possible bases. One popular example is  the 2-D prolates spheroidal wave functions (PSWFs). PSWFs have been known for more than a century, but it was not until the 1960s that they received a comprehensive analysis in several seminal papers~\cite{landau1961prolate, landau1962prolate, slepian1964prolate, slepian1961prolate}. The PSWFs functions rise as a solution to the optimal concentration problem, which strives to identify the most concentrated function inside a given disk out of all band-limited functions  defined on the plane. This optimality property allows a very efficient representation of band-limited functions mainly supported inside a disk or functions on a disk that are ``almost" band-limited. Since we examine objects which are concentrated on a compact disk and as images are naturally band-limited, we fit precisely the central assumption in the backbone of the PSWF construction. Therefore, following~\cite{landa2017approximation}, we use orthogonal 2-D PSWFs for calculating the rotational average of a projection image $I_p$ by expanding over basis functions of the form 
\begin{equation} \label{eqn:prolate_basis_function}
\Psi_{n,m}^R(\rho,\theta) =  
\begin{cases}
\frac{1}{\sqrt{2\pi} } g^R_{m,n}(\rho)e^{im\phi}  & \rho<R, \\
0 & \text{otherwise}. 
\end{cases}
\end{equation}
The rotational average is then
\begin{equation} \label{eqn:PSWF_expansion}
\mathcal{A}_I (\rho) = \sum_{n=0}^{n_R} \alpha_n  \Psi_{n,0}^R(\rho)  ,
\end{equation}
where\rev{, from orthogonality, the expansion coefficients take the form} $ \alpha_n = \int_{\rho \le R } I(\rho)  \Psi_{n,0}(\rho) d\rho $ \rev{, and are calculated via numerical integration using quadratures~\cite{shkolnisky2007prolate}. T}he radial truncation parameter $n_R$ is determined by the support size $R$ \rev{and its size is of $\mathcal{O}(R)$, see details in~}\cite{landa2017approximation}. Note that by~\eqref{eqn:prolate_basis_function}, the expansion~\eqref{eqn:PSWF_expansion} does not depend on $\theta$ and is therefore rotationally symmetric, as required.

\subsection{Computational complexity}

We begin our complexity analysis with some notations. \rev{We denote the size of the input image $I$ by $N \times N$, the search grid by $G$, and the number of pixels on this grid (potential centers) by $\abs{G}$}. Each potential center is tested using a subimage of size $(2R+1) \times (2R+1)$. The following analysis is  given as a function of the parameters $N$, $R$, and $\abs{G}$.

The main loop in the algorithm (Lines ~\ref{alg2:line2}--\ref{alg2:endfor}) runs over \rev{$G$}, which is made of $\abs{G}$ potential centers. In Line~\ref{alg2:line3}, we crop the image. \rev{This step} is quadratic in $R$ for each potential center\rev{, and} yields $\mathcal{O}(\abs{G} R^2)$ operations. \rev{The} rotational averaging of Line~\ref{alg2:line4} is performed as explained in Section~\ref{subsec:PSWF} which consists of the following three steps. First, the transform that obtains the coefficients of the zero angular frequencies is applied. Next, the resulting low-frequency approximation~\eqref{eqn:PSWF_expansion} is formed, and, lastly,  the  rotationally symmetric image is sampled to obtain a representative radial vector $z_p$. 

The transform includes \rev{reading the cropped} image which is quadratic in $R$ (and linear in the number of pixels of the cropped image), that is $\mathcal{O}(R^2)$. We have $\mathcal{O}(R)$ radial basis functions, all calculated in advance. The coefficients are \rev{computed via} inner product in a total complexity of $\mathcal{O}(\abs{G} R^3)$ operations. Note that in practice, most of the above-mentioned work can be carried out in parallel. \rev{The} evaluation of the low-frequency approximation~\eqref{eqn:PSWF_expansion} is done in an economical way; recall that any such image is rotationally symmetric, and thus we have to evaluate it over a single radial direction\rev{, that is, using} $\mathcal{O}(R)$ coordinates \rev{ per } basis function \rev{or, in total,} $\mathcal{O}(R^2)$ operations. For the full algorithm, this task costs $\mathcal{O}(\abs{G} R^2)$. Finally,  calculating the metric~\eqref{eqn:main_criterion}, we once again act on vectors of size $\mathcal{O}(R)$, which yields $\mathcal{O}(\abs{G} R)$ operations. 

In conclusion, the complexity of the algorithm is of leading order 
\[  \mathcal{O} \left( \abs{G}  R^3 \right) , \] 
where the next leading term is of $\mathcal{O}(\abs{G} R^2)$, and $\abs{G}$ is bounded by $(2R+1)\times (2R+1)$. 

\section{Numerical examples} \label{sec:numerics}

We present a few numerical examples to illustrate the performance of our centering method.  We begin, in Section~\ref{subsec:synthetic}, with an example of a synthetic data image. In Section ~\ref{subsec:real_data} we demonstrate the performance of our method on experimental data obtained from the Electron Microscopy Public Image Archive (EMPIAR) \cite{iudin2016empiar}. Specifically, we test our centering method on datasets of  80S ribosome, $\beta$-galactosidase and TRPV1 macromolecules.

In the synthetic experiments, the noise is additive. We therefore measure the level of noise in a given image $I$ using the following signal-to-noise
ratio (SNR),
\begin{equation}  \label{eqn:snr}
\text{SNR}\left( I \right) = \norm{I_c}^2/\norm{\varepsilon}^2 ,
\end{equation}
where $I_c$ is the clean image and $\varepsilon = I- I_c$ is the noise added to $I_c$.

\subsection{Synthetic demonstration}
\label{subsec:synthetic}

We start with a comparison of our algorithm and the standard method of cross-correlation. While  cross-correlation  is a traditional method, it is still the basis for many of today's algorithms. In this method, we compute the shift $s^\ast$ from the center of the  template as 
\[  s^\ast =   \arg \max_{s = (s_1,s_2)}  \sum_{i,j} I_{R}(i,j) I(i+s_1,j+s_2)  , \]
where $I$ is the test image and $I_R$ is the template. That is, for each test image we find a shift from the template's center by identifying the peak of the cross-correlation function. 

Our experimental setup is as follows; we use a centered image of a hedgehog, including its background, as the basic template $I_R$. The size of this image is $101 \times 101$. The image itself is restricted to a disk, as seen in  Figure~\ref{fig:synthcomparison1}. Next, we create our test image $I$. This image, shown in the leftmost column of Figure~\ref{fig:synthcomparison1}, is of size $211 \times 211$. The reference $I_R$ is embedded within $I$ in an arbitrary deviation of about $10\%$ along the horizontal axis and $15\%$ along the vertical axis from the origin of $I$. Next, we contaminate the test image (left picture in Figure~\ref{fig:synthcomparison1}) \rev{with colored added noise that has a spectrum which decays like $1/\sqrt{1+\rho^2}$ over the radial direction $\rho$. The noise levels vary between SNR of $1/2$ and $1/200$.}  The input to our algorithm includes the noisy test image and the radius of our object (the hedgehog), which is $50$ pixels. The output is the estimated CM for this noisy image. We compare our method to the cross-correlation algorithm. As cross-correlaiton requires a template, and to obtain a  realistic setting, we provide four different references as templates. The first two references, presented in the third and fourth columns of Figure~\ref{fig:synthcomparison1}, consist of noisy versions of the hedgehog image, where the corrupting noise is  Gaussian and iid. One of the noisy references has SNR of approximately $2.5$, and the other has SNR of approximately $1/2$.  The third reference image is a low-pass filtered version of the first reference. The low-pass filter disposes of much of the iid Gaussian noise, yielding a template with SNR of approximately $6$. Our last template is simply a Gaussian with width determined by the known width of the hedgehog. The low-pass template and the Gaussian reference are presented in the rightmost columns of Figure~\ref{fig:synthcomparison1}. Note that, in this example, the template image is rotationally aligned with the test image. Thus, we only need to seek the translation and not to consider rotating the template. In this perspective, the setting is optimal for the cross-correlation method. 

Since the true CM of $I$ is known \rev{and is identical in this test image to the GM}, we compute the sum of pixel deviation of each estimated center from this true center. For robustness, we repeat each test \rev{ten} times and average over the pixel deviations. The results are presented in Figure~\ref{fig:synthcomparison2}. Our centering method, which is represented by a solid blue line, outperforms cross-correlation \rev{methods}, and retains the center throughout the entire range of noise. While our centering method produces better results, we note that it is also slightly higher in runtime. Specifically, the runtime for centering the silhouette is $0.04$ seconds for cross-correlation and $1.65$ seconds for our method to run on a 2.6 GHz Intel Core i7 CPU with four cores and 16 GB of memory.
\begin{figure}[t]
	\centering
	\includegraphics[width=.95\textwidth]{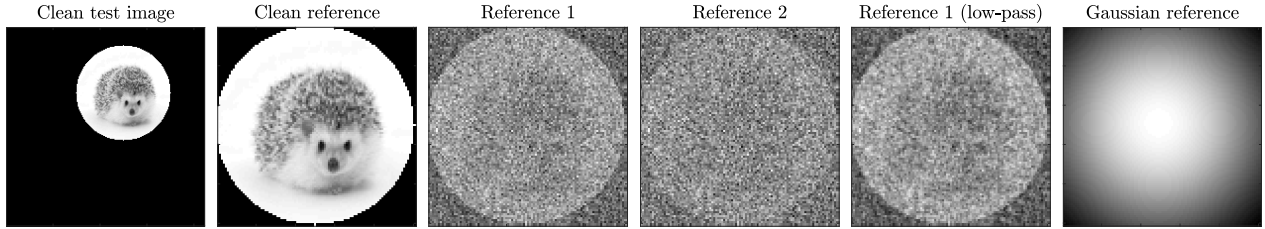}
	\caption{Data for the synthetic comparison. We present, from left to right, the clean non-centered test image, the clean reference image, and the four templates for the cross-correlation method. The references consist of three noisy copies of the centered hedgehog  image contaminated with varying noise levels and a Gaussian reference.}
	\label{fig:synthcomparison1}
\end{figure}

\begin{figure}[t]
	\centering
	\includegraphics[width=0.45\linewidth]{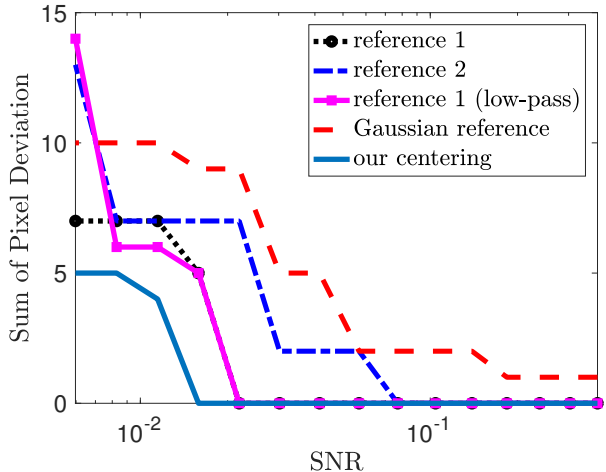}
	\caption{Comparison between performances of cross-correlation and our centering method. The comparison is performed over noisy versions of the hedgehog test image (see Figure~\ref{fig:synthcomparison1}). We measure the error as the sum of pixel deviation (in each axis) from the ground truth shift. As shown, the performance of the cross-correlation is profoundly affected by the quality of the template and is outperformed by our centering method.}
	\label{fig:synthcomparison2}
\end{figure}

Additionally, we use the hedgehog silhouette to demonstrate the connection between the GM (see Definition~\ref{def:GM}) and \rev{the sCM} (see Definition~\ref{def:surrogate}). In Section~\ref{sec:P3DA}, we focused on the noise-free setting. Now, in Figure~\ref{fig:diff_LS} we present  a numerical test that visually shows the effect of noise on the landscape of the local GM~\eqref{eqn:gm_landscape_mod} and \rev{the sCM}~\eqref{eqn:surrogate_landscape}. This comparison illustrates that in the presence of noise, the landscape of the GM dramatically varies (see the middle column). In particular, the region of minimum value stretches across many areas of the image. Thus, choosing the point that leads to a minimum may result in a substantial deviation from the real median of mass. On the other hand, the overall landscape of \rev{the sCM} remains similar to the noiseless case (see the rightmost column), and the point that leads to the minimum provides a reasonable estimation of the true CM. \rev{In Figure~\ref{fig:diff_LS2} we repeat this numerical test for the case of two silhouettes.}

\begin{figure}[!htbp]
	\captionsetup[subfigure]{width=0.85\textwidth}
	\centering
	\begin{subfigure}[t]{0.3\textwidth}
		\centering
		\includegraphics[width=\textwidth]{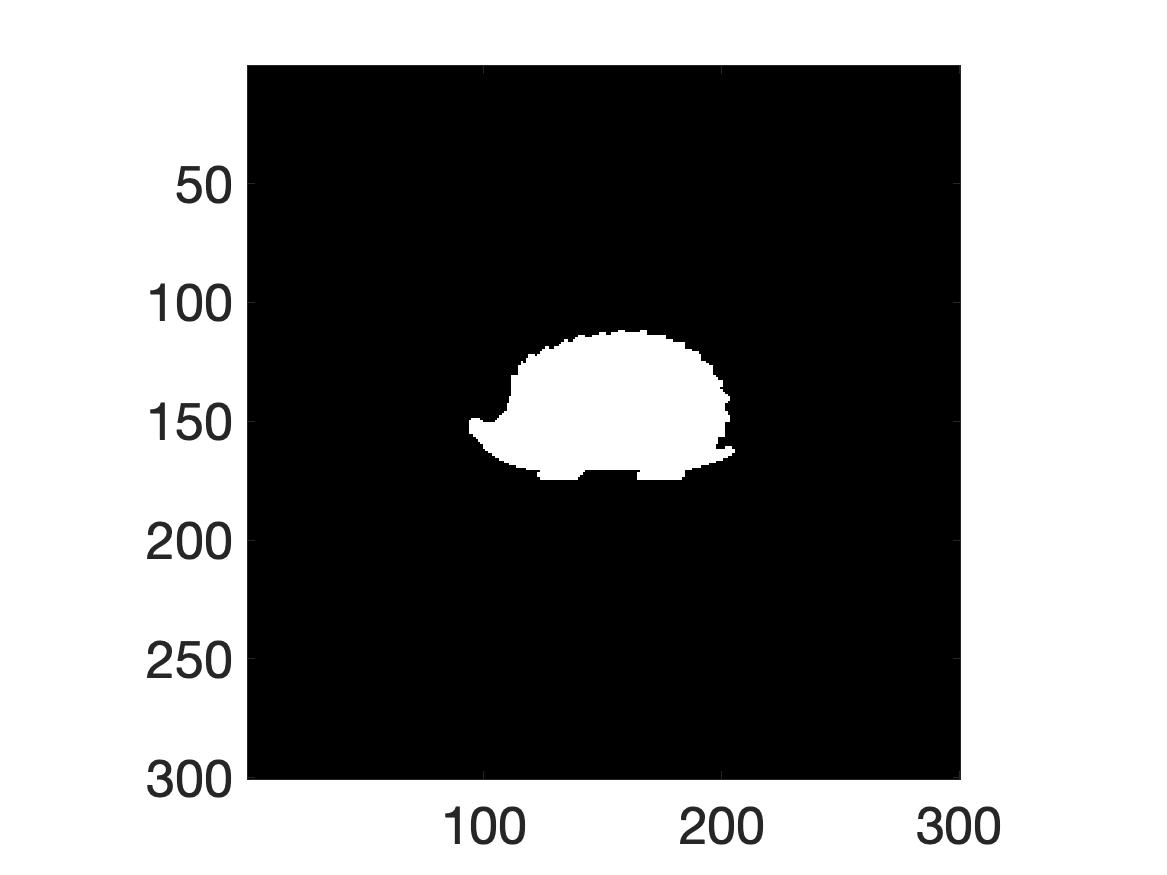}
	\end{subfigure}
	\begin{subfigure}[t]{0.3\textwidth}
		\centering
		\includegraphics[width=\textwidth]{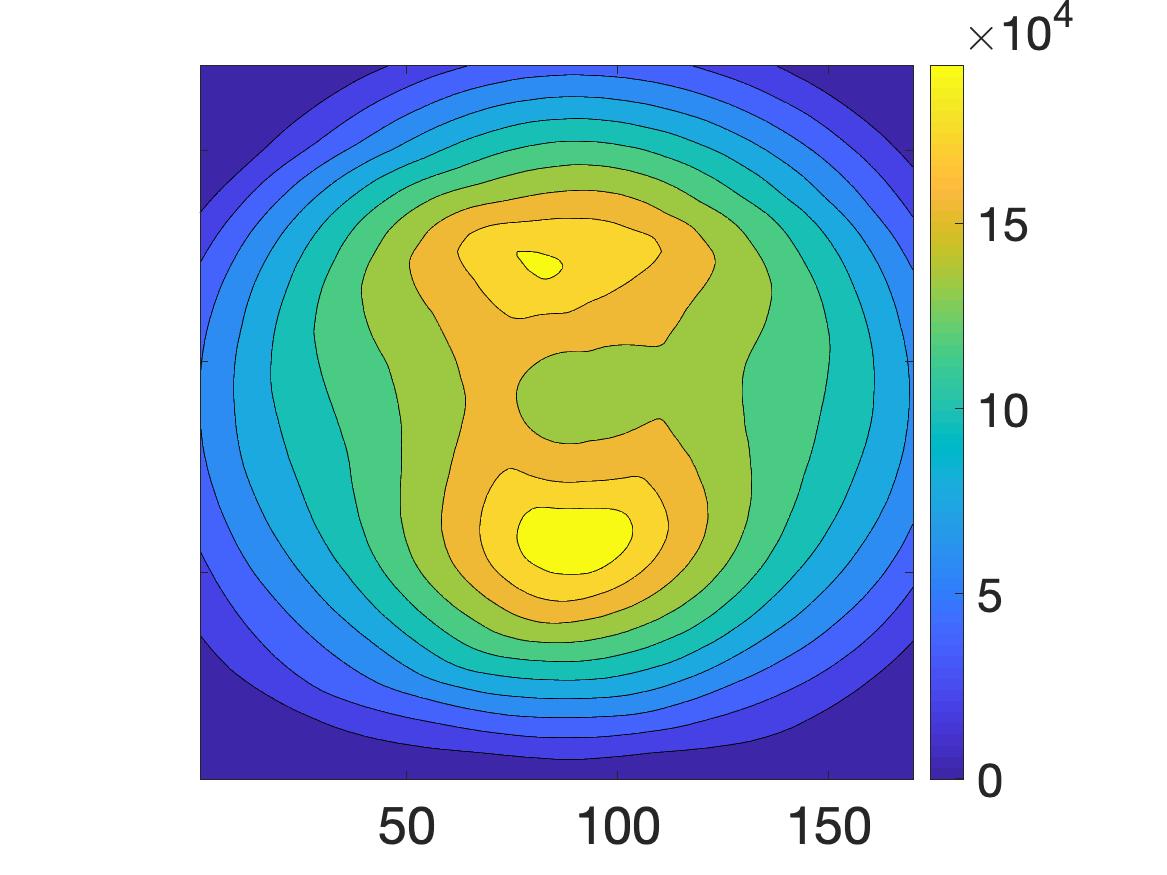}
	\end{subfigure}
	\begin{subfigure}[t]{0.3\textwidth}
		\centering
		\includegraphics[width=\textwidth]{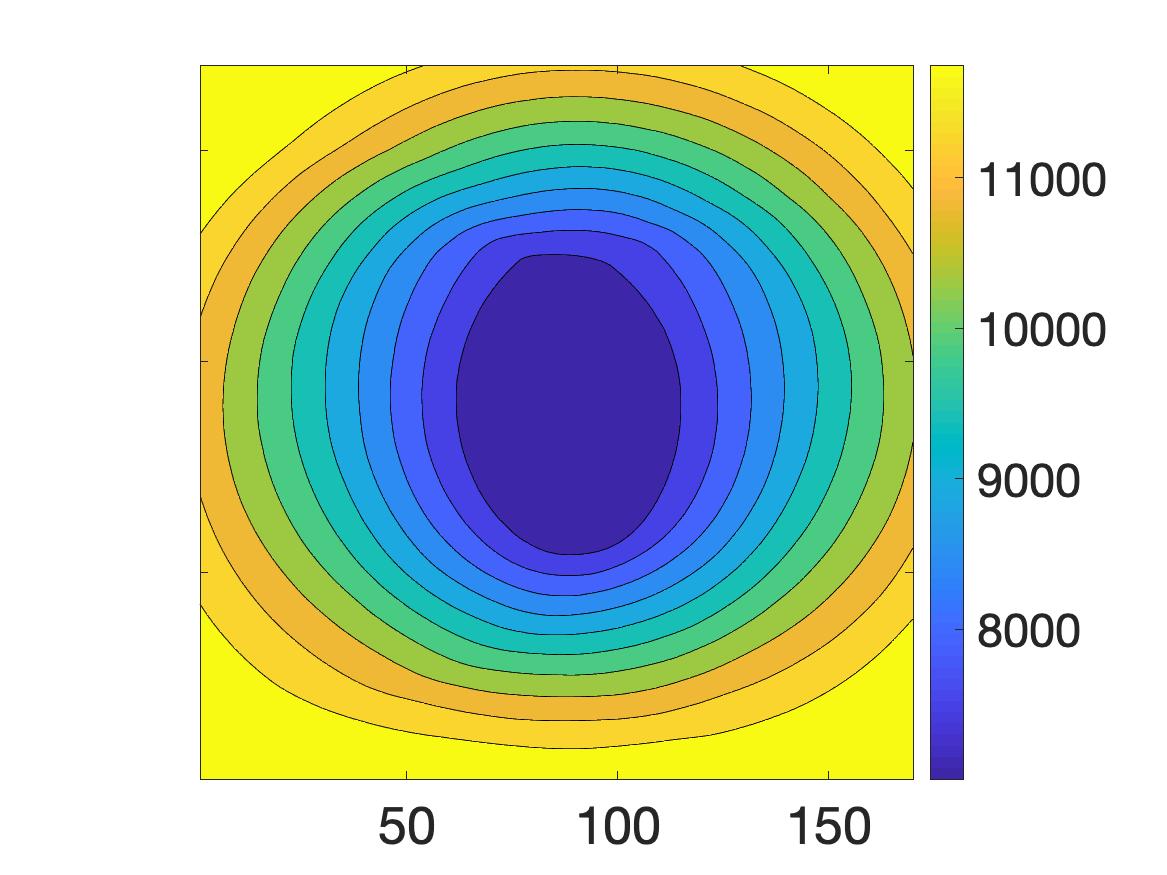}
	\end{subfigure} \\
	\begin{subfigure}[t]{0.3\textwidth}
		\centering
		\includegraphics[width=\textwidth]{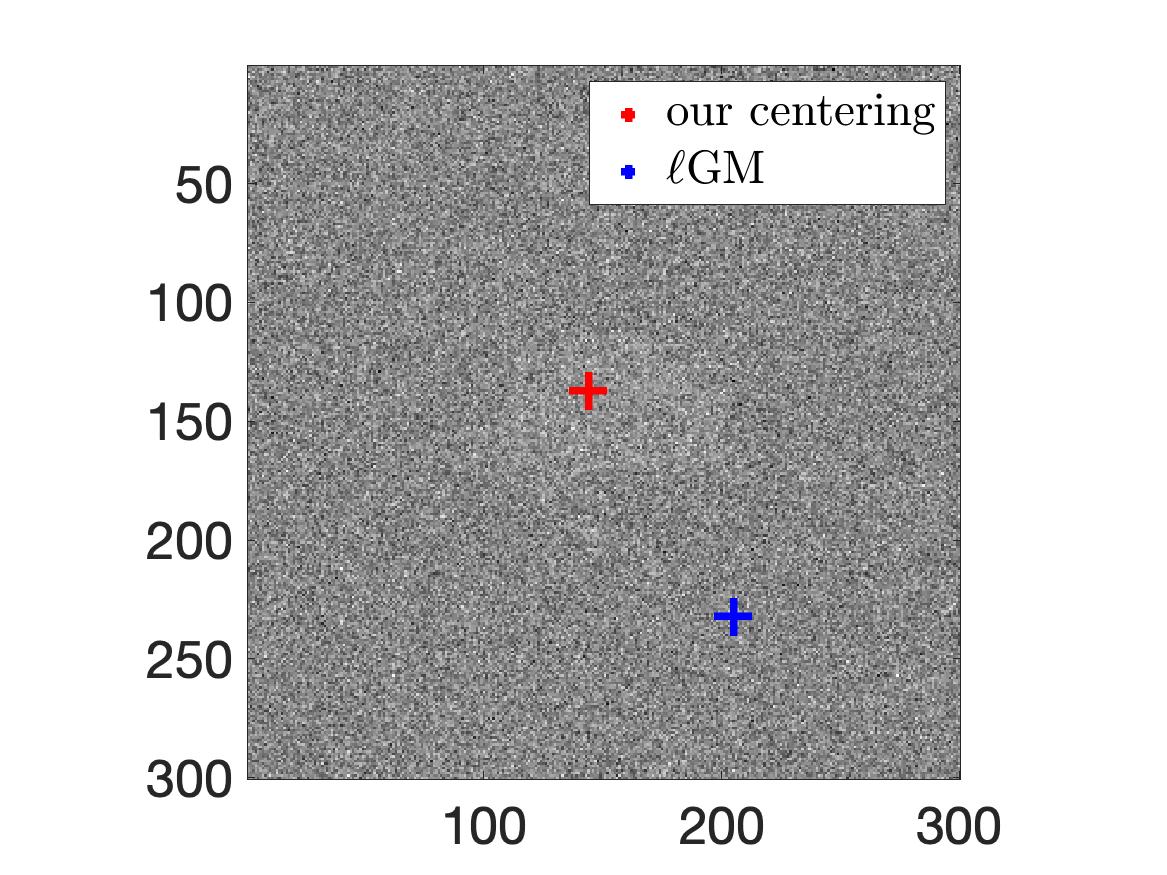}
	\end{subfigure}
	\begin{subfigure}[t]{0.3\textwidth}
		\centering
		\includegraphics[width=\textwidth]{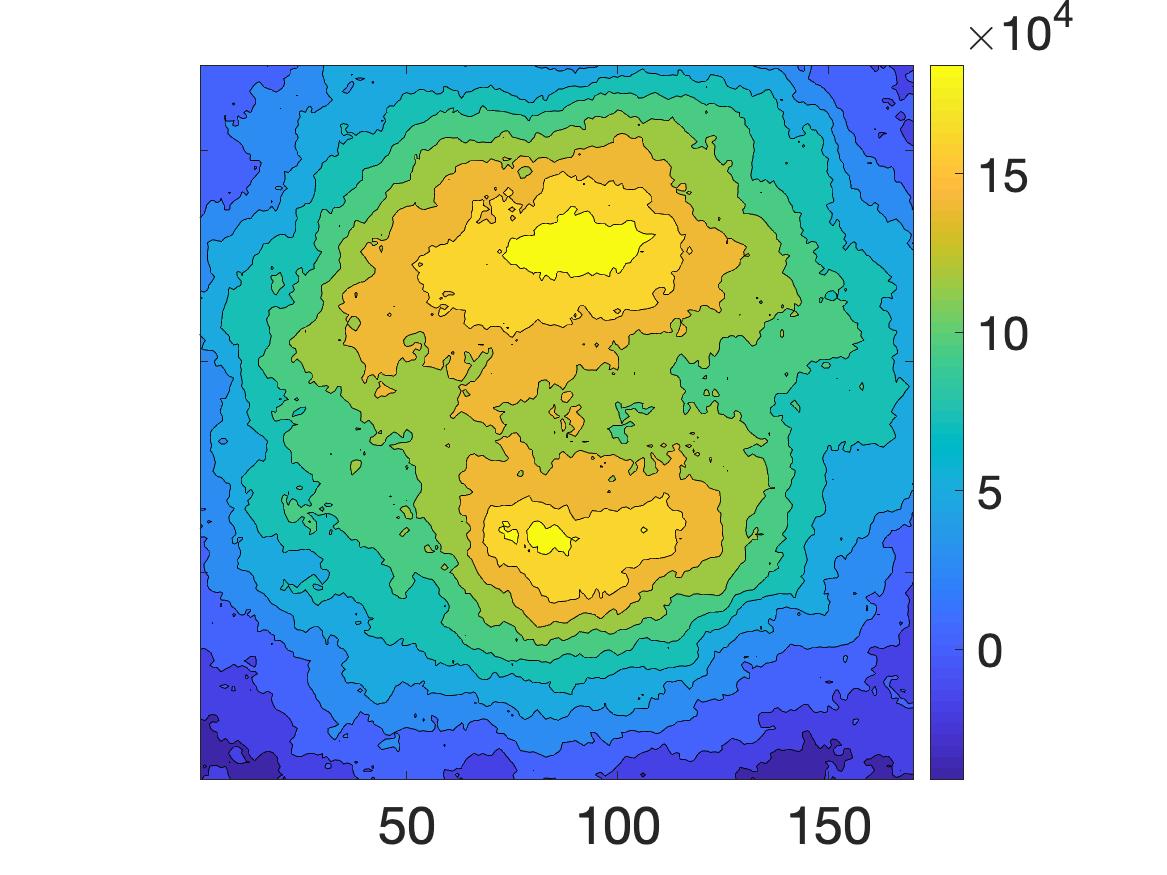}
	\end{subfigure}
	\begin{subfigure}[t]{0.3\textwidth}
		\centering
		\includegraphics[width=\textwidth]{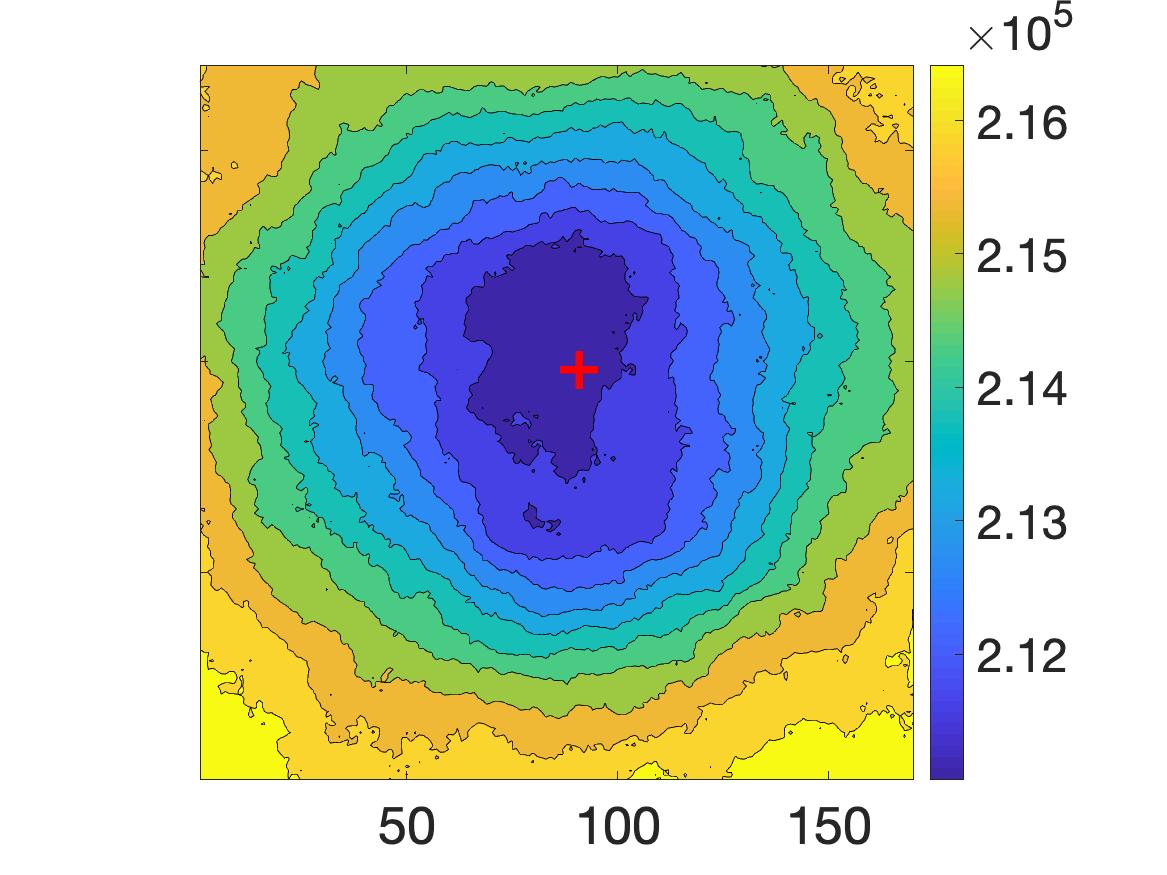}
	\end{subfigure}
	\caption{Two examples of comparison between the landscapes of the local GM, as appears in~\eqref{eqn:gm_landscape_mod} versus the suggested cost function of~\eqref{eqn:surrogate_landscape}. The left upper image includes the clean hedgehog silhouette, and beneath it is a noisy version with SNR $= 1/10$. The maps of the level set in the case of local GM ($\ell$GM) appear in the middle column. While a unique minimizer is available in the noise-free setting, with noise, many minima appear as the landscape varies between areas that are close to the correct GM and regions far away. On the other hand, our centering cost function has a unique minimizer, as seen in the images on the rightmost column in both cases. The minimum in the noisy case is marked by a plus sign (bottom right picture).}
	\label{fig:diff_LS}
\end{figure}

\begin{figure}[!htbp]
	\captionsetup[subfigure]{width=0.85\textwidth}
	\centering
	\begin{subfigure}[t]{0.3\textwidth}
		\centering
		\includegraphics[width=\textwidth]{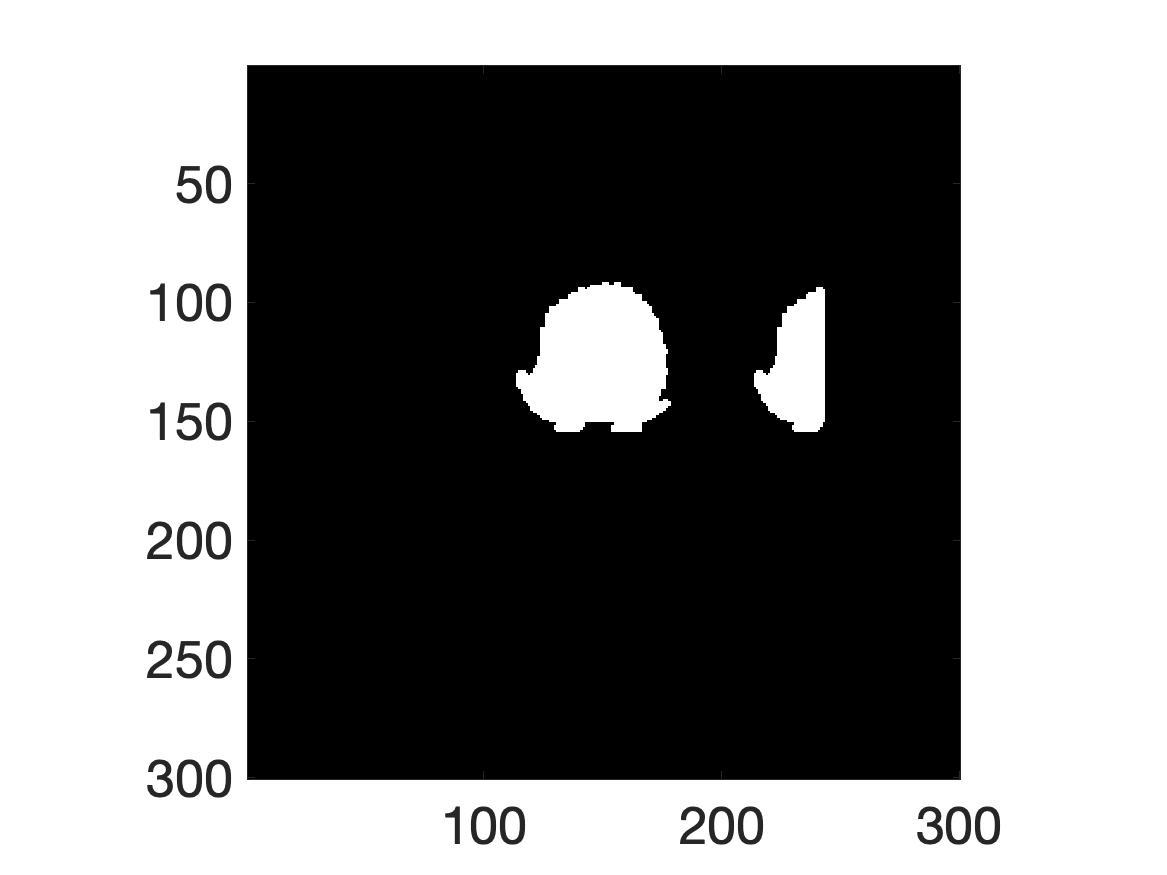}
	\end{subfigure}
	\begin{subfigure}[t]{0.3\textwidth}
		\centering
		\includegraphics[width=\textwidth]{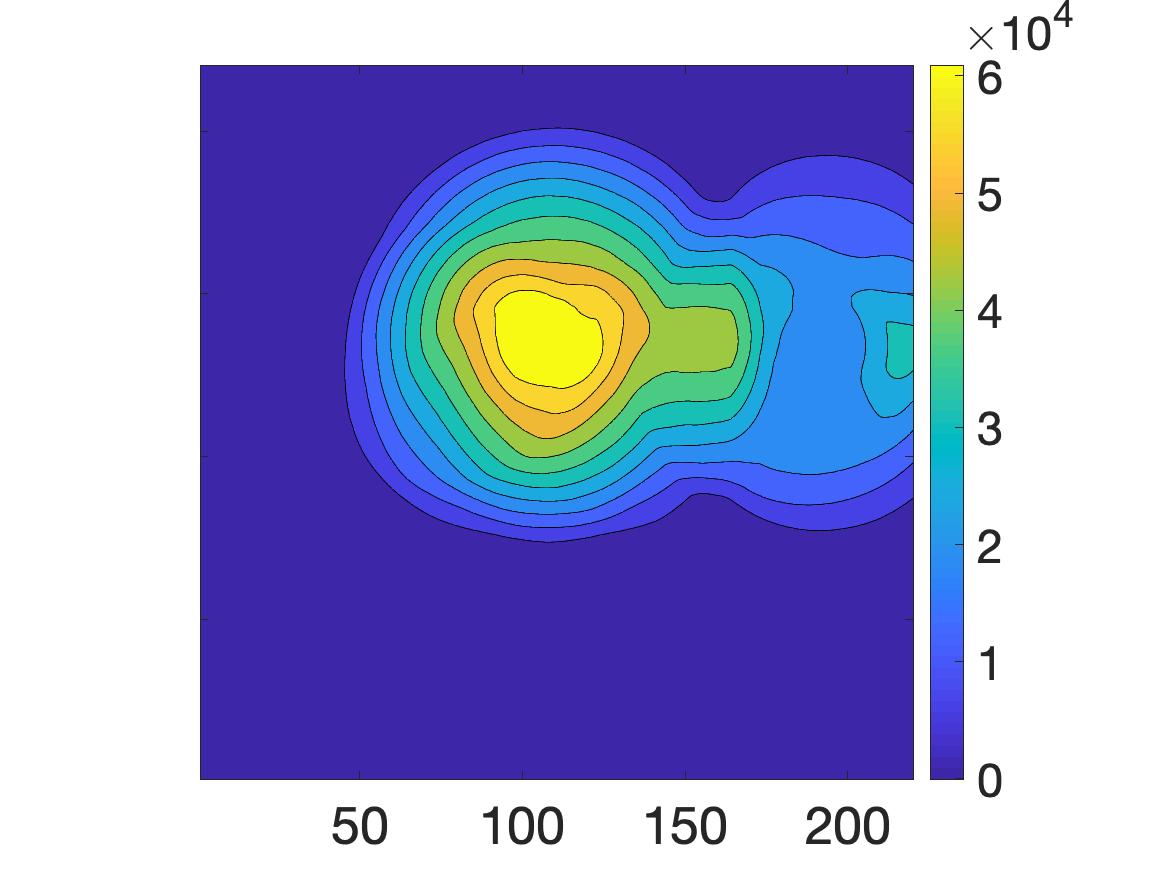}
	\end{subfigure}
	\begin{subfigure}[t]{0.3\textwidth}
		\centering
		\includegraphics[width=\textwidth]{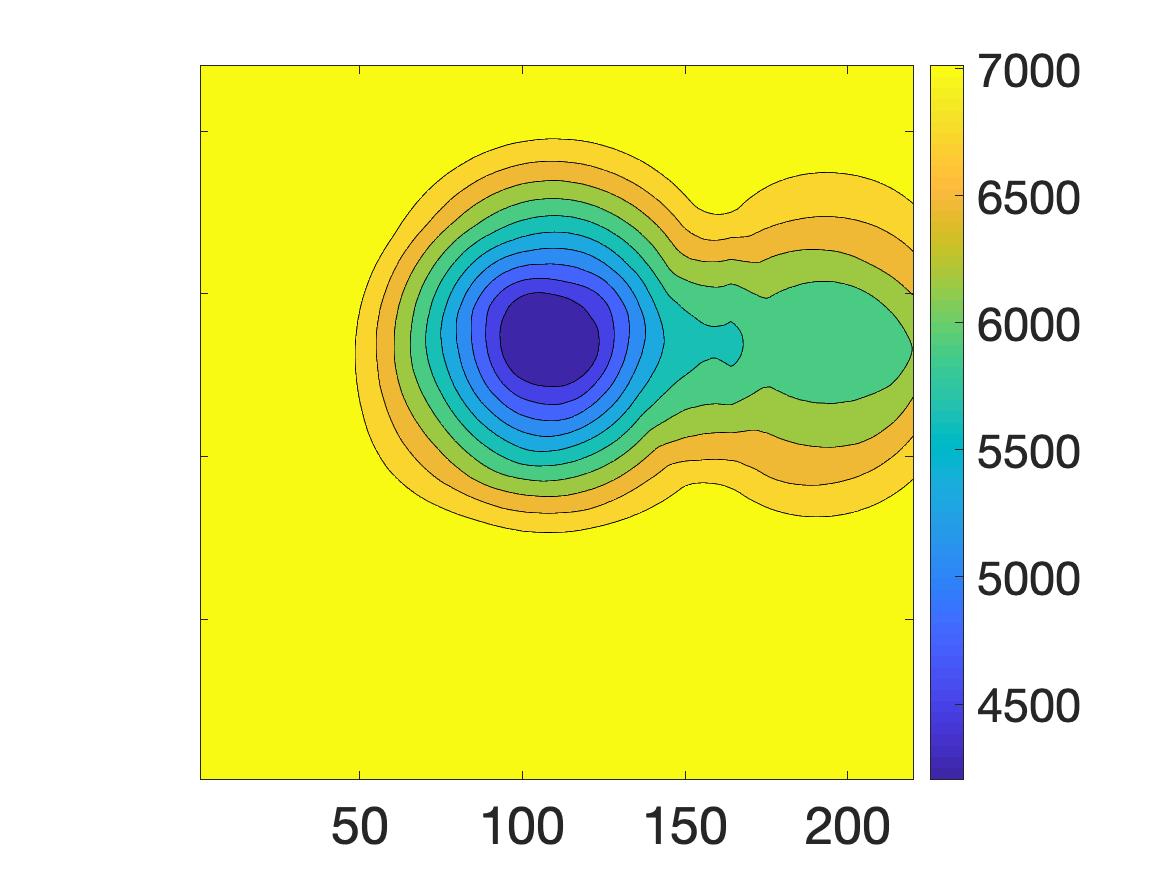}
	\end{subfigure} \\
	\begin{subfigure}[t]{0.3\textwidth}
		\centering
		\includegraphics[width=\textwidth]{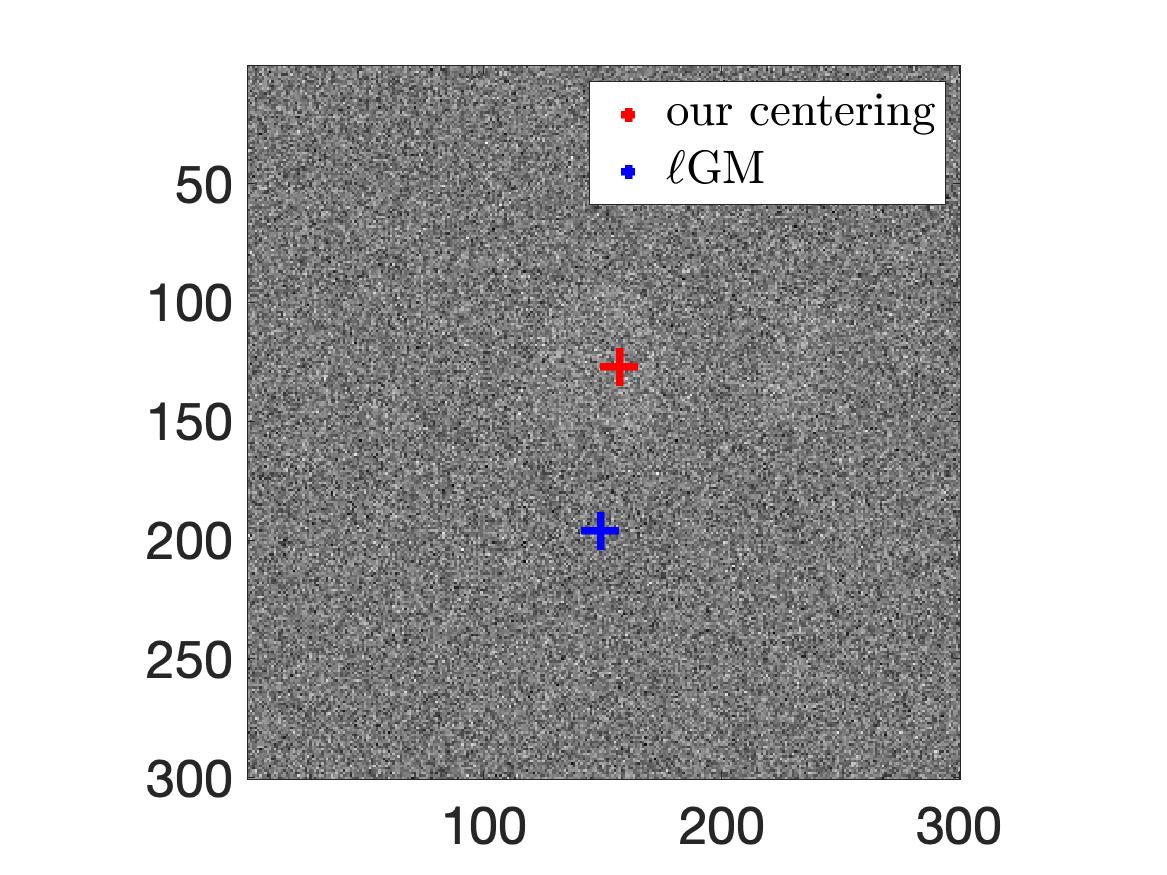}
	\end{subfigure}
	\begin{subfigure}[t]{0.3\textwidth}
		\centering
		\includegraphics[width=\textwidth]{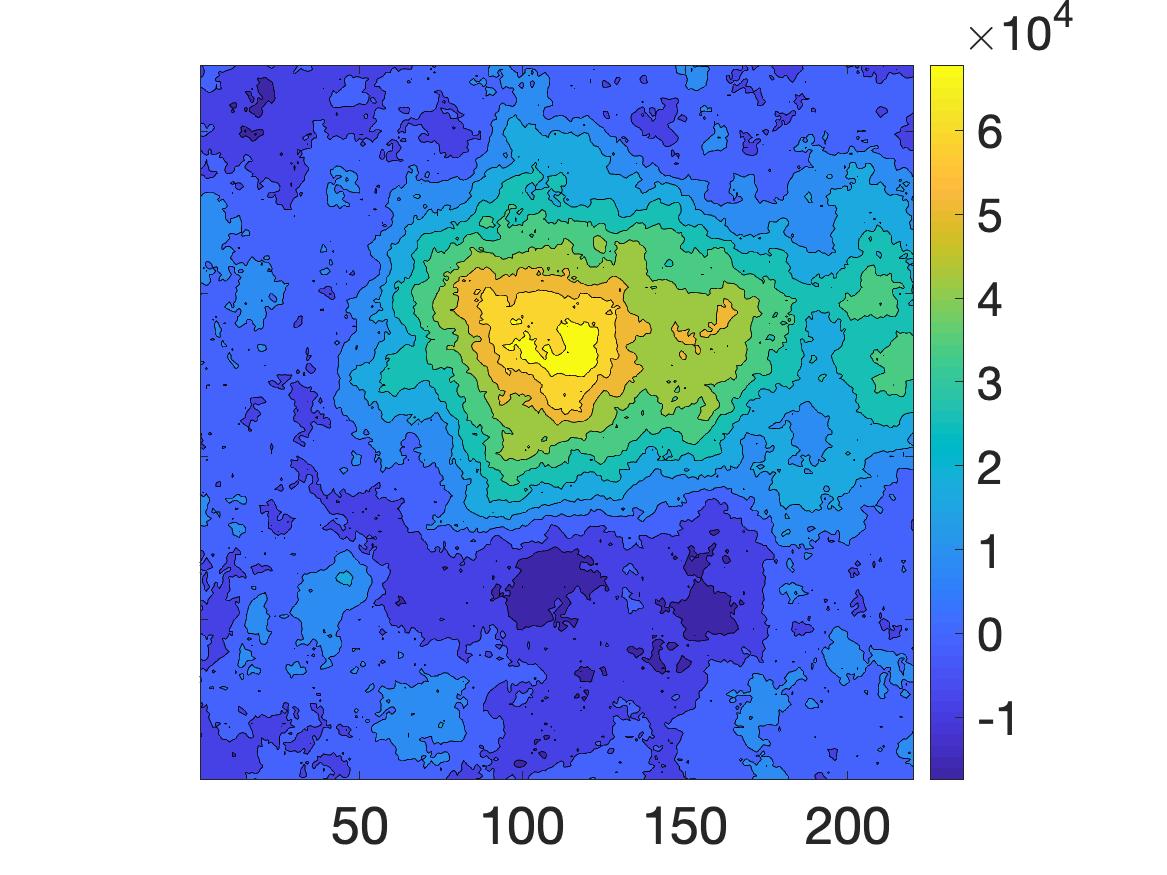}
	\end{subfigure}
	\begin{subfigure}[t]{0.3\textwidth}
		\centering
		\includegraphics[width=\textwidth]{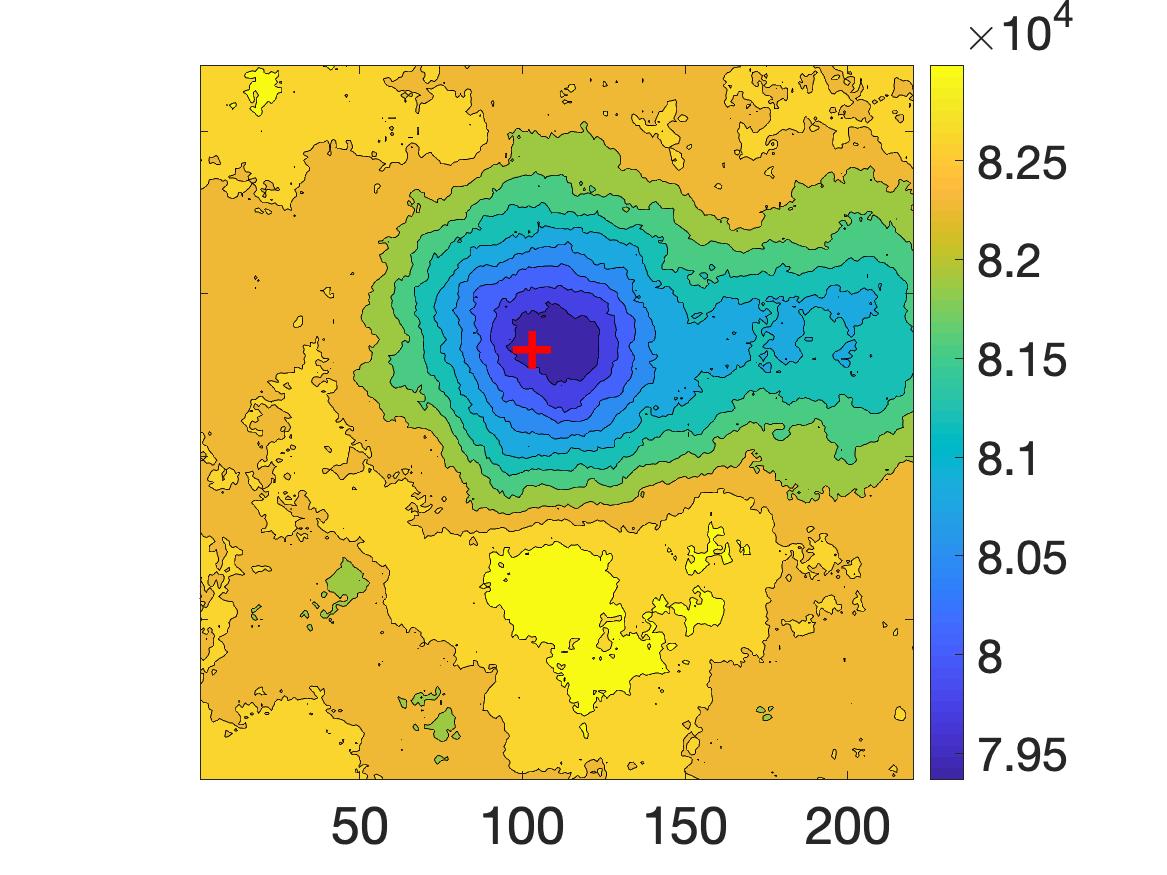}
	\end{subfigure}
	\caption{\rev{Two examples of comparison between the landscapes of the local GM, as appears in~\eqref{eqn:gm_landscape_mod} versus the suggested cost function of~\eqref{eqn:surrogate_landscape}. The left upper image includes a clean hedgehog silhouette and a partial hedgehog silhouette, and beneath it is a noisy version with SNR $= 1/10$. The maps of the level set in the case of local GM appear in the middle column. In the noise-free setting, the global minimum is far from the silhouettes. Additionally,, the area between both silhouettes contains the local minimum. When noise is added, once again many spurious minima appear as the landscape varies between areas that are close to the correct GM and regions far away. On the other hand, our centering cost function has a unique minimizer, as seen in the images on the rightmost column in both cases. The minimum in the noisy case is marked by a plus sign (bottom right picture).}}
	\label{fig:diff_LS2}
\end{figure}

\subsection{Cryo-electron microscopy} \label{subsec:real_data}

Before presenting our numerical results, we provide a short background of the centering problem in cryo-EM, \rev{which we use as a case study in the following}. 

In single-particle cryo-EM, data is acquired by embedding many instances of a particle in vitreous ice and projecting along the axis perpendicular to the ice. Let $\phi \colon \mathbb{R}^3 \to \mathbb{R}$ be the Coulomb potential of the 3-D volume (particle). We define the projection operator $\mathcal{P} \colon \mathbb{R}^3 \rightarrow \mathbb{R}^2$ as
\begin{equation*}
\mathcal{P}\phi(x_1,x_2) := \int_{-\infty}^{\infty} \phi(x_1,x_2,x_3) \, dx_3.
\end{equation*}

Each instance of the particle is captured in some unknown rotation. We denote the rotation applied to the $j$th particle instance as $R_j$. We further denote the rotated particle as $R_j\phi$. The $j$th projection image $I_j$ is formed by tomographically projecting $R_j\phi$. This projection is filtered by the point spread function~\cite{Mindell2003ctffind3, Rohou2015ctffind4, turovnova2017efficient}, which is denoted by $h_j$. Furthermore, each projection image is contaminated by high levels of noise mainly due to an inherent restriction on the number of imaging electrons.

Mathematically, the image formation model of $I_j$ is \cite{bhamre2016formation, frank2006threeB}
\begin{equation} \label{eqn:model_formation}
I_j = h_j * \mathcal{P}\left(R_j\phi \right) + \varepsilon_j , \quad R_j \in \so{3}, \quad  j=1,\ldots,n .  
\end{equation}
Here, $\varepsilon_j$ is a random field modeling the additive noise term and $ \so{3}$ is the 3-D special orthogonal group consisting of orientation-preserving rotations. While the shot noise  follows a Poisson distribution, for discretized projection images of size $N\times N$, it is often modeled as a random field $\varepsilon_j \sim \mathcal{N}(0,\sigma^2 I_{N^2})$, $ j=1,\ldots,n$.  Our algorithm can operate under this and other noise models.

We assume the image has been corrected for $h$ (this is known as CTF correction). Thus, the image formation model is
\begin{equation} \label{eqn:model_images}
I_j =  \mathcal{P}\left(R_j\phi \right) + e_j . 
\end{equation}
One method of such correction is called phase-flipping. This method 
preserves the noise statistics. Therefore, when $\varepsilon_j$ has a Gaussian distribution, so will $e_j$.

So far, we have assumed that all projection images $I_1,\dots, I_n$ are perfectly aligned. In practice, this is not the case.  Rather, many models use an additional translation operator to describe the projections. The challenge of reverting to the shift-free model of~\eqref{eqn:model_images} is what we refer to as ``\textit{the centering problem in cryo-EM}." 

This alignment problem indicates that centering should be applied individually on each image in such a way that it will be possible to register it, up to a global rotation, with the 3-D volume $\phi$. A different perspective is given in the reciprocal space; by the Fourier slice theorem, the Fourier transform of each image equals to a slice in the 3-D Fourier transform of the volume. A translation that acts on the projection image modulates its Fourier transform. Therefore, the Fourier transform of unaligned projection images equals to slices in different modulations of the 3-D reciprocal space of $\phi$. 
Centering the images means modulating their Fourier transform so the resulting projections will, in reciprocal space, form slices that originate from the same 3-D Fourier volume. 

To exemplify the difficulty of the centering problem, we present in Figure~\ref{fig:centering_problem} an  experimental image, called a micrograph, which is a tomographic projection of a section of the ice and the many particles embedded within (each in unknown location and with an unknown rotation). Additionally, we present individual projection images selected out of the micrograph using the APPLE-picker~\cite{heimowitz2018apple}. Visually identifying the center of each projection is a daunting task. Particle pickers, such as the APPLE-picker, will provide many roughly centered projection images as well as other projection images which are poorly centered.

\begin{figure}
	\begin{center}
		{\includegraphics[height=0.25\linewidth]{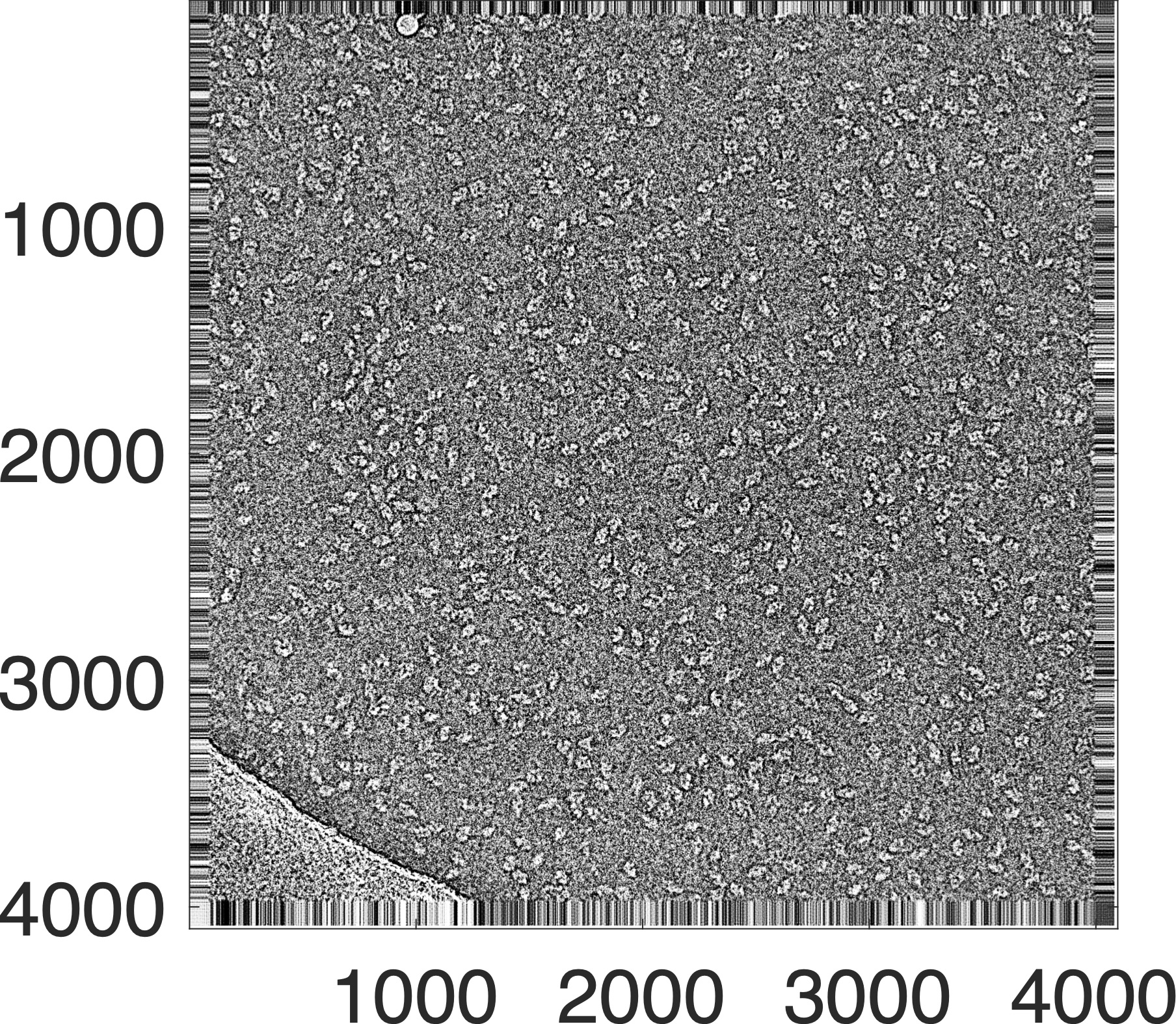}}
		\hspace{1cm}
		{\includegraphics[height=0.25\linewidth]{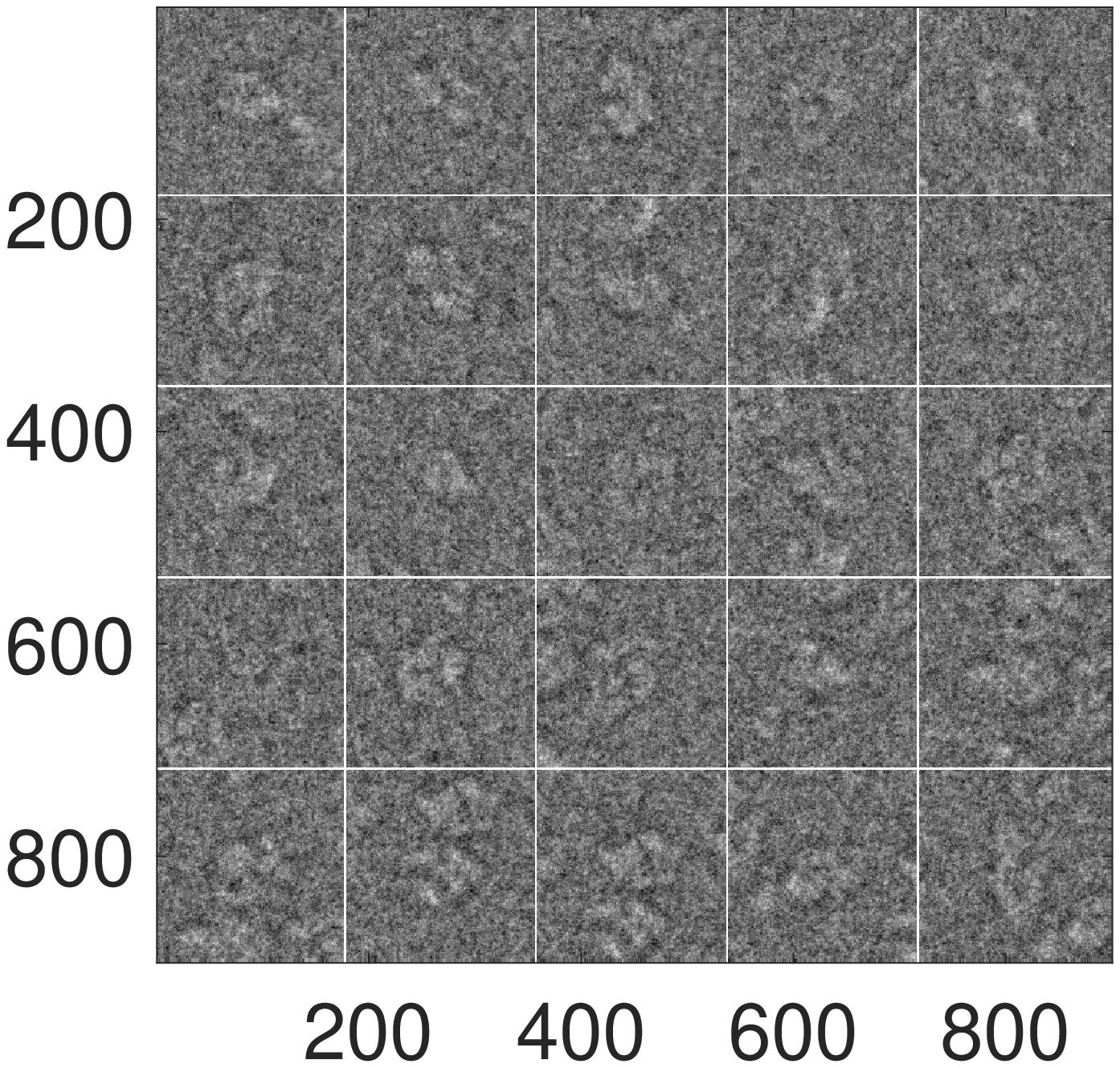}}
	\end{center}
	\caption{Example of experimental images from the EMPIAR-10017~\cite{scheres2015relion} publicly available dataset. On the left we present one micrograph, which is the raw data containing  over $600$ individual projection images. On the right we present individual projection images extracted from the micrograph.}
	\label{fig:centering_problem}
\end{figure}

Currently, centering of projection images is typically done during 2-D classification and 3-D refinement. This is problematic as many reconstruction methods assume, implicitly or explicitly, that the projection images outputted by the particle picker are (all) roughly centered. This assumption is explicitly stated in works on 2-D classification~\cite{sigworth1998maximum} and implicitly manifested in the assumption that the projection exists only within a circle of some radius within the projection images. This assumption is used, e.g., for normalizing noise~\cite{scheres2012relion} or when expanding projection images over a steerable basis with a finite number of radial frequencies~\cite{landa2017approximation}. Another manifestations of the aforementioned assumption appears in the initial modeling stage, as the initial model is created prior to translational alignment. The last manifestation we mention appears during 3-D refinement, as only a limited amount of translations are considered~\cite{scheres2012relion, sigworth2016principles}.  Older modeling methods contain this assumption as well. For example, in 3-D ab-initio modeling method of~\cite{goncharov1988determination}, the statistical moments of a 3-D object are estimated from the moments of its 2-D projection images. Once again, for the moments to be meaningful, the projection images must be centered. 

An example of the output of the 2-D classification stage and the 3-D refinement stage is presented in Figure~\ref{fig:centering_problem2}. On the left are class averages. A class average is an average of many projection images, each with similar viewing direction. While in some of these class averages we do see images of high quality and improved resolution, others are very blurred. The blurred class averages include many outliers and must be discarded. Centering the projection images prior to 2-D classification will reduce the number of blurred class averages.

\begin{figure}
	\begin{center}
		\hspace*{-10pt}
		{\includegraphics[width=0.3\linewidth]{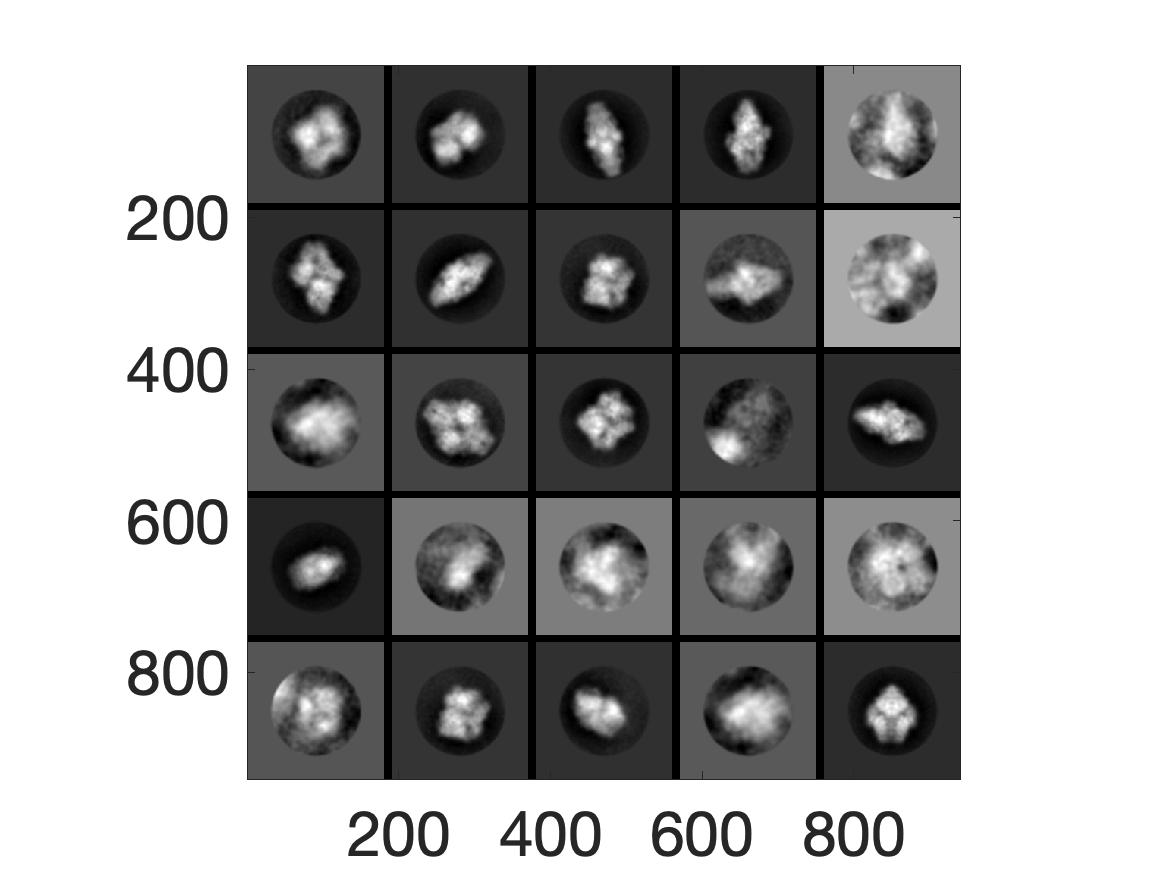}}  
		\hspace*{-20pt}
		{\includegraphics[width=0.4\linewidth]{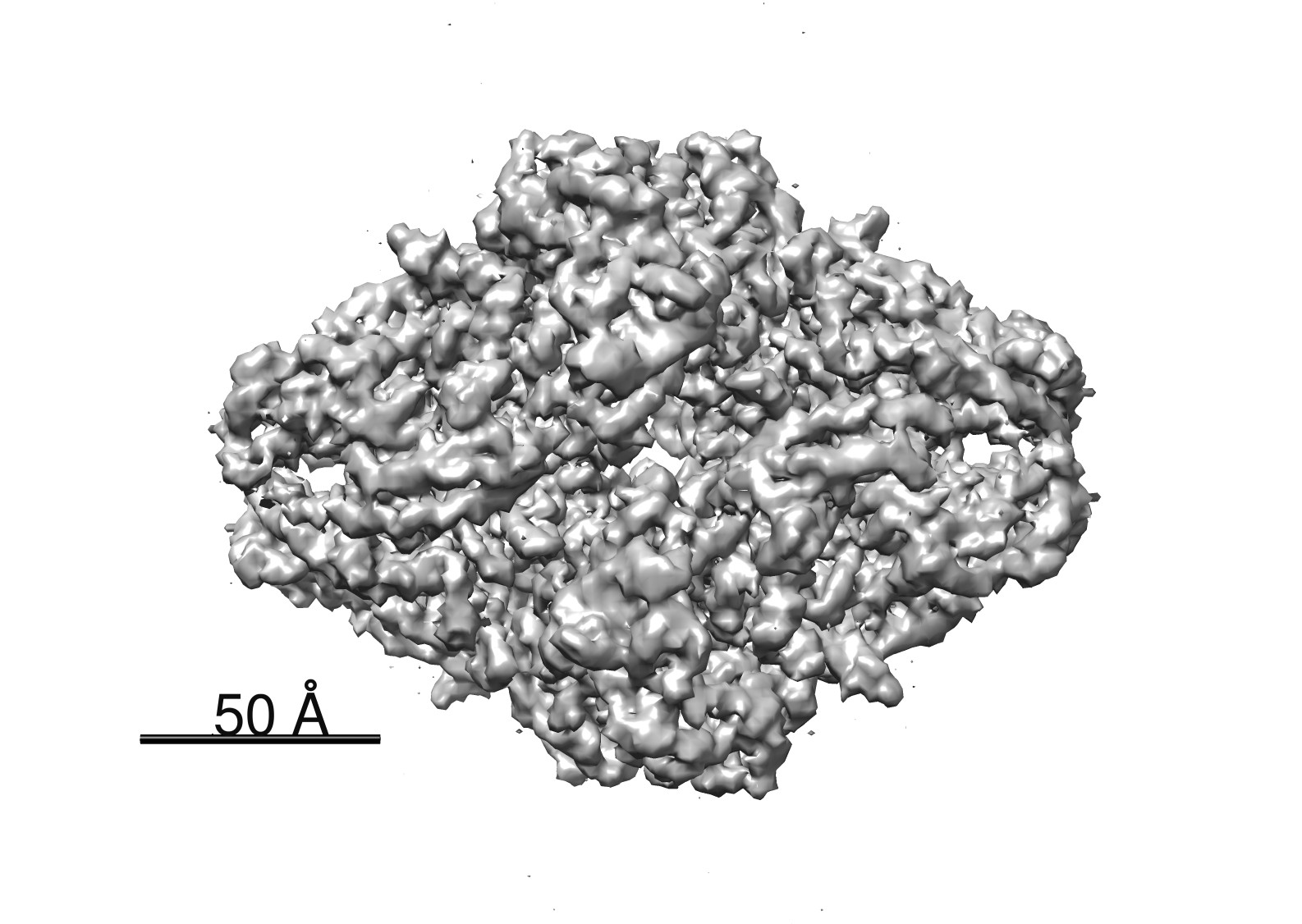}} 
		\hspace*{-35pt}
		{\includegraphics[width=0.4\linewidth]{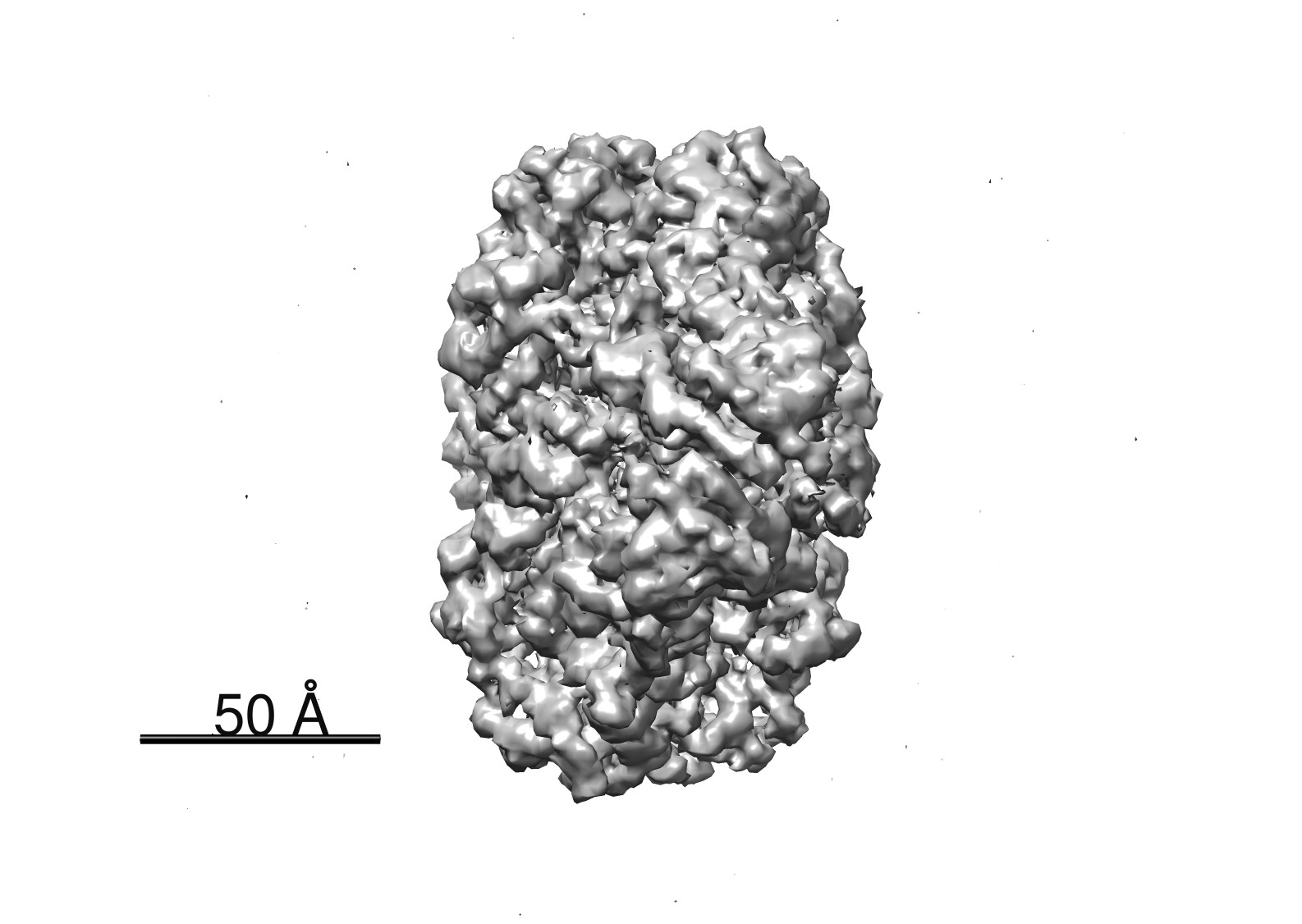}}
		\caption{Results of 2-D classification and 3-D refinement. On the left we present the results of 2-D classification, that is, averages of projections obtained from similar viewing directions. These were computed using RELION~\cite{zivanov2018new}. On the right we present two different views of the 3-D macromolecule resolved during 3-D refinement. The 3-D model used here is publicly available on The Electron Microscopy Data Bank (EMDB)  at the Protein Data Bank in Europe (PDBe) as EMD-2824. The experimental data related to this model is available on EMPIAR as EMPIAR-10017.}
		\label{fig:centering_problem2}
	\end{center}
\end{figure}

One method of accomplishing this was suggested in~\cite{penczek1992three} as the first part of an algorithm known as reference-free alignment (RFA). In this method, when centering the $i$th projection image (out of $n$ images), first the average image is computed by averaging over the remaining $n-1$ projection images,
$$I_{avg} = \sum_{\underset{k \neq i}{k=1,\dots,n}} I_k.$$  
Next, the cross-correlation between $I_{avg}$ and the $i$-th image is calculated. The peak of this function will determine the shift leading to the best match between $I_i$ and $I_{avg}$. This operation is done for each image $i=1,\dots,n$, and repeated until convergence.  
Since projections typically have random in-plane rotations, the average image in the final iterations is approximately circularly symmetric. As a result, the performance of this method is similar to cross-correlation with a Gaussian template. In other words, individual projections are not centered independent of their rotations. Instead, this method aligns each set of images where a similar viewing direction is applied. In contrast, our method is geared towards aligning all projection images, regardless of their rotation. We note that this is an advantage when reconstructing a 3-D volume, as the Fourier transform of aligned projection images are slices in the same 3-D volume.

In the following, we use three publicly available datasets to provide numerical results for a comparison between non-centered projection images, RFA and our centering method.  We show that our method will reduce the necessary translations for 3-D reconstruction of macromolecules, may reduce runtimes and can increase the number of  projections used in the reconstruction and, therefore, may improve resolution.

\subsubsection{Plasmodium falciparum 80S ribosome bound to the anti-protozoan drug emetine}

We test our suggested framework on an 80S ribosome  dataset, which is publicly available as EMPIAR-10028~\cite{wong2014empiar}. This dataset contains $1081$ micrographs, each of size $4096 \times 4096$. Additionally, this dataset includes a CTF-estimation\footnote{That is, an estimation of $h$ in~\eqref{eqn:model_formation}} performed by CTFFIND3~\cite{Mindell2003ctffind3} and a particle picking of more than $100,000$ projection images (determined automatically by RELION~\cite{scheres2015relion}) . 

We correct for CTF according to the parameters computed by CTFFIND3, and run our centering framework on the picked particles supplied as part of the dataset. Next, we perform 3-D refinement in RELION~\cite{zivanov2018new} using 1) the original projection images. 2) centered projection images, where the centering was done via our centering method, and, 3) centered projection images, where the centering was done via RFA~\cite{penczek1992three}. We use as a reference to the 3-D refinement task the model supplied as EMD-2660~\cite{wong2014empiar} in EMDB\footnote{https://www.ebi.ac.uk/pdbe/emdb/}. As RELION outputs the translations performed on each projection image during the reconstruction process, we compare the average translation reported for each reconstruction.  We summarize the results in Table~\ref{tab:10028}. 

\begin{table}[t]
	\begin{center}
		\begin{tabular}{|c|c|c|c|}
			\hline 
			&Original & Our centering & RFA~\cite{penczek1992three}\\
			\hline
			Mean shift & 16 & 10.2 & 12.4\\
			\hline
			Median shift & 14.7 & 9.2 & 11.7\\
			\hline
		\end{tabular}
		\caption{Results for EMPIAR-10028. The center-left column presents mean and median shift for the original centers included in the dataset. The center-right column presents  mean and median shifts for our corrected centers. The rightmost column presents mean and median shift for the RFA method~\cite{penczek1992three}, which is currently the prevalent method of alignment in cryo-EM. The baseline centers are taken from the output of RELION 3-D iterative refinement reconstruction. Shifts are given in pixels, where the pixel spacing is $1.34$ \r{A}.}
		\label{tab:10028}
	\end{center}
\end{table}

In the above experiments, we ran only 3-D refinement, and all reconstructions were created using the same parameters. The resolution of the reconstruction ($3.8$ \r{A}) was unchanged by  centering.  Unlike resolution,  runtimes, reported  in Table~\ref{tab:10028_runtime}, were influenced. Our centering leads to a $50$ minute speedup in the 3-D refinement task which runs on $4$ GPUs (and $10$ CPU cores). While the centering itself ran for an hour and nine minutes, we note that, as it ran on a single GPU due to platform constraints (and $12$ CPU cores), these runtimes are  not comparable.

\begin{table}
	\begin{center}
		\begin{tabular}{ |l | c | c | c|}
			\hline
			RELION task & run-time &  run-time  & run-time\\ 
			& (original projections) & (centered projections) &  (RFA~\cite{penczek1992three})\\ 
			\hline
			3-D refinement & 6:51:12   & 6:00:20  &  06:40:51  \\ \hline
		\end{tabular}
	\end{center}
	\caption{Runtimes for centering and for RELION's 3-D refinement task on the EMPIAR-10028 dataset. We note that all 3-D refinement tasks were initialized with  the same parameters, and utilized four nVidia P100  GPUs, $5$ MPI processes and $10$ cores.}
	\label{tab:10028_runtime}
\end{table}

\subsubsection{ \texorpdfstring{$\beta$}{Beta}-galactosidase}

We test our suggested framework on a $\beta$-Galactosidase
dataset, which is publicly available as EMPIAR-10017~\cite{scheres2015relion}.  
This dataset contains $84$ micrographs, each of size $4096 \times 4096$. A typical micrograph of this dataset includes a few hundred projections.

Our computational setup was as follows; 
we ran the APPLE picker~\cite{heimowitz2018apple} on the raw micrographs, and  performed 
particle picking as detailed in Section 3.1 of \cite{heimowitz2018apple}. This led to $42664$ picked particles.  
After this initial picking, we ran our suggested framework on phase-flipped (\textit{i.e.} CTF-corrected) projection images selected by the APPLE picker. 
The parameters of the CTF were estimated using CTFFIND4 \cite{Rohou2015ctffind4}.  

We use RELION to compare three reconstructions. The first reconstruction is made from the original projection images supplied by the APPLE-picker. The second and third reconstructions are made from the centered projection images, where the centering was done via our centering method and RFA, respectively. We present a comparison of the runtime of each stage in the RELION pipeline in Table~\ref{tab:betagal_runtime}. We see that using our centering method will cause a speedup of $10\%$ in 2-D classification and allow the final reconstruction to be made from an additional $4000$ projection images. These additional images do slow down the reconstruction process. However, in general, we prefer to use as many projections as possible to resolve the 3-D structure of macromolecules. Furthermore, we see from Table~\ref{tab:betagal_runtime} that a reconstruction produced from projection images centered via RFA achieves \rev{comparable} resolution with fewer projection images as compared to the 
original (non-centered) projection images. This indicates that reconstruction from centered projection images improves the success rate of 3-D classification.

For a complete picture of runtimes, we note that the runtime of our suggested method is five minutes when running on a single nVidia P100  GPU and 12 CPU cores.

In Table \ref{tab:betagal_shift} we present the average shifts RELION applies to the set of projection images during reconstruction. We note that the average shift is significantly smaller after our centering is applied. 

\begin{table}
	\begin{center}
		\begin{tabular}{ |l | c | c | c | }
			\hline
			RELION task & run-time &  run-time  & run-time\\ 
			& (original projections) & (centered projections) &  (RFA~\cite{penczek1992three})\\ 
			\hline
			2-D classification   & 00:42:12 & 00:38:08 & 00:44:23  \\ \hline
			\# remaining projections & 32594 &  37693 & 37698  \\ \hline
			Initial modeling     & 0:40:05 & 00:40:37   & 00:41:04  \\ \hline
			3-D  classification  & 00:25:42 & 00:28:16 & 00:29:47  \\ \hline
			remaining projections & 24148 &  28861 & 22321 \\ \hline
			3-D refinement       & 00:17:34  & 00:18:17 & 00:16:58  \\ \hline
			Overall                         & 02:05:33  & 02:05:18 & 02:12:12  \\ \hline
			Resolution & 4.3 \r{A} & 4.2\r{A} & 4.2\r{A}\\  \hline
		\end{tabular}
	\end{center}
	\caption{Run times for RELION's 3-D reconstruction on EMPIAR-10017. We present runtimes for the full reconstruction pipeline when run on original projections selected by the APPLE picker~\cite{heimowitz2018apple} (center-left column), centered projections when using our suggested method (center-right column) and centered projections when using RFA~\cite{penczek1992three} (rightmost column). All runtimes refer to 4 nVidia P100  GPUs and multiple cores (between $10$ and $112$ depending on the task).}
	\label{tab:betagal_runtime}
\end{table}

\begin{table}
	\begin{center}
		\begin{tabular}{ |l | c | c | c |}
			\hline
			& APPLE picking &  Our centering & RFA~\cite{penczek1992three}  \\ \hline
			Mean shift & 12.3 & 8.1  & 10.7  \\ \hline
			Median shift & 10.6 & 7.9 & 8.9 \\ \hline
		\end{tabular}
	\end{center}
	\caption{Average shifts during 3-D refinement as reported by RELION on EMPIAR-10017. Shifts are given in pixels, where the pixel spacing is $1.77$ \r{A}.}
	\label{tab:betagal_shift}
\end{table}

\subsubsection{TRPV1}

Lastly, we verify our centering method on the dataset EMPIAR-10005~\cite{liao2013empiar}. This dataset contains $80443$ picked particles, as well as a subset of $32387$ particles that were used in the reconstruction uploaded to EMDB. We used RELION to reconstruct a 3-D volume using the entire subset of $32387$ picked particles. Additionally, we ran our alignment algorithm on all the picked particles, and used RELION to reconstruct a 3-D volume out of all particles that were not discarded during reconstruction. The same was done after centering with RFA. We summarize the results of this experiment in Table~\ref{tab:10005}. Our centering allowed us to use $43701$ particles with no adverse effect on resolution. {We do not compare the average shift as the reconstructions are all done using a vastly different number of projection images.}

As shown in Table~\ref{tab:10005}, RFA reduces the number of projection images that survive 2-D classification. We note that approximately $10000$ projections  are discarded because their class averages  contain a partial 2-D view of the macromolecule. We believe this is caused by nearby projections. That is, it is possible that projection images that contain a projection and a partial projection will produce unexpected results in the cross-correlation employed by RFA, causing large shifts and reducing the number of usable projection images. Indeed, the class averages that contained partial 2-D views of the macromolecule all included blurred areas, indicating that nearby projections were averaged in.

\begin{table}
	\begin{center}
		\begin{tabular}{|c|c|c|c|c|}
			\hline 
			RELION task & run-time &  run-time  & run-time\\ 
			& (original projections) & (centered projections) &  (centered projections~\cite{penczek1992three})\\ 
			\hline
			2-D Classification 
			runtime & 01:29:50 & 01:24:31 & 01:50:15 \\
			\hline
			Number of particles & 32,387 & 43,701 & 23,535\\
			\hline
			3-D Refinement  
			runtime & 00:51:40 & 01:23:13 & 00:40:40\\ \hline
			Resolution & 4.45 \r{A} & 4.32 \r{A} & 4.72 \r{A}\\
			\hline
		\end{tabular}
		\caption{Results for EMPIAR-10005. After centering, the reconstruction can be done from an extra 11000 particles. Without our centering method, these projections would be dismissed during 2-D classification resulting in slightly reduced resolution. The 2-D Classification task utilized $4$ nVidia P100  GPUs and $112$ CPU cores. The 3-D Refinement task utilized 4 nVidia P100  GPUs, $5$ MPI processes and $10$ CPU cores. Our centering method utilized a single  nVidia P100  GPU and $12$ CPU cores and RFA utilized a single  nVidia P100  GPU and single CPU core.}
		\label{tab:10005}
	\end{center}
\end{table}

\section{Conclusion} \label{sec:conclusion}

The main contribution in this paper is the introduction of a surrogate function to the center of mass, which leads to a robust estimator of the center of mass. We mathematically motivate our construction, along with examples for which the geometric median fails to provide an accurate estimate of the center of mass, yet our surrogate function prevails. We further analyze the mathematical base behind our surrogate function, including the connection with the geometric median and the earth mover's distance. In the computational part, we present a comparison with state-of-the-art methods, which highlights the advantage of our approach for extremely noisy images. Lastly, we show that, when applied to experimental cryo-EM data, our centering algorithm enables faster 3D recovery and better exploitation of valuable computational resources.

\section*{Acknowledgement}
The authors thank B. Landa for the PSWF code and discussions. The authors are also indebted to T. Bendory, J. Kileel, W. Leeb, and Y. Shkolnisky for helpful comments and discussions. 

A.S. and A.H. were partially supported by NSF BIGDATA award IIS-1837992, awards FA9550-20-1-0266 and FA9550-17-1-0291 from AFOSR, NIH/NIGMS award 1R01GM136780-01, the Simons Foundation Math+X Investigator Award, and the Moore Foundation Data-Driven Discovery Investigator Award. N.S. and A.S. were partially supported by BSF grant no. 2019752, and NSF grant DMS-2009753. N.S. was partially supported by BSF grant no. 2018230.

This research was conducted while A.H was a postdoctoral research associate at the Program for Applied and Computational Mathematics in Princeton University.

\bibliographystyle{plain}
\bibliography{RACER_bib.bib}


\end{document}